\def\bbR{\mathbb{R}}
\def\calC{\mathcal{C}}
\def\calU{\mathcal{U}}
\def\calD{\mathcal{D}}
\def\calA{\Gamma}
\def\calH{\mathcal{H}}
\def\vd{\text{VD}}
\def\iq{Q_{\text{IN}}}
\def\oq{Q_{\text{OUT}}}
\def\is{S_{\text{IN}}}
\def\os{S_{\text{OUT}}}
\newtheorem{observation}{Observation}
\newtheorem{definition}{Definition}
\newtheorem{lemma}{Lemma}
\newtheorem{theorem}{Theorem}
\newtheorem{problem}{Problem}
\newtheorem{corollary}{Corollary}
\newenvironment{proof}{\noindent {\textbf{Proof:}}\rm}{\hfill $\Box$
\rm\bigskip}
\title{Dynamic Unit-Disk Range Reporting\thanks{A preliminary version of this paper will appear in {\em Proceedings of the 42nd International Symposium on Theoretical Aspects of Computer Science (STACS 2025)}. This research was supported in part by NSF under Grant CCF-2300356.}
}
\author{
Haitao Wang\thanks{Kahlert School of Computing,
University of Utah, Salt Lake City, UT 84112, USA. {\tt haitao.wang@utah.edu}}
\and
Yiming Zhao\thanks{Department of Computer Sciences,
Metropolitan State University of Denver, Denver, CO 80204, USA. {\tt yizhao@msudenver.edu}}
}
\begin{document}
\pagestyle{plain}
\date{}

\thispagestyle{empty}
\maketitle

\vspace{-0.3in}

\begin{abstract}
For a set $P$ of $n$ points in the plane and a value $r > 0$, the \emph{unit-disk range reporting} problem is to construct a data structure so that given any query disk of radius $r$, all points of $P$ in the disk can be reported efficiently. We consider the dynamic version of the problem where point insertions and deletions of $P$ are allowed. The previous best method provides a data structure of $O(n\log n)$ space that supports $O(\log^{3+\epsilon}n)$ amortized insertion time, $O(\log^{5+\epsilon}n)$ amortized deletion time, and $O(\log^2 n/\log\log n+k)$ query time, where $\epsilon$ is an arbitrarily small positive constant and $k$ is the output size. In this paper, we improve the query time to $O(\log n+k)$ while keeping other complexities the same as before. A key ingredient of our approach is a shallow cutting algorithm for circular arcs, which may be interesting in its own right. A related problem that can also be solved by our techniques is the dynamic unit-disk range emptiness queries: Given a query unit disk, we wish to determine whether the disk contains a point of $P$. The best previous work can maintain $P$ in a data structure of $O(n)$ space that supports $O(\log^2 n)$ amortized insertion time, $O(\log^4n)$ amortized deletion time, and $O(\log^2 n)$ query time. Our new data structure also uses $O(n)$ space but can support each update in $O(\log^{1+\epsilon} n)$ amortized time and support each query in $O(\log n)$ time.
\end{abstract}

{\em Keywords:}
unit disks, range reporting, range emptiness, alpha-hulls, dynamic data structures, shallow cuttings

\section{Introduction}
\label{sec:introduction}
Range searching is a fundamental problem and has been studied extensively in computational geometry~\cite{ref:AgarwalRa17,ref:AgarwalSi17,ref:MatousekGe94}. In this paper, we consider a dynamic range reporting problem regarding disks of fixed radius, called {\em unit disks}. 

Given a set $P$ of $n$ points in the plane, the {\em unit-disk range reporting} problem (or UDRR for short) is to construct a data structure to report all points of $P$ in any query unit disk. 
The problem is also known as the {\em fixed-radius neighbor problem} in the literature~\cite{ref:BentleyA79,ref:ChazelleAn83,ref:ChazelleNe86,ref:ChazelleOp85}.
Chazelle and Edelsbrunner~\cite{ref:ChazelleOp85} constructed a data structure of $O(n)$ space that can answer each query in $O(\log n+k)$ time, where $k$ is the output size; their data structure can be constructed in $O(n^2)$ time.
By a standard lifting transformation~\cite{ref:deBergCo08}, the problem can be reduced to the half-space range reporting queries in 3D; this reduction also works if the radius of the query disk is arbitrary.
Using Afshani and Chan's 3D half-space range reporting data structure~\cite{ref:AfshaniOp09}, one can construct a data structure of $O(n)$ space with $O(\log n + k)$ query time, while the preprocessing takes $O(n \log n)$ expected time since it invokes Ramos' algorithm~\cite{ref:RamosOn99} to construct shallow cuttings for a set of planes in 3D. 
Chan and Tsakalidis~\cite{ref:ChanOp16} later presented an $O(n\log n)$-time deterministic shallow cutting algorithm. Combining the framework in~\cite{ref:AfshaniOp09} with the shallow cutting algorithm~\cite{ref:ChanOp16}, one can build a data structure of $O(n)$ space in $O(n \log n)$ deterministic time that can answer each UDRR query in $O(\log n + k)$ time. 


We consider the dynamic UDRR problem in which point insertions and deletions of $P$ are allowed. By the lifting transformation, the problem can be reduced to dynamic halfspace range reporting in 3D~\cite{ref:ChanDy20,ref:ChanRa00,ref:ChanTh12}, which also works for query disks of arbitrary radii. Using the currently best result of dynamic halfspace range reporting~\cite{ref:BergDy23}, one can obtain a data structure of $O(n\log n)$ space that supports $O(\log^{3+\epsilon}n)$ amortized insertion time, $O(\log^{5+\epsilon}n)$ amortized deletion time, and $O(\log^2 n/\log\log n+k)$ query time, where $\epsilon$ is an arbitrarily small positive constant and $k$ is the output size. 

\paragraph{Our result.}
In this paper, we achieve the optimal $O(\log n+k)$ query time, while the space of the data structure and the update time complexities are the same as above. 

A byproduct of our techniques is a static data structure of $O(n)$ space that can be built in $O(n\log n)$ time and support $O(\log n+k)$ query time. This matches the above result of~\cite{ref:AfshaniOp09,ref:ChanOp16}. But our method is much simpler. Indeed, the algorithm of~\cite{ref:AfshaniOp09,ref:ChanOp16} involves relatively advanced geometric techniques like computing shallow cuttings for the planes in 3D, planar graph separators, etc. Our algorithm, on the contrary, relies only on elementary techniques (the most complicated one might be a fractional cascading data structure~\cite{ref:ChazelleFr86,ref:ChazelleFr862}). One may consider our algorithm a generalization of the classical 2D half-plane range reporting algorithm of Chazelle, Guibas, and Lee~\cite{ref:ChazelleTh85}. 

Our techniques may also be useful for solving other problems related to unit disks. In particular, we can obtain an efficient algorithm for the dynamic {\em unit-disk range emptiness queries}. For a dynamic set $P$ of points in the plane, we wish to determine whether a query unit disk contains any point of $P$ (and if so, return such a point as an ``evidence''). The previous best solution is to use a dynamic nearest neighbor search data structure~\cite{ref:ChanDy20}. Specifically, we can have a data structure of $O(n)$ space that supports $O(\log^2 n)$ amortized insertion time, $O(\log^4n)$ amortized deletion time, and $O(\log^2 n)$ query time. Using our techniques, we obtain an improved data structure of $O(n)$ space that supports both insertions and deletions in $O(\log^{1+\epsilon} n)$ amortized time and supports queries in $O(\log n)$ time.


\paragraph{Our approach.}
We first discuss our static data structure. 
We use a set of $O(n)$ grid cells (each of which is an axis-parallel rectangle) to capture the proximity information for the points of $P$, such that the distance between any two points in the same cell is at most 1. For a query unit disk $D_q$
whose center is $q$, points of $P$ in the cell $C$ that contains
$q$ can be reported immediately. The critical part is handling other cells that contain points of $P\cap D_q$. The number of such cells is
constant and each of them is separated from $C$ (and thus from $q$) by an
axis-parallel line. The problem thus boils down to the following subproblem: Given a set $Q$ of points
in a grid cell $C'$ above a horizontal line $\ell$, report the points of $Q$ in any query unit disk whose
center is below $\ell$. A point $p\in Q$ is in $D_q$ if and only if $q$ lies in the unit disk $D_p$
centered at $p$, or equivalently, $q$ is above the arc of the boundary of $D_p$
below $\ell$. Let $\calA$ denote the set of all such arcs for all points $p\in Q$. To find
the points of $Q$ in $D_q$, it suffices to report the arcs of $\calA$ below $q$.
To tackle this problem, we follow the same framework as that for the 2D
half-plane range reporting algorithm~\cite{ref:ChazelleTh85}, by constructing lower envelope layers of $\calA$ and building a fractional cascading data structure on them~\cite{ref:ChazelleFr86,ref:ChazelleFr862}. 


To make the data structure dynamic, we first derive a data structure to maintain the grid cells dynamically so that each update (point insertions/deletions) can be handled in $O(\log n)$ amortized time. This dynamic data structure could be of independent interest. Next, we develop a data structure to dynamically maintain arcs of $\calA$ to support the arc-reporting queries (i.e., given a query point, report all arcs of $\calA$ below it). To this end, we cannot use the fractional cascading data structure anymore because it is not amenable to dynamic changes. Instead, we adapt the techniques for dynamically maintaining a set of lines to answer line-reporting queries (i.e., given a query point, report all lines below the query point; its dual problem is the halfplane range reporting queries)~\cite{ref:BergDy23,ref:ChanA10,ref:ChanTh12,ref:ChanDy20,ref:KaplanDy20}. To make these techniques work for the arcs of $\calA$ in our problem, we need an efficient shallow cutting algorithm for $\calA$. For this, we adapt the algorithm in \cite{ref:ChanOp16} for lines and derive an $O(|\calA|\log |\calA|)$ time shallow cutting algorithm for $\calA$. As shallow cuttings have many applications, our algorithm may be interesting in its own right.

\paragraph{Outline.}
The rest of the paper is organized as follows. In Section~\ref{sec:pre}, we introduce notation and a {\em conforming coverage set} of grid cells to capture the proximity information for points of $P$. In particular, in Section~\ref{sec:pre} we state a lemma on dynamically maintaining the grid cells. The proof of the lemma, which is quite lengthy and technical, is presented in
Section~\ref{sec:lemdynamiccellproof}. Section~\ref{sec:dynamicreport} discusses our dynamic data structure for the UDRR problem. A main subproblem of it is solved in Section~\ref{sec:lemdynamiclineUDRR}. A key ingredient of our method is an efficient algorithm for computing shallow cuttings for arcs; this algorithm is presented in Section~\ref{sec:algoshallow}. 
Our static UDRR data structure is described in Section~\ref{sec:udrr}. The algorithm uses a subroutine that computes layers of lower envelopes of circular arcs; the subroutine is presented in Section~\ref{sec:ComputingLayers}. 
Section~\ref{sec:ConcludingRemarks} concludes the paper and demonstrates that our techniques may be used to solve other related problems, such as dynamic unit-disk range emptiness queries.

\section{Preliminaries}
\label{sec:pre}

We define some notation that will be used throughout the paper. 

A {\em unit disk} refers to a disk of radius $1$. A {\em unit circle} is defined similarly.  Unless otherwise stated, a circular arc or an arc refers to a circular arc of radius $1$. For a circular arc $\gamma$, we call the circle that contains $\gamma$ the {\em underlying circle} of $\gamma$ and call the disk whose boundary contains $\gamma$ the {\em underlying disk}.

For any point $q$, let $D_q$ denote the unit disk centered at $q$. For any region $R$ and any set $P$ of points in the plane, let $P(R)$ denote the subset of points of $P$ inside $R$, i.e., $P(R)=P\cap R$. For any region $R$ in the plane, we use $\partial R$ to denote its boundary, e.g., if $R$ is a disk, then $\partial R$ is its bounding circle. 

Unless otherwise stated, $\epsilon$ refers to an arbitrarily small positive constant. Depending on the context, we often use $k$ to denote the output size of a reporting query. 
For any point $p$ in the plane, we use $x(p)$ and $y(p)$ to denote the $x$ and $y$-coordinates of $p$, respectively.

\subsection{Conforming coverage of $\boldsymbol{P}$}

Let $P$ be a set of $n$ points in the plane. We wish to have a data structure for $P$ to answer the following {\em unit-disk range reporting queries}: Given a query unit disk $D$, report $P(D)$, i.e., the points of $P$ in $D$.

As discussed in Section~\ref{sec:introduction}, our method (for both static and dynamic problems) relies on a set of grid cells to capture the proximity information for the points of $P$. The technique of using grids has been widely used in various algorithms for solving
problems in unit-disk graphs~\cite{ref:WangRe23,ref:WangNe20, ref:ChanAl16, ref:WangAn23, ref:WangCo22,ref:WangUn23}. However, the difference here is that we need to handle updates to $P$ and therefore our grid cells will be dynamically changed as well. To resolve the issue, our definition of grid cells is slightly different from the previous work. Specifically, we define a {\em conforming coverage} set of cells for $P$ in the following. 
\begin{definition}\label{def:cell}(Conforming Coverage)
A set $\calC$ of cells in the plane is called a {\em conforming coverage} for $P$ if the following conditions hold. 
\begin{enumerate}
    \item Each cell of $\calC$ is an axis-parallel rectangle of side lengths at most $1/2$. This implies that the distance of every two points in each cell is at most $1$.
    \item The union of all cells of $\calC$ covers all the points of $P$. 
    \item Every two cells are separated by an axis-parallel line. 
    \item Each cell $C\in \calC$ is associated with a subset $N(C)\subseteq \calC$ of $O(1)$ cells (called {\em neighboring cells} of $C$) such that for any point $q\in C$, $P(D_q) \subseteq \bigcup_{C'\in N(C)}P(C')$. 
    \item For any point $q$, if $q$ is not in any cell of $\calC$, then $P \cap D_q = \emptyset$.
\end{enumerate}  
\end{definition}

To solve the static problem, after computing a conforming coverage set of cells for $P$, we never need to change it in the future. As such, the following lemma from \cite{ref:WangUn23} suffices. 
\begin{lemma}
\label{lem:grid}{\em \cite{ref:WangUn23}}
 \begin{enumerate}
     \item
   A conforming coverage set $\calC$ of size $O(n)$, along with $P(C)$ and $N(C)$ for all cells $C \in \calC$, can be computed in $O(n\log n)$ time and $O(n)$ space.
 \item
    With $O(n \log n)$ time and $O(n)$ space preprocessing, given any point $q$, we can do the following in $O(\log n)$ time: Determine whether $q$ is in a cell $C$ of $\calC$, and if so, return $C$ and $N(C)$.
 \end{enumerate}
\end{lemma}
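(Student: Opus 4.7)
The plan is to anchor the cells of $\calC$ to an axis-aligned uniform grid $G$ with cell side length $1/2$. Condition~1 is then automatic since each cell has diameter $\sqrt{2}/2<1$, and condition~3 follows because any two distinct cells of a grid are separated by an axis-parallel grid line.

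First I would populate $\calC$ by taking, for every $p\in P$, all grid cells of $G$ that intersect the unit disk $D_p$. Since a disk of radius $1$ meets only $O(1)$ cells of side $1/2$, this yields $|\calC|=O(n)$. This choice handles conditions~2 and~5 simultaneously: if $P\cap D_q\ne\emptyset$, then some $p\in P$ has $q\in D_p$, and so $q$ lies in one of the cells added on behalf of $p$. To assemble $\calC$ without duplicates and to compute each $P(C)$, I label each candidate cell by its integer grid index $(i,j)$ and lexicographically sort the $O(n)$ indices in $O(n\log n)$ time, merging duplicates in a single pass; a second sweep buckets the points of $P$ into the surviving cells to produce all the sets $P(C)$.

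Next I would compute $N(C)$ for each $C\in\calC$. Let $C$ have grid index $(i,j)$. Any point $p$ with $p\in D_q$ for some $q\in C$ has Euclidean distance at most $1$ from $C$, so the grid cell of $G$ containing $p$ has index within a constant-sized window $W(i,j)$ (a $7\times 7$ box suffices) around $(i,j)$. I set $N(C)$ to be those cells of $\calC$ whose indices fall inside $W(i,j)$. The $O(1)$ index lookups per cell can be done by maintaining $\calC$ in a dictionary keyed on $(i,j)$, e.g., a balanced BST built in $O(n\log n)$ time, giving a total cost of $O(n\log n)$. Condition~4 then follows by construction, since any $p\in P\cap D_q$ with $q\in C$ lies in a cell of $\calC$ whose index is in $W(i,j)$ and hence in $N(C)$.

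For the point-location part, I would reuse the same dictionary indexed by grid coordinates. Given a query $q$, I compute its grid indices in $O(1)$ time and perform one dictionary lookup in $O(\log n)$ time; if the lookup succeeds, I return the corresponding cell $C$ together with the precomputed list $N(C)$. The main point requiring care is the interplay between conditions~4 and~5: one must simultaneously guarantee that every cell needed as a neighbor is actually present in $\calC$, and that the $D_p$-cover around each $p\in P$ is thorough enough that $q\notin\bigcup\calC$ really certifies $P\cap D_q=\emptyset$. Seeding $\calC$ with the full $D_p$-cover for every $p\in P$, rather than merely the single cell containing $p$, handles both requirements uniformly, and a straightforward accounting bounds the total construction cost by $O(n\log n)$ in $O(n)$ space.
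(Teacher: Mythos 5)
Your construction is correct and is essentially the standard uniform-grid approach underlying the cited reference \cite{ref:WangUn23}: a $1/2$-side uniform grid, $O(1)$ cells per point obtained by taking the cells that intersect $D_p$, a $7\times 7$ index window for $N(C)$, and a dictionary keyed on integer grid coordinates for $O(\log n)$ point location. This paper only cites Lemma~\ref{lem:grid} without reproving it, and your argument lines up with the grid/$\boxplus_C$ machinery the paper relies on throughout Sections~2--3 (the paper's own Section~3 departs from a uniform grid only because the dynamic setting requires it).
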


However, for the dynamic problem, due to the updates of $P$, the conforming coverage set also needs to be maintained dynamically. For this, we have the following lemma. 
\begin{lemma}\label{lem:dynamiccell}
A conforming coverage set $\calC$ of $O(n)$ cells for $P$ can be maintained in $O(n)$ space (where $n$ is the size of the current set $P$) such that each point insertion of $P$ can be handled in $O(\log n)$ worst-case time, each point deletion can be handled in $O(\log n)$ amortized time, and the following query can be answered in $O(\log n)$ time: Given a query point $q$, determine whether $q$ is in a cell $C$ of $\calC$, and if so, return $C$ and $N(C)$.
\end{lemma}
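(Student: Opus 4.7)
The plan is to build the conforming coverage on top of a fixed uniform grid $\Gamma$ of axis-parallel squares of side length exactly $1/2$, indexing the cell $[\tfrac{i}{2},\tfrac{i+1}{2})\times[\tfrac{j}{2},\tfrac{j+1}{2})$ by the integer pair $(i,j)$. For each point $p\in P$, let $G(p)$ be the index of the grid cell containing $p$, and call the \emph{required cells} of $p$ those grid cells whose closest point to $p$ is at Euclidean distance at most $1$; there are at most $25$ such cells, all lying in the $5\times 5$ index-block around $G(p)$. Define $\mathcal{C}$ to consist of exactly those cells required by at least one point of $P$, and for $C\in\mathcal{C}$ set $N(C):=\{C'\in\mathcal{C}:\|G(C)-G(C')\|_\infty\le 2\}$. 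Conditions~1--3 of Definition~\ref{def:cell} are immediate (grid cells are separated by grid lines), and $|\mathcal{C}|=O(n)$ since each point contributes $O(1)$ required cells. Condition~4 holds because any $p\in P\cap D_q$ with $q\in C$ satisfies $|pq|\le 1$, placing $G(p)$ in the $5\times 5$ block around $C$ and $G(p)\in\mathcal{C}$ since $p$ requires its own cell. Condition~5 is the contrapositive: if $q\notin\bigcup\mathcal{C}$, then no point of $P$ has $G(q)$ as a required cell, so every point of $P$ is at distance strictly greater than $1$ from $q$.

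For the dynamic structure, store $\mathcal{C}$ in a balanced BST $\mathcal{T}$ keyed on $(i,j)$ in lexicographic order, using $O(n)$ space. Each node for $C$ stores the point list $P(C)$ as a doubly linked list, the set $N(C)$ as a constant-size array of pointers into $\mathcal{T}$, and a reference counter $\rho(C)$ equal to the number of points in $P$ for which $C$ is required. To insert $p$: compute $G(p)$ and the at most $25$ required cells of $p$; for each, look it up in $\mathcal{T}$, either incrementing $\rho$ if present, or inserting it anew and in that case populating its $N(\cdot)$ by looking up each of its $O(1)$ surrounding grid indices and splicing the new cell into those cells' neighbor lists. Finally append $p$ to $P(G(p))$. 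Deletion is symmetric: decrement each counter, and when $\rho(C)$ drops to $0$, unlink $C$ from the neighbor lists of the $O(1)$ surrounding cells and delete it from $\mathcal{T}$. A query on $q$ computes $G(q)$ and performs one $O(\log n)$ lookup in $\mathcal{T}$.

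Every update triggers $O(1)$ BST operations and $O(1)$ neighbor-list modifications at $O(\log n)$ each, yielding the desired $O(\log n)$ time per insertion, deletion, and query. The main obstacle, which presumably accounts for the length of the proof deferred to Section~\ref{sec:lemdynamiccellproof}, is the careful bookkeeping needed to verify that all five invariants of Definition~\ref{def:cell} are preserved through arbitrary update sequences: one must specify the precise order in which cells are created/destroyed versus neighbor lists are updated (especially when a single point insertion creates several cells simultaneously), and handle boundary cases such as points lying on grid lines or cells whose surrounding blocks contain mixtures of existing and non-existing cells. The amortized (rather than worst-case) bound for deletion would naturally arise from a potential-based charging of the cleanup of a newly-empty cell's $N(\cdot)$ entries back to the insertion that originally created that cell.
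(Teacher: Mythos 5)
Your proposal takes a genuinely different---and considerably simpler---route than the paper. You anchor a uniform $1/2\times 1/2$ grid at the origin, maintain the set of grid cells near $P$ by reference counting, and store them in a BST keyed on integer indices; notably this yields worst-case $O(\log n)$ deletion (no periodic rebuilding), slightly stronger than the amortized bound the lemma asks for. The paper instead constructs variable-width ``point-zones'' by a plane sweep, then needs a lengthy case analysis for insertions (expanding, splitting, and merging zones, and maintaining an invariant spacing between ``narrow'' columns and rows) and periodic rebuilding for deletions. The main thing your approach trades away is model-independence: to locate $G(p)=(\lfloor 2x(p)\rfloor,\lfloor 2y(p)\rfloor)$ you use floor as a unit-cost operation, whereas the paper deliberately computes its round-offs to multiples of $1/2$ by $O(m)$-time iteration rather than by rounding, precisely to stay inside the real RAM model without unit-cost floor. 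That assumption is worth stating explicitly, since it is the apparent reason the paper did not take your route.

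Beyond that, there is one small but genuine imprecision worth fixing. The infimum-distance definition of a ``required cell'' is not actually contained in the $5\times 5$ block when $p$ sits on a grid boundary. If $x(p)=i/2$ (the left edge of $G(p)$), the cell with first index $i-3$ spans $[i/2-3/2,\,i/2-1)$ in $x$ and has infimum distance exactly $1$ to $p$, yet Chebyshev index distance $3$ from $G(p)$. The clean fix is to define the required cells of $p$ to be exactly the $5\times 5$ index-block around $G(p)$. This set has size $25$ by construction, still contains $G(q)$ for every $q$ with $|pq|\le 1$ (since $q\in G(q)$ together with $|x(q)-x(p)|\le 1$ and $|y(q)-y(p)|\le 1$ forces the index offset to Chebyshev distance $\le 2$), and contains $G(p)$ itself, so all five conditions of Definition~\ref{def:cell} then follow exactly as you argue. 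The remaining bookkeeping subtleties you flag (the order in which new cells from one insertion are created and cross-linked into one another's $N(\cdot)$ lists) are real but routine; a single second pass over the at most $25$ freshly created cells suffices.
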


Since the proof of Lemma~\ref{lem:dynamiccell} is lengthy and technical, we devote Section~\ref{sec:lemdynamiccellproof} to it. Roughly speaking, if a point $p$ is inserted to $P$, then at most $O(1)$ cells will be added to $\calC$ and $p$ will eventually be inserted into $P(C)$ for the cell $C\in\calC$ containing $p$. If a point $p$ is deleted from $P$, the deletion boils down to the deletion of $p$ from $P(C)$ for the cell $C\in\calC$ containing $p$. We do not remove cells from $\calC$. Instead, we reconstruct the entire data structure after $n/2$ deletions; this guarantees that the size of $\calC$ is always $O(n)$. See Section~\ref{sec:lemdynamiccellproof} for the details. 

\section{Proving Lemma~\ref{lem:dynamiccell}: Maintaining a conforming coverage set dynamically}
\label{sec:lemdynamiccellproof}
In this section, we prove Lemma~\ref{lem:dynamiccell}. As stated in Lemma~\ref{lem:grid}, an algorithm is already given in \cite{ref:WangUn23} to compute a conforming coverage set $\calC$ in $O(n\log n)$ time and $O(n)$ space. However, the set $\calC$ computed by that algorithm is not quite suitable for the dynamic setting. Instead, based on that algorithm, we propose a new algorithm that computes a new conforming coverage set $\calC$ of size $O(n)$ along with a data structure that is amenable to point updates. In fact, deletions are relatively easy to deal with as discussed above. The challenge is to handle insertions. In what follows, we first present our algorithm for computing $\calC$ and our data structure to maintain it for a given set $P$ of $n$ points. Then we will discuss insertions and deletions. 

\subsection{Computing $\boldsymbol{\calC}$ and the data structure}


For any vertical line $\ell$, we use $x(\ell)$ to denote its $x$-coordinate. 

We sort all the points of $P$ as $p_1,p_2,\ldots,p_n$, in ascending order by their $x$-coordinates. 

\paragraph{Computing point-zones.}
We compute a set of $O(n)$ disjoint vertical strips in the plane, called {\em point-zones}, each bounded by two vertical lines and containing at least one point of $P$. 

Starting from $p_1$, we sweep the plane by a vertical line $\ell$. 
We maintain an invariant that $\ell$ is in the current vertical point-zone whose left bounding line, denoted by $\ell_1$, has already been computed and whose right bounding line $\ell_2$ is to be determined. Initially, we place a vertical line at $x(p_1)-7/4$ as the left bounding line of the first point-zone. Suppose that $\ell$ is at a point $p_i$. 
If $i<n$ and $x(p_{i+1})-x(p_i)\leq 5$, then we move $\ell$ to $p_{i+1}$. 
Otherwise, we place the right bounding line $\ell_2$ at $x(p_i)+2+x'$, where $x'$ is the
smallest nonnegative value such that $x(p_i)+2+x'-x(\ell_1)$ is a multiple
of $1/2$; this produces a new point-zone. Observe that the width of the point-zone (i.e., $|x(\ell_2)-x(\ell_1)|$) is $O(m)$ if $m$ is the number of points of $P$ in the point-zone. As such, the above value $x'$ can be easily computed in $O(m)$ time. 
Next, if $i=n$, the algorithm stops; otherwise, we move $\ell$ to $p_{i+1}$ to compute the next point-zone following the same algorithm (e.g., we start placing the left bounding line at $x(p_{i+1})-7/4$). 

The above algorithm, which runs in $O(n)$ time, computes at most $n$ vertical point-zones that are
pairwise-disjoint. We call the region between two adjacent point-zones {\em a vertical gap-zone}. The region to the left of the leftmost point-zone and the region to the right of the rightmost point-zone are also vertical gap-zones. 
According to the algorithm, the following properties hold: (1) For any point $q$ is in a gap-zone, $P\cap D_q=\emptyset$; (2) if a vertical point-zone contains $m$ points of $P$, then the width of the point-zone is $O(m)$, implying that the sum of the widths of all vertical point-zones is $O(n)$; (3) the width of each vertical point-zone is a multiple of $1/2$; (4) for any point $p\in P$, the distance between $p$ and each bounding line of $Z$ is at least $7/4$, where $Z$ is the point-zone containing $p$. 

Similarly, by sweeping a line from top to bottom, we compute a set of $O(n)$ {\em horizontal point-zones} and {\em horizontal gap-zones}; see Fig.~\ref{fig:griddynamics}. The following properties hold: (1) If a point $q$ is in a horizontal gap-zone, then $P\cap D_q=\emptyset$; (2) if a horizontal point-zone contains $m$ points of $P$, then the height of the point-zone is $O(m)$, implying that the sum of the heights of all horizontal point-zones is $O(n)$; (3) the height of each horizontal point-zone is a multiple of $1/2$; (4) for any point $p\in P$, the distance between $p$ and each bounding line of $Z$ is at least $7/4$, where $Z$ is the horizontal point-zone containing $p$. 

    \begin{figure}[t]
        \centering
        \includegraphics[height=2.0in]{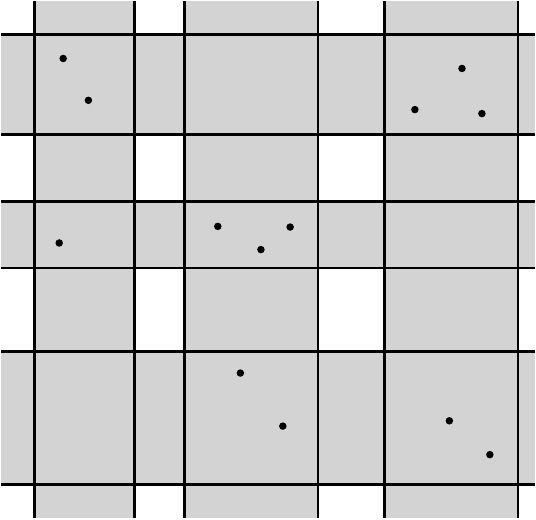}
        \caption{The point-zones lie in the grey area. The black dots are points of $P$.}
        \label{fig:griddynamics}
        \vspace{-0.15in}
    \end{figure}

\paragraph{Forming a grid $\boldsymbol{G}$.}
For each vertical point-zone, since its width is a multiple of $1/2$, we further add vertical lines to partition it into vertical regions of width $1/2$ each; due to the above property (4) of the vertical point-zones, we need add $O(m)$ vertical lines if the point-zone contains $O(m)$ points of $P$. Let $L_v$ denote the set of all these vertical lines. We also add the vertical point-zone bounding lines to $L_v$. Clearly, we have $|L_v|=O(n)$. For differentiation, we use {\em non-boundary vertical partition lines} to refer to the lines of $L_v$ that are not point-zone bounding lines. 

Similarly, we partition each horizontal point-zone into regions of heights $1/2$ by adding {\em non-boundary} horizontal partition lines. Define $L_h$ to be the set of all these lines, including all horizontal point-zone bounding lines. 


The lines of $L_v\cup L_h$ together partition the plane into a grid $G$ of $O(n^2)$ cells, each of which is an (axis-parallel) rectangle. We say that a cell of $G$ is a {\em regular cell} if it is contained in both a vertical point-zone and a horizontal point-zone, and it is a {\em gap cell} otherwise. According to the above discussion, we have the following properties about $G$: (1) If a point $q$ is in a gap cell, then $P\cap D_q=\emptyset$; (2) the length of each edge of a regular cell is at most $1/2$ (note that it is exactly equal to $1/2$ according to our above algorithm, but after we handle point insertions, it is possible that an edge length of a regular cell is smaller than $1/2$). 

All cells of $G$ between two adjacent vertical lines of $L_v$ form a {\em column} of $G$. A column is a {\em regular column} if it is in a vertical point-zone, and it is a {\em gap column} otherwise. All cells of $G$ between two adjacent horizontal lines of $L_h$ form a {\em row} of $G$. Similarly, we have regular rows and gap rows. The width of each regular column is at most $1/2$; so is the height of each regular row. 

For each cell $C$ of $G$, we use $\boxplus_C$ to denote the area of $7\times 7$ cells of $G$ with $C$ at the center; note that $\boxplus_C$ may contain less than 49 cells if it is close to the boundary of $G$. In fact, whenever we use $\boxplus_C$ in the the algorithm, $C$ is always a cell that contains a point of $P$. For such ``non-empty'' cells (i.e., cells that contain points of $P$), our algorithm always guarantees that $\boxplus_C$ contains exactly 49 cells. Indeed, this is true in our above algorithm for computing $L_v$ and $L_h$ due to the above properties (4) of the point-zones. This will also hold when the grid $G$ is changed due to point insertions.

Our algorithm ensures the following key property {\em} for $\boxplus_C$ (even after $G$ changes due to point updates): For any point $q\not\in \boxplus_C$, $D_q\cap C=\emptyset$. This is the main reason we introduce the notation $\boxplus_C$. 

Our algorithm will search $G$ for cells. As $G$ has $\Omega(n^2)$ cells, we cannot afford to maintain $G$ explicitly. Instead, we use a balanced binary search tree $T_v$ to store the vertical lines of $L_v$ ordered by their $x$-coordinates and use a tree $T_h$ to store the horizontal lines of $L_h$ ordered by their $y$-coordinates. We use $T_v$ and $T_h$ to maintain $G$ implicitly, i.e., operations on $G$ will be done using these two trees. 

\paragraph{Computing the set $\boldsymbol{\calC}$.} We are now ready to compute the conforming coverage set $\calC$ by using $G$. 

For each point $p\in P$, we find the cell of $G$ that contains $p$ by doing binary search on the lines of $L_v$ and on the lines of $L_h$. This can be done in $O(\log n)$ time using the two trees $T_v$ and $T_h$. As such, in $O(n\log n)$ time, we can find all $O(n)$ non-empty cells of $G$. For each such non-empty cell $C$, the points of $P(C)$ are also computed. 

For each point $p\in P$, consider the cell $C_p$ of $G$ that contains $p$. Since $p$ is both in a horizontal point-zone and a vertical point-zone, $C_p$ is regular cell. Further, 
due to the above properties (4) of the vertical and horizontal point-zones, $\boxplus_{C_p}$ is in a horizontal point-zone and also in a vertical point-zone, and thus all cells of $\boxplus_{C_p}$ are regular cells. 
Given $C_p$, we can find all $O(1)$ cells of $\boxplus_{C_p}$ in $O(\log n)$ using the trees $T_v$ and $T_h$. For each cell $C\in \boxplus_{C_p}$, we add $C_p$ to its neighboring set $N(C)$. It is not difficult to see that $|N(C)|=O(1)$ since $C$ can only be in $\boxplus_{C'}$ for $O(1)$ cells $C'\in G$. 
In this way, due to the above key property of $\boxplus_{C}$, we have the following observation. 

\begin{observation}\label{obser:cellsubset}
For any point $q$, if $C_q$ is the cell of $G$ containing $q$, then $P(D_q)\subseteq \bigcup_{C\in N(C_q)}P(C)$.     
\end{observation}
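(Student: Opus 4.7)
The plan is a direct set-containment argument using only (i) the ``key property'' stated just above the observation and (ii) the construction rule for $N(\cdot)$.

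First I would fix an arbitrary $p \in P(D_q)$ and let $C_p$ denote the cell of $G$ that contains $p$. Since $p \in P$, the cell $C_p$ is a regular (non-empty) cell, so the algorithm has already processed $p$ and added $C_p$ to $N(C)$ for every cell $C \in \boxplus_{C_p}$. Therefore, to conclude that $p \in \bigcup_{C \in N(C_q)} P(C)$ it suffices to show that $C_q \in \boxplus_{C_p}$, since then $C_p \in N(C_q)$ and $p \in P(C_p)$.

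For the containment $C_q \in \boxplus_{C_p}$, I would invoke the key property of $\boxplus_{C_p}$: for any $q' \notin \boxplus_{C_p}$, $D_{q'} \cap C_p = \emptyset$. The hypothesis supplies a witness $p$ lying in both $D_q$ and $C_p$, so the contrapositive of the key property forces $q \in \boxplus_{C_p}$. Since $\boxplus_{C_p}$ is by definition a union of cells of $G$ and $q \in C_q$, this in turn forces $C_q \in \boxplus_{C_p}$. As $p$ was arbitrary, the observation follows.

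The argument is essentially bookkeeping; there is no genuine obstacle at the level of the observation itself. Its content is hidden entirely in the key property of $\boxplus_{C}$, which in the initial construction holds because every regular cell has side length $1/2$ and so $\boxplus_C$ extends at least $3 \cdot (1/2) = 3/2 > 1$ units away from $C$ in every coordinate direction, making $D_{q'}$ unable to reach $C$ when $q'$ lies outside $\boxplus_C$. Preserving this property as cells are refined by insertions (and cleaned up by deletions) is what later parts of Section~\ref{sec:lemdynamiccellproof} must take care of; granted that, the present observation is immediate.
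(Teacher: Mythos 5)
Your proof is correct and takes essentially the same approach the paper intends: fix $p\in P(D_q)$, use the contrapositive of the key property of $\boxplus_{C_p}$ to place $q$ (and hence $C_q$) inside $\boxplus_{C_p}$, and then read off $C_p\in N(C_q)$ from the construction rule for $N(\cdot)$. The paper leaves this as a one-line consequence of the key property, which is exactly the bookkeeping you have spelled out, and your closing remark correctly identifies that the real content is hidden in maintaining the key property of $\boxplus_C$ under updates, which the rest of Section~\ref{sec:lemdynamiccellproof} handles.
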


We define the conforming coverage set $\calC$ as the set of all cells of $\boxplus_{C}$ for all non-empty cells $C\in G$. The above computes $\calC$, along with $P(C)$ and $N(C)$ for all cells $C\in \calC$, in $O(n\log n)$ time. Since $\boxplus_{C}$ has $O(1)$ cells, $|\calC|=O(n)$. We argue that all conditions in Definition~\ref{def:cell} hold for $\calC$. Indeed, the first three conditions follow directly from the definition of $\calC$. The fourth condition holds due to Observation~\ref{obser:cellsubset}. For the fifth condition, consider any point $q$ such that the cell $C_q$ of $G$ containing $q$ is not in $\calC$. Then, $C_q$ is not in $\boxplus_{C}$ for any non-empty cell $C$, and thus $q\not\in\boxplus_{C}$. By the key property of $\boxplus_{C}$, $D_q\cap C=\emptyset$. Consequently, $D_q\cap P=\emptyset$. Therefore, the fifth condition also holds. 

\paragraph{Answering queries.}
Given a query point $q$, we wish to determine whether $q$ is in a cell $C$ of $\calC$, and if so, return $C$. To this end, we need to store cells of $\calC$ in a data structure. 
For each cell $C\in G$, we use its bottom left corner point as its ``id'' or ``representative point''. 
We sort all cells of $\calC$ by the lexicographical order of the coordinates of their representative points, i.e., for any two points $q_1$ and $q_2$ in the plane, we let $q_1$ be smaller than $q_2$ if $x(q_1)<x(q_2)$, or if $x(q_1)=x(q_2)$ and $y(q_1)<y(q_2)$. We use a balanced binary search tree $T$ to store cells of $\calC$ following the above order. Furthermore, to facilitate point deletions, for each non-empty cell $C$ of $\calC$, we store all points of $P(C)$ in a balanced binary search tree $T_C$ following the lexicographical order of the coordinates of the points. This finishes constructing our data structure, which takes $O(n\log n)$ time and $O(n)$ space. 

Given a query point $q$, we can answer the query in $O(\log n)$ time as follows. First, we find the cell $C_q$ of $G$ that contains $q$ using the two trees $T_h$ and $T_v$. Then, we determine whether $C_q\in \calC$ by searching the representative point of $C_q$ in $T$. 

\subsection{Handling insertions}

Suppose that we want to insert a new point $p^*$ to $P$. Roughly speaking, the goal of our insertion algorithm is to update our data structure so that it would be the same as what we had built it on $P\cup \{p^*\}$. This means we may need to update the grid $G$ in order to include $p^*$ in both a horizontal point-zone and a vertical point zone. As will be seen later, we may need to insert more vertical partition lines inside some vertical gap-zones (but no vertical partition line will be inserted inside any vertical point-zone). The effect is that a vertical point-zone is expanded (and thus a gap-zone is shrunk), a new vertical point-zone is created inside a gap-zone (and the gap-zone is split into two smaller gap-zones plus a point-zone in the middle), or two adjacent vertical point-zones and the gap-zone between them is merged into a new larger point-zone. Similarly, we may need to insert more horizontal partition lines inside some horizontal gap-zones (but no horizontal partition line will be inserted inside any horizontal point-zone). As such, some gap cells of $G$ may be divided into smaller cells but regular cells will never be changed. 

In fact, since the point $p^*$ is given ``online'', we may not be able to build the same data structure as we knew $P\cup \{p^*\}$. One issue is that the width of a vertical gap-zone may not be a multiple of $1/2$. If we need to include the entire gap-zone into a point-zone due to insertions, then it is not always possible to guarantee that the width of each regular column is equal to $1/2$. Similar issues also happen to horizontal gap-zones. To address these issues, we allow regular cells to have side lengths smaller than $1/2$. As some regular cells might be too small, to ensure that the key property of $\boxplus_C$ still holds, our algorithm maintains an invariant that there are at least seven regular columns of width $1/2$ between every two columns of width smaller than $1/2$ (called ``narrow columns''); similarly, there are at least seven regular rows of heights $1/2$ between every two ``narrow rows''. This invariant guarantees that the key property of $\boxplus_C$ still holds. In what follows, we present the details of the insertion algorithm. 

First, using $T_h$ and $T_v$, we find the cell $C^*$ that contains $p^*$. We determine whether $C^*\in \calC$ by searching the tree $T$. All these takes $O(\log n)$ time. 

According to the definition of $G$ and our algorithm for constructing the horizontal and vertical point-zones, each vertical point-zone has at least seven columns and each horizontal point-zone has at least seven rows. Since $\boxplus_{C^*}$ intersects exactly seven rows (resp., columns) of $G$, the interior of $\boxplus_{C^*}$ can intersect at most two vertical point-zone bounding lines, and if it intersects two such lines, then both lines are the bounding lines of the same vertical gap-zone. Similarly, the interior of $\boxplus_{C^*}$ can intersect at most two horizontal point-zone bounding lines, and if it intersects two such lines, then both lines are the bounding lines of the same horizontal gap-zone.
Depending on whether the interior of $\boxplus_{C^*}$ intersects any vertical (resp., horizontal) point-zone bounding line, there are four cases: (1) it does not intersect any point-zone bounding line; (2) it intersects a vertical point-zone bounding line but does not intersect any horizontal point-zone bounding line; (3) it intersects a horizontal point-zone bounding line but does not intersect any vertical point-zone bounding line; (4) it intersects both a vertical point-zone bounding line and a horizontal point-zone bounding line. We discuss the four cases below. The first case is relatively easy to handle. Our main effort focuses on the second case. The third case is symmetric to the second one and thus can be handled similarly. For the fourth case, the algorithm essentially first runs the second case algorithm and then runs the third case algorithm. Note that which of these four cases occurs can be determined in $O(\log n)$ time using trees $T_v$ and $T_h$.

\paragraph{Case (1): The interior of $\boxplus_{C^*}$ does not intersect any point-zone bounding line.}
In this case, $\boxplus_{C^*}$ must be inside a vertical point-zone and also inside a horizontal point-zone. Indeed, this is the case because $C^*$ cannot be a gap cell (since a gap cell must have at least one edge either on a horizontal point-zone bounding line or a vertical point-zone bounding line, which means that the interior of $\boxplus_{C^*}$ must intersect a point-zone bounding line). Therefore, all cells of $\boxplus_{C^*}$ are regular cells. Depending on whether $C^*$ contains a point of $P$, there are further two cases. 

\begin{itemize}
    \item If $C^*$ contains a point of $P$, i.e., $C^*$ is a non-empty cell, then all cells of $\boxplus_{C^*}$ are in $\calC$ and thus are already stored in $T$, and $T_{C^*}$ stores all points of $P(C^*)$. We simply insert $p^*$ to $T_{C^*}$. This finishes the insertion. The total time of the insertion algorithm is $O(\log n)$. 
    
    \item If $C^*$ does not contain any point of $P$, then for each cell $C\in \boxplus_{C^*}$ (note that we can find all $O(1)$ cells of $\boxplus_{C^*}$ in $O(\log n)$ time using $T_v$ and $T_h$), we add $C^*$ to $N(C)$ and insert $C$ to $T$ if $C$ is not already there. 
    In addition, we initiate a tree $T_{C^*}$ to store $p^*$. This finishes the insertion. Since $\boxplus_{C^*}$ has $O(1)$ cells, the total time of the insertion algorithm is $O(\log n)$.  
\end{itemize}
        
\paragraph{Case (2): The interior of $\boxplus_{C^*}$ intersects a vertical point-zone bounding line but does not intersect any horizontal point-zone bounding line.} 

In this case, depending on whether $C^*$ is a regular cell, there are further two cases. 

We first discuss the case where $C^*$ is a regular cell. Let $Z^*$ denote the vertical point-zone containing $C^*$. 
If the interior of $\boxplus_{C^*}$ intersects only one vertical point-zone bounding line, then let $\ell^*$ denote that line. If it intersects two such lines, then both lines are on the same side of $Z^*$ because $Z^*$ contains at least seven regular columns; in this case, let $\ell^*$ refer to the one of the two bounding lines closer to $C^*$. Without loss of generality, we assume that $C^*$ is to the left of $\ell^*$ (see Fig.~\ref{fig:pointzone}).

    \begin{figure}[t]
        \centering
        \includegraphics[height=2.5in]{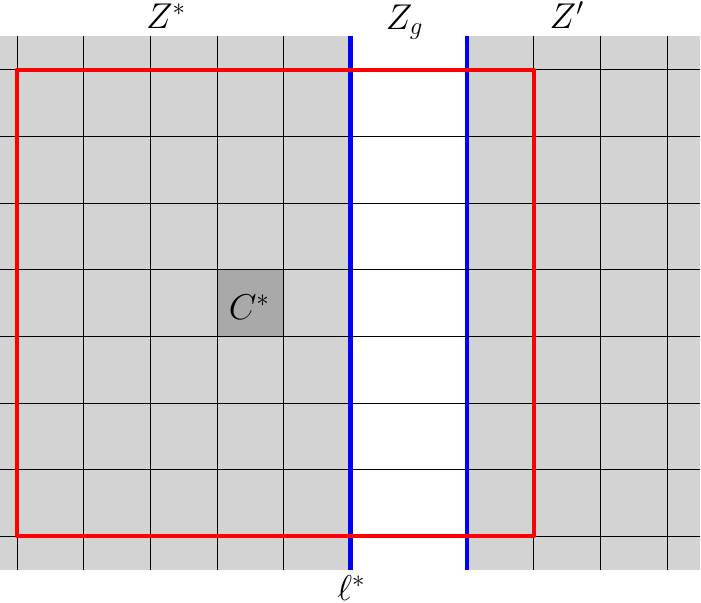}
        \caption{Illustrating the case where $C^*$ is a regular cell. The red box is $\boxplus_{C^*}$. The two blue vertical lines are vertical point-zone bounding lines; the left one is $\ell^*$. All grey cells are in vertical point-zones  while cells are in vertical gap zones.}
        \label{fig:pointzone}
        \vspace{-0.15in}
    \end{figure}

Observe that $\ell^*$ must be the right bounding line of $Z^*$. Since the interior of $\boxplus_{C^*}$ intersects $\ell^*$, $Z^*$ is not large enough to incorporate $p^*$ because $\ell^*$ is too close to $p^*$. In other words, if we run our sweeping algorithm on $P\cup \{p^*\}$ to compute the vertical point-zones, then $\ell^*$ would not be the right bounding line of $Z^*$. Therefore, to incorporate $p^*$, we need to move $\ell^*$ to the right. Let $Z_g$ be the right 
vertical gap-zone adjacent to $Z^*$. We wish to move $\ell^*$ rightwards towards $Z_g$ so that the distance between the new $\ell^*$ and $C^*$ is exactly $3/2$ (we can then add at most three non-boundary vertical partition lines to create new regular columns between $C^*$ and the new $\ell^*$; afterwards, the interior of $\boxplus_{C^*}$ in the new grid $G$ does not intersect any point-zone bounding line anymore). However, $Z_g$ may not have enough space to make this possible, i.e., the move of $\ell^*$ may cross the right boundary of $Z_g$. Depending on whether this move is possible, there are further two cases. 

\begin{itemize}
    \item If the move of $\ell^*$ as above does not cross over the right boundary of $Z_g$, then we move $\ell^*$ and add the  non-boundary vertical partition lines as discussed above (i.e., we insert these at most three non-boundary vertical lines to the tree $T_v$). After that, the interior of the new $\boxplus_{C^*}$ in the new grid $G$ does not intersect any point-zone bounding line. For each cell $C\in \boxplus_{C^*}$, we add $C^*$ to $N(C)$ and insert $C$ to $T$ if $C$ is not already there. Also, we initiate a tree $T_{C^*}$ to store $p^*$. This finishes the insertion. Since $\boxplus_{C^*}$ has $O(1)$ cells, the total time is $O(\log n)$. 
    
    \item If the move of $\ell^*$ as above crosses over the right boundary of $Z_g$, then we stop moving $\ell^*$ at a position $x$ such that the distance between $x$ and the old $\ell^*$ is a multiple of $1/2$, and the distance between $x$ and the right boundary of $Z_g$ is less than $1/2$. In fact, the distance between $x$ and the old $\ell^*$ can be $0$, $1/2$, or $1$. Such a position $x$ can be found in $O(1)$ time. We again add non-boundary vertical partition lines between the old $\ell^*$ and the new $\ell^*$ so that each new column has width $1/2$. Also, the width of the new gap zone $Z_g$ becomes less than $1/2$. Let $Z'$ be the right neighboring point-zone of $Z_g$. We merge $Z^*$, $Z_g$, and $Z'$ together to create a single vertical point-zone. The merge is done by simply marking $\ell^*$ a non-boundary vertical partition line. In the new point-zone, $Z_g$ becomes a regular column whose width is smaller than $1/2$, and we call it a ``narrow column'' (see Fig.~\ref{fig:pointzone10}).

    \begin{figure}[t]
        \centering
        \includegraphics[height=2.5in]{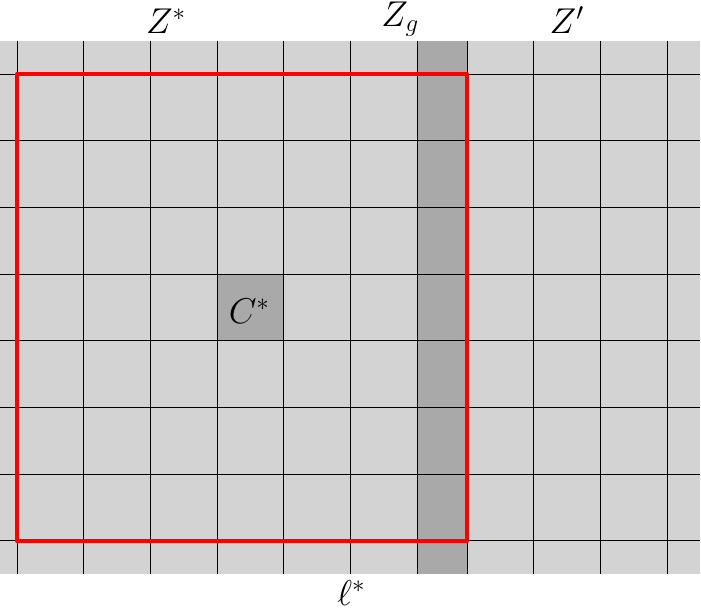}
        \caption{Illustrating the new grid after the update on Figure~\ref{fig:pointzone}. The dark grey column is a narrow column. The red box is the new $\boxplus_{C^*}$.}
        \label{fig:pointzone10}
        \vspace{-0.15in}
    \end{figure}
    
    Our algorithm maintains an invariant that for each bounding line $\ell$ of any vertical point-zone $Z$, none of the seven columns of $Z$ closest to $\ell$ is a narrow column. It is not difficult to see that the above merge operation maintains this invariant. The invariant implies that there are at least seven regular columns of width $1/2$ between every two narrow columns. This further leads to the following observation: For any regular cell $C\in G$, $\boxplus_{C}$ intersects at most one narrow column. Similarly, our algorithm maintains the invariant that $\boxplus_{C}$ of every regular cell $C$ intersects at most one ``narrow row''. This guarantees that the key property of $\boxplus_C$ still holds even if we allow narrow columns and narrow rows in $G$. 

    After the above merge operation, the interior of the new $\boxplus_{C^*}$ in the new grid $G$ does not intersect any point-zone bounding line (see Fig.~\ref{fig:pointzone10}). For each cell $C\in \boxplus_{C^*}$, we add $C^*$ to $N(C)$ and insert $C$ to $T$ if $C$ is not already there. In addition, we initiate a tree $T_{C^*}$ to store $p^*$. This finishes the insertion, which takes $O(\log n)$ time in total. 
\end{itemize}

Next, we discuss the case where $C^*$ is not a regular cell. In this case, $C^*$ is contained in a gap-zone $Z^*$ whose left bounding line $\ell_1$ contains the left edge of $C^*$ and whose right bounding line $\ell_2$ contains the right edge of $C^*$. Hence, the interior of $\boxplus_{C}$ intersects two vertical point-zone bounding lines, i.e., $\ell_1$ and $\ell_2$ (see Fig.~\ref{fig:pointzone20}). Without loss of generality, we assume that $p^*$ is closer to $\ell_1$ than to $\ell_2$, i.e., $|x(p^*)-x(\ell_1)|\leq |x(p^*)-x(\ell_2)|$. Let $Z_1$ (resp., $Z_2$) be the left (resp., right) neighboring vertical point-zone of $Z^*$. 
Roughly speaking, our algorithm for this case works as follows. If $p^*$ is far enough from $\ell_1$, then we will create a new vertical point-zone inside $Z^*$ that contains $p^*$ (and $Z^*$ is thus split into two smaller gap-zones and one vertical point-zone between them). Otherwise, we expand $\ell_1$ to the right to make the new $Z_1$ contain $p^*$, and if $p^*$ is also close to $\ell_2$, then $Z_1$, $Z^*$, and $Z_2$ are merged into a new point-zone. The details are elaborated below.

    \begin{figure}[t]
        \centering
        \includegraphics[height=2.5in]{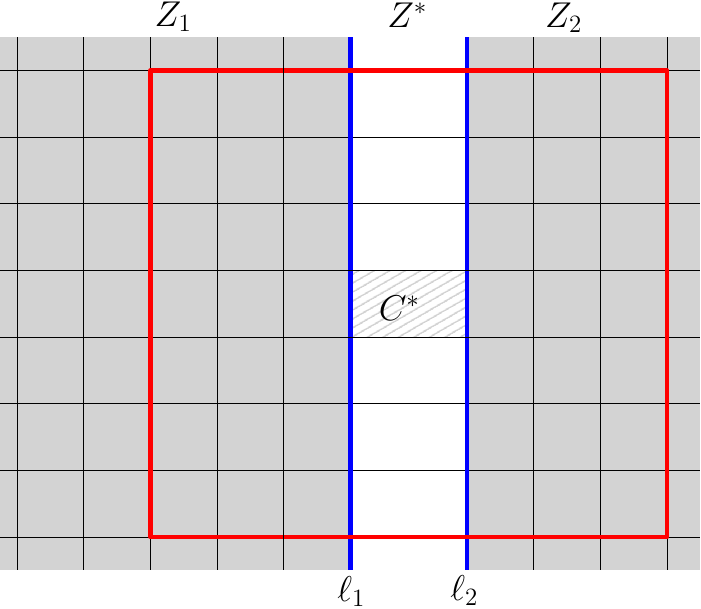}
        \caption{Illustrating the case where $C^*$ is in a vertical gap-zone. The red box is $\boxplus_{C^*}$. All grey cells are in vertical point-zones while white cells are in vertical gap zones. $\ell_1$ and $\ell_2$ are respectively the left and right bounding lines of the vertical gap-zone containing $C^*$.}
        \label{fig:pointzone20}
        \vspace{-0.15in}
    \end{figure}

\begin{itemize}    
\item 
If $|x(p^*)-x(\ell_1)|>7/4$, then since $|x(p^*)-x(\ell_1)|\leq |x(p^*)-x(\ell_2)|$, we create a new point-zone $Z'$ with left bounding line $\ell_1'$ at $x(p^*)-7/4$ and right bounding line $\ell_2'$ at $x(p^*)+7/4$. We then add six vertical regular partition lines in $Z'$ to partition it into seven columns of width $1/2$. We add these lines to $T_v$. We still use $C^*$ to refer to the new cell containing $p^*$ in the new $G$. Notice that the new $\boxplus_{C^*}$ now does not intersect any point-zone bounding line. Therefore, all cells of $\boxplus_{C^*}$ are now regular cells. For each cell $C\in \boxplus_{C^*}$, we add $C^*$ to $N(C)$ and insert $C$ to $T$ if $C$ is not already there. In addition, we initiate a tree $T_{C^*}$ for $C^*$ to store $p^*$. This finishes the insertion, which takes $O(\log n)$ time in total. 

\item 
If $|x(p^*)-x(\ell_1)|\leq 7/4$, then we expand the point-zone $Z_1$ by moving its right bounding line $\ell_1$ rightwards to contain $p^*$. Specifically, we move $\ell_1$ rightwards and at every distance $1/2$ we add a non-boundary vertical partition line to $T_v$. After passing $p^*$, we continue the process until four vertical partition lines are added after $p^*$ (and $\ell_1$ will be at the position of the last partition line). Assuming that $\ell_1$ has not passed over $\ell_2$, we proceed as follows. Since $|x(p^*)-x(\ell_1)|\leq 7/4$, the above process has added at most seven vertical lines to $T_v$, in $O(\log n)$ time. We still use $C^*$ to refer to the new cell that contains $p^*$ in the new $G$. Notice that the new $\boxplus_{C^*}$ is now inside the new vertical point-zone $Z_1$ and the gap-zone $Z^*$ is shrunk. Hence, $\boxplus_{C^*}$ now does not intersect any point-zone bounding line. Therefore, all cells of $\boxplus_{C^*}$ are now regular cells. For each cell $C\in \boxplus_{C^*}$, we add $C^*$ to $N(C)$ and insert $C$ to $T$ if it is not already there. In addition, we initiate a tree $T_{C^*}$ for $C^*$ to store $p^*$. This finishes the insertion, which takes $O(\log n)$ time in total. 

The above assumes that the moving of $\ell_1$ does not cross $\ell_2$. If it does, then the gap-zone $Z^*$ is too small. In this case, the moving of $\ell_1$ should stop as soon as the distance between the current partition line and $\ell_2$ is smaller than $1/2$. Then, we let $\ell_1$ be the current partition line. The new $Z^*$ refers to the region between the new $\ell_1$ and $\ell_2$, which is a narrow column. Note that the above added at most six vertical lines to $L_v$. We now merge $Z_1$, $Z^*$, and $Z_2$ into a single point-zone, simply by marking $\ell_1$ and $\ell_2$ as non-boundary vertical partition lines. We again use $C^*$ to refer to the cell containing $p^*$ in the new $G$. The new $\boxplus_{C^*}$ now does not intersect any point-zone bounding line. Therefore, all cells of $\boxplus_{C^*}$ are now regular cells. For each cell $C\in \boxplus_{C^*}$, we add $C^*$ to $N(C)$ and insert $C$ to $T$ if it is not already there. In addition, we initiate a tree $T_{C^*}$ for $C^*$ to store $p^*$. This finishes the insertion, which takes $O(\log n)$ time in total. 
\end{itemize}
       
\paragraph{Case (3): The interior of $\boxplus_{C^*}$ intersects a horizontal point-zone bounding line but does not intersect any vertical point-zone bounding line.}

This case is symmetric to the above second case. In this case, we may need to insert horizontal partition lines to $T_h$ to  expand a horizontal point-zone upwards or downwards, or merge two horizontal point-zones and a gap-zone, which may produce a ``narrow row''. As discussed above, the algorithm maintains an invariant that there are at least seven rows of width $1/2$ between every two narrow rows. 

\paragraph{Case (4): The interior of $\boxplus_{C^*}$ intersects a vertical point-zone bounding line and also a horizontal point-zone bounding line.}
In this case, we first add vertical partition lines to $T_v$ as in the above second case and then add horizontal partition lines to $T_h$ as in the above third case. Afterwords, let $C^*$ refer to the cell that contains $p^*$ in the new grid $G$. The new $\boxplus_{C^*}$ is now inside a vertical point-zone and also inside a horizontal point-zone. Therefore, all cells of $\boxplus_{C^*}$ are now regular cells. For each cell $C\in \boxplus_{C^*}$, we add $C^*$ to $N(C)$ and insert $C$ to $T$ if it is not already there. In addition, we initiate a tree $T_{C^*}$ for $C^*$ to store $p^*$. This finishes the insertion. The total time is $O(\log n)$. 

\subsection{Dealing with deletions}
Suppose we wish to delete a point $p^*$ from $P$. We first find the cell $C\in \calC$ that contains $p^*$. This can be done in $O(\log n)$ time by first finding $C$ in $G$ using the two trees $T_v$ and $T_h$, and then searching $C$ in $T$. Recall that all points of $P(C)$ are stored in the tree $T_C$. We simply remove $p^*$ from $T_C$. This finishes the deletion, which takes $O(\log n)$ time. In addition, to make sure the size of $\calC$ is $O(|P|)$, after $n/2$ deletions, we recompute $\calC$ and the data structure based on the current set $P$, which takes $O(n\log n)$ time. Hence, the amortized deletion time is $O(\log n)$.

\section{Dynamic range reporting}
\label{sec:dynamicreport}

Let $P$ be a set of points in the plane. We wish to maintain a data structure for $P$ to answer unit-disk range reporting queries subject to point insertions and deletions of $P$. Let $n$ denote the size of the current set $P$. 

Using Lemma~\ref{lem:dynamiccell}, we maintain a conforming coverage set $\calC$ of $O(n)$ cells for $P$. To insert a point $p$ to $P$, our insertion algorithm for Lemma~\ref{lem:dynamiccell} boils down to inserting $p$ to $P(C)$ for a cell $C\in \calC$ that contains $p$. To delete a point $p$ from $P$, our deletion algorithm for Lemma~\ref{lem:dynamiccell} boils down to deleting $p$ from $P(C)$ for a cell $C\in \calC$ that contains $p$. 

Consider a query unit disk $D_q$ whose center is $q$. If $q$ is not in a cell of $\calC$, then by Definition~\ref{def:cell}(5), $P(D_q)=\emptyset$ and thus we simply return null. Otherwise, 
to report $P(D_q)$, it suffices to report $P(C') \cap D_q$ for all cells $C' \in N(C)$. In the case of $C' = C$,
since the distance between two points in $C$ is at most 1 by Definition~\ref{def:cell}(1), we can simply
report all points of $P(C)$. If $C' \neq C$, $C$ and $C'$ are separated by an axis-parallel line by Definition~\ref{def:cell}(3). Without loss of generality, we assume that $C$ and $C'$ are separated by a horizontal line $\ell$ with $C'$ above $\ell$ and $C$ below $\ell$. As $q\in C$, $q$ is below $\ell$, i.e., $q$ is separated from $C'$ by $\ell$. Our goal is to report points of $P(C')\cap D_q$. Due to the point updates of $P(C')$, 
our problem is reduced to the following subproblem, called {\em dynamic line-separable UDRR problem}. 

\begin{problem}(Dynamic line-separable UDRR)
\label{problem:dynamic-LS-UDRR}
For a set $Q$ of $m$ points above a horizontal line $\ell$, 
maintain $Q$ in a data structure to support the following operations. 
(1) Insertion: insert a point to $Q$; (2) deletion: delete a point from $Q$; (3) unit-disk range reporting query: given a point $q$ below $\ell$, report the points of $Q$ in the unit disk $D_q$.
\end{problem}

We have the following Lemma~\ref{lem:dynamiclineUDRR} for the dynamic line-separable UDRR problem. 

\begin{lemma}\label{lem:dynamiclineUDRR}
For the dynamic line-separable UDRR problem, we can have a data structure of $O(m\log m)$ space to maintain $Q$ to support insertions in $O(\log^{3+\epsilon} m)$ amortized time, deletions in $O(\log^{5+\epsilon} m)$ amortized time, and unit-disk range reporting queries in $O(k+\log m)$ time, where $k$ is the output size, $\epsilon$ is an arbitrarily small positive constant, and $m$ is the size of the current set $Q$. 
\end{lemma}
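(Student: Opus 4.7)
The plan is to reduce Problem~\ref{problem:dynamic-LS-UDRR} to a dynamic reporting problem on circular arcs, and then port the dynamic halfplane range reporting machinery of~\cite{ref:BergDy23,ref:ChanDy20,ref:ChanTh12,ref:ChanA10,ref:KaplanDy20} onto this family of arcs, using the shallow cutting algorithm of Section~\ref{sec:algoshallow} in place of the usual shallow cuttings for lines of~\cite{ref:ChanOp16}.

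First I would carry out the reduction. For each $p\in Q$, let $\gamma_p$ denote the portion of $\partial D_p$ lying on or below $\ell$; if $D_p$ lies entirely above $\ell$, then no $q$ below $\ell$ can satisfy $p\in D_q$, so such $p$ may be discarded. Set $\calA=\{\gamma_p:p\in Q\}$. Since $q$ lies below $\ell$, we have $p\in D_q \iff q\in D_p \iff q$ lies above $\gamma_p$, so a query amounts to reporting all arcs of $\calA$ below $q$, and each insertion or deletion of a point in $Q$ translates into the insertion or deletion of a single arc in $\calA$. Each $\gamma_p$ is $x$-monotone (it is the lower arc of a unit circle centered above $\ell$), and any two arcs of $\calA$ intersect in at most two points; these are the only combinatorial features of $\calA$ the framework will need.

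Next I would build the dynamic structure on $\calA$ by combining a shallow-cutting-based static range-reporting structure with a logarithmic decomposition. The static structure on any subset $\calA'\subseteq\calA$ stores the sequence of shallow cuttings of $\calA'$ at geometrically increasing depths together with, for each cell, the canonical list of arcs covered by that cell; a query locates $q$ in the cell of the appropriate depth in $O(\log|\calA'|)$ time and then reports its $k'$ relevant arcs in $O(k')$ additional time. For the dynamic version, $\calA$ is partitioned into $O(\log m)$ disjoint subsets maintained via the standard logarithmic-method rebuilds for insertions, and deletions are supported by the weight-balanced rebuild scheme of~\cite{ref:BergDy23}; these give the claimed $O(\log^{3+\epsilon}m)$ amortized insertion and $O(\log^{5+\epsilon}m)$ amortized deletion times, while the total space remains $O(m\log m)$. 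Summing the per-substructure cost $O(\log m_i+k_i)$ over the $O(\log m)$ substructures yields the $O(\log m+k)$ query bound.

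The main obstacle is that the cited dynamic frameworks are stated for lines (equivalently, via duality, for halfplanes), so one must verify that every step still goes through for $\calA$. The only genuinely new ingredient is a linear-size shallow cutting for $N$ arcs computable in $O(N\log N)$ deterministic time, which is delivered in Section~\ref{sec:algoshallow}; once this primitive is in hand, the $x$-monotonicity of the arcs and their pairwise $\leq 2$ intersections are sufficient to carry over the $k$-level complexity analysis, the doubling sequence of cutting depths, the in-cell point location, and the charging arguments used to amortize updates, all with only cosmetic modifications. Notably, the $O(\log m+k)$ query bound is sharper than the $O(\log^2 m/\log\log m+k)$ one would obtain by lifting to 3D halfspace reporting via the standard transformation; keeping the problem in the plane as an arc-reporting problem on $\calA$ is precisely what avoids that extra factor.
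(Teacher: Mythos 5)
Your overall route is the same as the paper's: reduce to reporting arcs of $\calA$ below a query point, then transplant the dynamic lowest-$k$-lines machinery of Chan and de~Berg--Staals onto $\calA$, powered by the new arc shallow-cutting algorithm of Section~\ref{sec:algoshallow}. However, there are two genuine gaps.

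First, the claim that arcs of $\calA$ intersect pairwise in at most two points is false, and the correct fact --- at most one intersection --- is load-bearing. Two unit circles with centers $p_1,p_2\in\bbR^+$ meet at two points symmetric about the midpoint $m=(p_1+p_2)/2$, and since $y(m)>0$, at most one of those two points can lie in $\bbR^-$; hence two arcs of $\calA$ cross at most once. The shallow-cutting existence proof (Lemma~\ref{lem:shallowexist10}) needs exactly this, because it invokes Sharir's $O(nk)$ bound on the complexity of the $(\leq k)$-level, which holds for pseudoline-like families; with only ``at most two crossings'' you would be stuck with a weaker Davenport--Schinzel-type bound and the size/time guarantees of Theorem~\ref{theo:shallowcut} would not follow as stated.

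Second, your query-time derivation does not work: querying $O(\log m)$ substructures at $O(\log m_i + k_i)$ each yields $O(\log^2 m + k)$, not $O(\log m + k)$. The paper avoids the extra $\log$ factor by two specific devices that your proposal omits. To locate the relevant cutting cell in every auxiliary structure, it maintains a dynamic interval tree $T_i$ over the $x$-projections of all cuttings $T_i^j$, so that all $t$ relevant cells are found in $O(\log m + t)$ total time rather than by $t$ independent binary searches. To extract the $k$ lowest arcs across all those cells without paying $\Omega(\log m)$ per cell, it uses the simultaneous multi-structure $k$-lowest query (Theorem~1 of de~Berg--Staals, based on Frederickson's heap selection), applied to the deletion-only structures stored on the conflict lists. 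Relatedly, the paper uses two different deletion-only substructures (one based on partition trees for large conflict lists, one with $O(n\log r)$ space for small ones) precisely so that the $Q_0$ term in the combined query bound and the overall space both come out as claimed; your sketch does not engage with this case split. The first step of your proposal, and the overall architecture, are right, but as written the proof does not establish the $O(\log m+k)$ query bound.
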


We will prove Lemma~\ref{lem:dynamiclineUDRR} in Section~\ref{sec:lemdynamiclineUDRR}. With Lemmas~\ref{lem:dynamiccell} and \ref{lem:dynamiclineUDRR}, we can obtain the following main result for our original dynamic UDRR problem.

\begin{theorem}
    \label{theo:dynamicudrr}
    We can maintain a set $P$ of points in the plane in a data structure of $O(n\log n)$ space to support insertions in $O(\log^{3+\epsilon} n)$ amortized time, deletions in $O(\log^{5+\epsilon} n)$ amortized time, and unit-disk range reporting queries in $O(k+\log n)$ time, where $k$ is the output size, $\epsilon$ is an arbitrarily small positive constant, and $n$ is the size of the current set $P$. 
\end{theorem}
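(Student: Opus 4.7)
The plan is to layer the dynamic line-separable UDRR machinery of Lemma~\ref{lem:dynamiclineUDRR} on top of the dynamic conforming coverage from Lemma~\ref{lem:dynamiccell}. Concretely, I would maintain a conforming coverage $\calC$ of the current set $P$ via Lemma~\ref{lem:dynamiccell}, and for every ordered pair of cells $(C,C')$ with $C'\in N(C)$ and $C'\neq C$ I would maintain an instance $\text{DS}(C,C')$ of the data structure of Lemma~\ref{lem:dynamiclineUDRR}, which stores $P(C')$ with respect to the axis-parallel line $\ell_{C,C'}$ that separates $C$ from $C'$ (guaranteed by Definition~\ref{def:cell}(3)). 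Since each cell of $\calC$ participates in only $O(1)$ such ordered pairs, each point of $P$ is duplicated $O(1)$ times across these instances, so the total space is $O(n\log n)$ by Lemma~\ref{lem:dynamiclineUDRR}.

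To answer a query unit disk $D_q$, I would locate the cell $C_q\in\calC$ containing $q$ in $O(\log n)$ time; if no such cell exists, Definition~\ref{def:cell}(5) yields $P(D_q)=\emptyset$. Otherwise, Definition~\ref{def:cell}(4) reduces the task to reporting $P(C')\cap D_q$ over the $O(1)$ cells $C'\in N(C_q)$. For $C'=C_q$, every point of $P(C_q)$ lies within distance $1$ of $q$ by Definition~\ref{def:cell}(1), so I would output $P(C_q)$ directly in $O(|P(C_q)|)$ time. For $C'\neq C_q$, the point $q$ lies on the opposite side of $\ell_{C_q,C'}$ from $P(C')$, so I would query $\text{DS}(C_q,C')$ at $q$ and obtain $P(C')\cap D_q$ in $O(\log n+k_{C'})$ time. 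Summing over the $O(1)$ choices of $C'$ yields the claimed $O(\log n+k)$ query bound.

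For updates, I would first invoke the insertion or deletion routine of Lemma~\ref{lem:dynamiccell} in $O(\log n)$ amortized time to refresh $\calC$, identifying the cell $C^*$ containing the affected point $p^*$. The only ordered pairs whose stored point set $P(C')$ changes are the $O(1)$ pairs of the form $(C,C^*)$ with $C\in N(C^*)\setminus\{C^*\}$; I would insert or delete $p^*$ in each corresponding $\text{DS}(C,C^*)$, and whenever a brand-new pair is created (which happens only when $C^*$ has just become non-empty, so $P(C^*)=\{p^*\}$ at that moment) I would initialize a fresh singleton instance in $O(\log n)$ time. By Lemma~\ref{lem:dynamiclineUDRR}, this costs $O(\log^{3+\epsilon}n)$ amortized per insertion and $O(\log^{5+\epsilon}n)$ amortized per deletion. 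The periodic global rebuild of $\calC$ after $n/2$ deletions prescribed in Section~\ref{sec:lemdynamiccellproof} is complemented by rebuilding every $\text{DS}(C,C')$ from scratch in total $O(n\log^{3+\epsilon}n)$ time, which amortizes to $O(\log^{3+\epsilon}n)$ per deletion and is absorbed by the $O(\log^{5+\epsilon}n)$ bound.

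The main subtlety to verify is that each insertion triggers only $O(1)$ creations or deletions of ordered pairs, and that any newly created pair $(C,C^*)$ has $|P(C^*)|=1$ at creation time so the corresponding $\text{DS}(C,C^*)$ can be built in $O(\log n)$. Both properties follow from the structural guarantees of Section~\ref{sec:lemdynamiccellproof}: only the cells of $\boxplus_{C^*}$ are touched by an update (and there are at most $49$ of them), and existing non-empty regular cells are never altered by the insertion algorithm.
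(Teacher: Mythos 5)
Your proposal follows the same high-level decomposition as the paper's proof: maintain the conforming coverage of Lemma~\ref{lem:dynamiccell}, and for each cell maintain $O(1)$ instances of the dynamic line-separable structure of Lemma~\ref{lem:dynamiclineUDRR}, so that queries and updates each touch $O(1)$ such instances. The only material difference is indexing: the paper keeps one instance $\calD_e(C)$ per edge $e$ of each non-empty cell $C$ (four per cell, each storing $P(C)$ with respect to the supporting line of $e$), and at query time picks, for each $C'\in N(C)$, the edge of $C'$ whose supporting line separates $C$ from $C'$; you instead keep one instance per ordered pair $(C,C')$. Both give $O(1)$ instances per cell and the same asymptotic bounds, but your bookkeeping has a small slip: when $p^*$ is inserted into $C^*$, you must push $p^*$ into $\mathrm{DS}(C,C^*)$ for every $C\in\boxplus_{C^*}\setminus\{C^*\}$, not only for $C\in N(C^*)\setminus\{C^*\}$ as you wrote, since $N(C^*)$ excludes the empty cells of $\boxplus_{C^*}$ and such an empty cell can still host a query point $q$ with $C^*\in N(C)$. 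The fix is either to iterate over $\boxplus_{C^*}$ directly or, more simply, to adopt the paper's per-edge indexing, which updates exactly the four $\calD_e(C^*)$ and sidesteps the issue entirely.
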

\begin{proof}
We build the data structure $\calD$ in Lemma~\ref{lem:dynamiccell} to maintain a conforming coverage set $\calC$ of $O(n)$ cells for $P$. For each cell $C\in \calC$ that contains at least one point of $P$, we maintain a data structure $\calD_e(C)$ for $P(C)$ with respect to the supporting line of each edge $e$ of $C$. 
Since the space of each $\calD_e(C)$ is $O(|P(C)| \log |P(C)|)$, each cell of $\calC$ has four edges, and $\sum_{C\in \calC}|P(C)|=n$, the total space of the overall data structure is $O(n \log n)$. 

\paragraph{Insertions.}
To insert a point $p$ to $P$, we first update $\calD$ by Lemma~\ref{lem:dynamiccell}, which takes $O(\log n)$ worst-case time. The insertion algorithm of Lemma~\ref{lem:dynamiccell} eventually inserts $p$ to $P(C)$ for a cell $C\in \calC$ that contains $p$. We insert $p$ to $\calD_e(C)$ for each edge $e$ of $C$, which takes $O(\log^{3+\epsilon}n)$ amortized time by Lemma~\ref{lem:dynamiclineUDRR}. As such, each insertion takes $O(\log^{3+\epsilon}n)$ amortized time. 

\paragraph{Deletions.}
To delete a point $p$ from $P$, we first update $\calD$ by Lemma~\ref{lem:dynamiccell}, which takes $O(\log n)$ amortized time. The deletion algorithm of Lemma~\ref{lem:dynamiccell} eventually deletes $p$ from $P(C)$ for a cell $C\in \calC$ that contains $p$. We delete $p$ from $\calD_e(C)$ for each edge $e$ of $C$, which takes $O(\log^{5+\epsilon}n)$ amortized time by Lemma~\ref{lem:dynamiclineUDRR}. As such, each deletion takes $O(\log^{5+\epsilon}n)$ amortized time. 

\paragraph{Queries.}
Given a query unit disk $D_q$ with center $q$, we first check whether $q$ is in a cell of $\calC$, and if so, find such a cell; this takes $O(\log n)$ time by Lemma~\ref{lem:dynamiccell}. If no cell of $\calC$ contains $q$, then $P\cap D_q=\emptyset$ and we simply return null. Otherwise, let $C$ be the cell of $\calC$ that contains $q$. We first report all points of $P(C)$. Next, for each $C'\in N(C)$, by Definition~\ref{def:cell}(3), $C$ and $C'$ are separated by an axis-parallel line $\ell$. Since each edge of $C$ and $C'$ is axis-parallel, $C'$ must have an edge $e$ whose supporting line is parallel to $\ell$ and separates $C$ and $C'$. Using  $\calD_e(C')$, we report all points of $P(C')$ inside $D_q$. As $|N(C)|=O(1)$, the total query time is $O(\log n + k)$ by Lemma~\ref{lem:dynamiclineUDRR}.
\end{proof}

\section{Proving Lemma~\ref{lem:dynamiclineUDRR}: Dynamic line-separable UDRR}
\label{sec:lemdynamiclineUDRR}

We now prove Lemma~\ref{lem:dynamiclineUDRR}. For notational convenience, instead of $m$, we use $n$ to denote the size of $Q$. 

Consider a query unit disk $D_q$ with center $q$ below $\ell$. The goal is to report $Q(D_q)$.
Observe that a point $p \in Q$ is in $D_q$ if and only if $q$ is in the unit
disk $D_p$. The portion of $\partial D_p$ below $\ell$ is a circular arc, denoted by $\gamma_p$.
Since $p$ is above $\ell$, $\gamma_p$ is on the lower half circle of $\partial D_p$ and
thus is $x$-monotone. As such, $p$ is in $D_q$ if and only if $q$ is above the arc $\gamma_p$. 
Define $\Gamma$ to be the set of arcs $\gamma_p$ for all points $p\in Q$. 
Therefore, reporting the points of $Q$ in $D_q$ becomes reporting the arcs of $\Gamma$ that are below $q$, which we call {\em arcs reporting queries}. 

In what follows, an arc of $\Gamma$ always refers to the portion below $\ell$ of a unit circle with center above $\ell$. 
Our problem thus becomes dynamically maintaining a set $\calA$ of arcs to report the arcs of $\calA$ below a query point $q$. The arcs reporting queries can be reduced to the following {\em $k$-lowest-arcs queries}: Given a query vertical line $\ell^*$ and a number $k\geq 1$, report the $k$ lowest arcs of $\calA$ intersecting $\ell^*$. We have the following observation, which follows the proof of Chan~\cite{ref:ChanRa00} for lines. 
\begin{observation}{\em\cite{ref:ChanRa00}}\label{obser:karcquery}
Suppose that we can answer each $k$-lowest-arcs query in $O(\log n+k)$ time. Then, the arcs of $\calA$ below a query point $q$ can be reported in $O(\log n+ k)$ time, where $k$ is the output size. 
\end{observation}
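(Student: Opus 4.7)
My plan is to reduce the arcs-reporting query to a handful of $k$-lowest-arcs queries on the vertical line through $q$, via a doubling strategy.

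Let $\ell^*$ denote the vertical line through $q$. Since each arc of $\calA$ is $x$-monotone, an arc $\gamma \in \calA$ is below $q$ if and only if $\gamma$ crosses $\ell^*$ and the $y$-coordinate of $\gamma \cap \ell^*$ is strictly less than $y(q)$. In particular, if we order the arcs crossing $\ell^*$ by their $y$-values on $\ell^*$, the arcs to report form a prefix. I would issue $k$-lowest-arcs queries on $\ell^*$ with doubling parameter $k_i = 2^i \cdot \lceil \log n \rceil$ for $i=0,1,2,\ldots$. After the $i$-th query I scan the returned batch in $O(k_i)$ time to locate the arc whose $y$-value on $\ell^*$ is maximum, and compare that maximum to $y(q)$: if the maximum is at least $y(q)$, or if fewer than $k_i$ arcs were returned, then every arc below $q$ already lies in the current batch, so I output all arcs in the batch whose $y$-value on $\ell^*$ is below $y(q)$ and halt; otherwise I increment $i$ and repeat.

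Let $k$ denote the output size. If $k < \lceil \log n \rceil$ then the first query alone (cost $O(\log n)$) captures the answer. Otherwise the loop halts at some index $j$ with $k_{j-1} \le k < k_j \le 2k$; the cost of the $i$-th query is $O(\log n + k_i) = O(k_i)$ because $k_i \ge \log n$, so $\sum_{i=0}^{j} O(k_i) = O(k_j) = O(k)$ by the geometric series, and together with the final linear scan the total is $O(\log n + k)$. The one subtle point is the choice of starting value $k_0 = \lceil \log n \rceil$: a naive doubling from $k_0 = 1$ would give $\sum_{i=0}^{O(\log k)} O(\log n + 2^i) = O(\log n \cdot \log k + k)$, which exceeds $O(\log n + k)$ when $k$ is small. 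Initializing at $\log n$ absorbs the additive $\log n$ of every query into its $k_i$ term, and the series telescopes as required.
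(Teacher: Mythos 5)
Your proposal is correct and follows essentially the same doubling strategy as the paper's proof of Observation~\ref{obser:karcquery}, using the identical choice $k_i = 2^i\log n$ and the same telescoping analysis; the only difference is that you spell out the explicit stopping tests and the rationale for seeding the doubling at $\log n$ rather than at $1$, details the paper leaves implicit.
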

\begin{proof}
Given a query point $q$, let $k$ be the number of arcs of $\calA$ below $q$, which is not known in advance. Let $\ell^*$ be the vertical line through $q$. Let $k_i=2^i\log n$. We apply $k_i$-lowest arc queries for $i=0,1,\ldots$ until the algorithm reports an arc whose intersection with $\ell^*$ is higher than $q$. Then, among all $k_i$ arcs reported by the $k_i$-lowest-arcs queries, we return all arcs whose intersections are below $q$. The correctness is obvious. The runtime is on the order of $\log n+\sum_{k_{i-1}<k}(\log n+k_i)=O(\log n+k)$. 
\end{proof}

In light of the above observation, we now focus on the $k$-lowest-arcs queries. 
We adapt a technique for a similar problem on lines (which is the dual problem of the dynamic halfplane range reporting problem): Dynamically maintain a set of lines (subject to insertions and deletions) to report the $k$-lowest lines at a query vertical line. For this problem, Chan~\cite{ref:ChanTh12} gave a data structure of $O(n\log n)$ space that supports 
$O(\log^{6+\epsilon} n)$ amortized update time
and $O(k+\log n)$ query time. 
De Berg and Staals~\cite{ref:BergDy23} improved the result of \cite{ref:ChanTh12} for dynamically maintaining a set of planes in 3D. They gave a data structure of $O(n\log n)$ space that supports $O(\log^{3+\epsilon} n)$ amortized insertion time, 
$O(\log^{5+\epsilon} n)$ amortized deletion time, and $O(\log^2 n/\log\log n+k)$ query time. Their approach is based
on the techniques for dynamically maintaining planes for answering lowest point queries~\cite{ref:ChanA10,ref:ChanDy20,ref:KaplanDy20} and these techniques in turn replies on computing shallow cuttings on the planes in 3D~\cite{ref:ChanOp16}. 
In the following, we will extend these techniques to the arcs of $\calA$ and prove the following result. 
\begin{lemma}\label{lem:karcquery}
For the set $\calA$ of arcs, we can have a data structure of $O(n\log n)$ space to support insertions in $O(\log^{3+\epsilon} n)$ amortized time, deletions in $O(\log^{5+\epsilon} n)$ amortized time, and $k$-lowest-arcs queries in $O(k+\log n)$ time, where $n$ is the size of the current set $\calA$. 
\end{lemma}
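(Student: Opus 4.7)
The plan is to adapt the dynamic lowest-object framework of Chan~\cite{ref:ChanA10,ref:ChanDy20}, Kaplan et al.~\cite{ref:KaplanDy20}, and de Berg--Staals~\cite{ref:BergDy23}---originally formulated for planes in three dimensions---to the set $\calA$ of unit-circle arcs lying below $\ell$. Two structural facts let the transfer go through: each arc in $\calA$ is $x$-monotone, since the center of its underlying unit circle lies above $\ell$; and any two arcs in $\calA$ meet in at most two points, since their underlying unit circles do. Hence, the arrangement of $\calA$ is a Davenport--Schinzel structure of order $2$, so lower envelopes, $k$-levels, and shallow cuttings of $\calA$ have the same linear-size behavior as those of lines in the plane or planes in three-space after lifting.

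First, I would build a static $k$-lowest-arcs query structure. Using the shallow-cutting algorithm for $\calA$ developed in Section~\ref{sec:algoshallow}, which runs in $O(n\log n)$ time and produces a shallow cutting of the standard complexity, I maintain a logarithmic tower of shallow cuttings at depths $2^0, 2^1, \ldots, 2^{\lceil \log n \rceil}$. Each cutting is augmented with a planar point-location structure over its $x$-monotone cells and a conflict list per cell sorted by $y$-intersection. A query at a vertical line $\ell^*$ doubles through the levels until it finds one whose depth exceeds the eventual output size $k$, locates $\ell^*$ in that level in $O(\log n)$ time, and reports the $k$ lowest arcs crossing $\ell^*$ within the located cell in $O(k)$ time; the total query cost is thus $O(\log n + k)$. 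Note that this matches the Chan/Kaplan et al.\ bound rather than the weaker de Berg--Staals halfspace-reporting bound, because $k$-lowest queries at a vertical line are strictly simpler than halfspace queries.

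Next, I dynamize the structure by applying the tournament and logarithmic-method machinery of \cite{ref:ChanA10,ref:KaplanDy20,ref:BergDy23} verbatim, substituting ``arc of $\calA$'' for ``plane'' throughout. The insertion analysis of \cite{ref:BergDy23} relies only on the availability of an $O(n\log n)$-time shallow-cutting algorithm and Clarkson--Shor random-sampling bounds, and yields $O(\log^{3+\epsilon} n)$ amortized insertion time. The deletion analysis uses the deletion tree of \cite{ref:KaplanDy20} together with periodic global rebuilding, costing an additional $\log^{2} n$ factor for $O(\log^{5+\epsilon} n)$ amortized deletion time. The total space remains $O(n\log n)$.

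The main obstacle, and the only part not immediate from a word-by-word adaptation, is verifying that every geometric prerequisite the framework imposes on planes also holds for arcs of $\calA$: (a) the range space of ``regions below a single arc'' has finite VC dimension, so $\varepsilon$-net and $\varepsilon$-approximation theorems apply; (b) shallow cuttings of $\calA$ have linear complexity and can be computed in $O(n\log n)$ time, which is proved in Section~\ref{sec:algoshallow}; (c) point location in each shallow cutting takes $O(\log n)$ time, which holds since the cells form an $x$-monotone planar subdivision amenable to standard point location; and (d) conflict lists can be recomputed after partial rebuilding within budget by a single invocation of the algorithm of Section~\ref{sec:algoshallow} on the affected subset. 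Once these prerequisites are each checked, no conceptual modification to the arguments of \cite{ref:ChanA10,ref:KaplanDy20,ref:BergDy23} is required, and the bounds claimed in Lemma~\ref{lem:karcquery} follow.
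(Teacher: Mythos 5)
Your overall direction is correct---adapting the dynamic shallow-cutting framework of Chan, Kaplan et al., and de Berg--Staals by substituting arcs for planes, with the Section~\ref{sec:algoshallow} shallow-cutting algorithm supplying the key subroutine---and your observation that the $O(\log^2 n/\log\log n)$ term in de Berg--Staals can be brought down to $O(\log n)$ for this 2D vertical-line problem is exactly the right instinct. However, there is a genuine gap in how you get from the static tower to the claimed dynamic query bound.

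In the dynamic setting you do not have a single tower of cuttings; you maintain a \emph{collection} $T_i^j$ with $j$ ranging over $O(\log_b n)$ indices per level $i$, and a single query must consult all of them. Applying the logarithmic-method machinery ``verbatim'' therefore gives a query time of roughly $O((\log n + k)\log_b n)$, not $O(\log n + k)$. The paper kills this extra $\log_b n$ factor using three ingredients you do not address: (i) each conflict list $L_\Delta$ is stored in a \emph{deletion-only} $k$-lowest-arcs structure---one based on a Matou\v{s}ek/Wang partition tree with $O(\sqrt{m}\log^{O(1)}m + k)$ query time for large $m$ (Lemmas~\ref{lem:delonlypoint} and~\ref{lem:delonly}), and a separate structure from de Berg--Staals for small $m$ (Lemma~\ref{lem:delonly20}); (ii) the $k$ lowest arcs across all $O(\log_b n)$ relevant conflict lists are extracted simultaneously via Frederickson's heap-selection technique, so the $k$-term is paid once rather than $\log_b n$ times; and (iii) locating the relevant cell $\Delta^j$ in each $T_{i_k}^j$ is done through a dynamic interval tree per level $i$, yielding all $O(\log_b n)$ cells in $O(\log n + \log_b n)$ total. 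The case split between large and small $|L_\Delta|$ is essential: the partition-tree query cost $m^{1/2}\log^{O(1)}m$ is only affordable because $m = O(k)$ and one argues $k = \Omega(\log^3 n)$ in that branch, while for small $m$ a different space/query trade-off is needed. None of these is a cosmetic detail; without them the query bound does not come out to $O(\log n + k)$, so checking your prerequisites (a)--(d) is not sufficient to close the proof.

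One smaller point: you assert that two arcs of $\calA$ meet in at most two points. In fact the paper uses, and relies on, the sharper fact that these arcs pairwise cross at most \emph{once} (they are the sub-$\ell$ portions of unit circles centered above $\ell$); this is what makes the $(\leq k)$-level of $\calA$ have the $O(nk)$ complexity needed for the shallow-cutting existence proof via Sharir's bound. Your weaker claim is harmless for the framework generalities but you should be aware of what the shallow-cutting size bound actually uses.
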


Combining Lemma~\ref{lem:karcquery} and Observation~\ref{obser:karcquery} immediately leads to Lemma~\ref{lem:dynamiclineUDRR}. 

In what follows, we first develop a shallow cutting algorithm for arcs of $\Gamma$ in Section~\ref{sec:defshallow} and then using the algorithm to prove Lemma~\ref{lem:karcquery} in Section~\ref{sec:karcquery}.

\subsection{Shallow cuttings}
\label{sec:defshallow}

Without loss of generality, we assume that $\ell$ is the $x$-axis. 
Let $\bbR^-$ (resp., $\bbR^+$) be the half-plane below (resp., above) $\ell$. 
Note that each arc of $\calA$ is $x$-monotone, every arc has both endpoints on $\ell$, and every two arcs cross each other at most once.

We use {\em $\bbR^+$-constrained unit disk} to refer to a unit disk with center in $\bbR^+$ and use {\em $\bbR^+$-constrained arc} to refer to a portion of the arc $C\cap \bbR^-$ for a unit circle $C$ with center in $\bbR^+$.
For any point $q\in \bbR^-$, let $\rho(q)$ to denote the vertical downward ray from $q$. We say that an arc $\gamma$ of $\calA$ is {\em below} $q$ if it intersects $\rho(q)$. As the center of $\gamma$ is in $\bbR^+$ and $\rho_q\in \bbR^-$, $\gamma$ intersects $\rho_q$ at most once. 

For a parameter $r\leq n$ and a region $R$ of the plane, a {\em $(1/r)$-cutting covering $R$} for the arcs of $\calA$ is a set of interior-disjoint cells such that the union of all cells covers $R$ and each cell intersects at most $n/r$ arcs of $\calA$. For each cell $\Delta$, its {\em conflict list} $\calA_{\Delta}$ is the set of arcs of $\calA$ that intersect $\Delta$. The size of the cutting is the number of its cells.

For a point $p\in \bbR^-$, the {\em level} of $p$ in $\calA$ is the number of arcs of $\calA$ below $p$. For any integer $k\in [1,n]$, the {\em $(\leq k)$-level} of $\calA$, denoted by $L_{\leq k}(\calA)$, is defined as the region consisting of all points of $\bbR^-$ with level at most $k$. 
Given parameters $r, k\in [1,n]$, a {\em $k$-shallow $(1/r)$-cutting} is a $(1/r)$-cutting for $\calA$ that covers $L_{\leq k}(\calA)$. 

We use {\em pseudo-trapezoid} to refer to a region that has two vertical line segments as left and right edges, an $\bbR^+$-constrained arc or a line segment on $\ell$ as a top edge, and an $\bbR^+$-constrained arc as a bottom edge. In particular, if a pseudo-trapezoid does not have a bottom edge, i.e., the bottom side is unbounded, then we call it a {\em bottom-open} pseudo-trapezoid.

We say that a shallow cutting is in the {\em bottom-open pseudo-trapezoid form} if every cell of it is a bottom-open pseudo-trapezoid. 
Our main result about the shallow cuttings for $\calA$ is given in the following theorem. 

\begin{theorem}\label{theo:shallowcut}
There exist constants $B$, $C$, and $C'$, such that for a parameter $k\in [1,n]$, we can compute a $(B^ik)$-shallow $(CB^ik/n)$-cutting of size at most $C'\frac{n}{B^ik}$ in the bottom-open pseudo-trapezoid form, along with conflict lists of all its cells, for all $i=0,1,\ldots,\log_B\frac{n}{k}$, in $O(n\log \frac{n}{k})$ total time. In particular, we can compute a $k$-shallow $(Ck/n)$-cutting of size $O(n/k)$, along with its conflict lists, in $O(n\log \frac{n}{k})$ time.   
\end{theorem}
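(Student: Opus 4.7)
The plan is to adapt Chan's deterministic shallow-cutting algorithm for lines \cite{ref:ChanOp16} to the arc set $\calA$. The adaptation is enabled by the pseudo-line behavior of the arcs in $\bbR^-$: each arc is $x$-monotone with both endpoints on $\ell$, and any two arcs cross at most once. Consequently, the combinatorial ingredients Chan uses (lower-envelope complexity, $(\leq k)$-level complexity, Clarkson--Shor sampling) carry over with only cosmetic changes. The vertical pseudo-trapezoids of Chan's construction are replaced by the ``bottom-open pseudo-trapezoids'' of the statement: two vertical edges and a top edge that is an $\bbR^+$-constrained arc or a subsegment of $\ell$, with the bottom unbounded.

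For a single level $k$, I would build the cutting as follows. Using the subroutine of Section~\ref{sec:ComputingLayers}, compute an $x$-monotone curve $\eta$ that forms the upper boundary of $L_{\leq k}(\calA)$ and consists of $O(n/k)$ $\bbR^+$-constrained arc pieces; conceptually, $\eta$ is obtained by peeling the first $\Theta(k)$ layers of lower envelopes and coarsening. Partition $\eta$ at its joints into subcurves $\eta^{(1)}, \ldots, \eta^{(s)}$ with $s=O(n/k)$, and beneath each $\eta^{(j)}$ hang a bottom-open pseudo-trapezoid $\Delta_j$ whose left and right edges are the verticals through the endpoints of $\eta^{(j)}$. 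The family $\{\Delta_j\}$ covers $L_{\leq k}(\calA)$, and a Clarkson--Shor style argument on the pseudo-line system $\calA$ shows that each $\Delta_j$ intersects at most $Ck$ arcs. Conflict lists are produced by a left-to-right sweep that maintains the sorted list of arcs currently below the sweep point (updated at arc--arc intersections below $\eta$); each arc contributes $O(1)$ amortized events, so the total sweep time and total conflict-list size are both $O(n)$.

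To obtain the full family of cuttings, repeat the construction at levels $B^0 k, B^1 k, \ldots, B^{\lceil \log_B (n/k) \rceil} k$. The cell counts $C' n/(B^i k)$ form a geometric series summing to $O(n/k)$, and the $O(n)$ per-level cost sums to $O(n \log(n/k))$; the case $i=0$ gives the second half of the theorem. The main obstacle is two-fold: first, the subroutine of Section~\ref{sec:ComputingLayers} must produce the coarsened level curve $\eta$ with $O(n/(B^i k))$ pieces in $O(n)$ time per level, which requires an arc analogue of Chan's line-layer peeling; second, one must verify the Clarkson--Shor style conflict bound for pseudo-line arrangements of arcs, ensuring each $\Delta_j$ meets only $O(B^i k)$ arcs of $\calA$. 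With these two pieces in place, the remaining construction and analysis are standard sweep-line bookkeeping.
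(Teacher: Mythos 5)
Your plan captures the high-level intuition---adapt Chan--Tsakalidis to the pseudo-line system of arcs, build a sequence of cuttings at geometrically growing levels, and use bottom-open pseudo-trapezoids as cells---but it leaves out the actual algorithmic machinery, and the step you describe as ``conceptually, $\eta$ is obtained by peeling the first $\Theta(k)$ layers of lower envelopes and coarsening'' is where the theorem lives. Peeling lower-envelope layers (which is what Section~\ref{sec:ComputingLayers} computes) gives you the $k$-th \emph{lower envelope layer} $\calU_k$, which is neither the boundary of $L_{\leq k}(\calA)$ (i.e., the $k$-level of the arrangement) nor a curve of complexity $O(n/k)$; a single layer can already have $\Theta(n)$ pieces. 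The desired curve of complexity $O(n/k)$ is precisely the output of the shallow cutting algorithm you are trying to build, so invoking a ``coarsening'' step without specifying it is circular. Similarly, a Clarkson--Shor bound is an \emph{existence} argument via random sampling; it does not let you conclude that an arbitrary hand-constructed curve $\eta$ has small conflict lists. You would need the construction of $\eta$ itself to guarantee this.

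The paper's proof does not peel levels. It runs a Chan--Tsakalidis-style top-down \emph{refinement}: starting from a trivial one-cell cutting at level $n$, each round takes a $(Bk, CBk)$-shallow cutting, partitions each of its cells with a constant-size $\epsilon$-cutting on the cell's conflict list, and selects---by brute force over $O(1)$ candidates---a minimum subset of new vertices/segments satisfying explicit conditions; the size bound $C'n/k$ is proved by comparing this selection against an existential $(5k,5c_0k)$-shallow cutting guaranteed by Lemma~\ref{lem:shallowexist10} (Matou\v{s}ek's Clarkson--Shor-style bound), after a careful three-step modification of that existential cutting. Crucially, because circular arcs have endpoints on $\ell$, the vertex form Chan--Tsakalidis use for lines is insufficient: the paper has to introduce a \emph{vertex-segment form} together with the \emph{line-separated $\alpha$-hull} to make the vertical decomposition well defined, and prove the conversion Lemmas~\ref{lem:regulartovertex} and \ref{lem:vertextoregular}. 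None of this appears in your outline, and it is not a detail that can be recovered from ``standard sweep-line bookkeeping.'' To close the gap you would need to (i) replace the layer-peeling idea with the recursive cutting refinement, (ii) introduce a structure playing the role of the vertex-segment form for arcs, and (iii) carry out the size-comparison argument against an existential cutting.
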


Since the proof of Theorem~\ref{theo:shallowcut} is technical and lengthy (and is one of our main results in this paper), we devote the entire Section~\ref{sec:algoshallow} to it.

\subsection{Proving Lemma~\ref{lem:karcquery}}
\label{sec:karcquery}
We now prove Lemma~\ref{lem:karcquery}. 
With the shallow cutting algorithm in Theorem~\ref{theo:shallowcut}, we generalize the techniques of \cite{ref:BergDy23,ref:ChanTh12} for lines to the arcs of $\calA$. 
We first give two deletion-only data structures, which will be needed in our fully dynamic data structure for Lemma~\ref{lem:karcquery}. 

\subsubsection{Deletion-only data structure}

Our first deletion-only data structure is given in Lemma~\ref{lem:delonly}, whose proof in turn relies on Lemma~\ref{lem:delonlypoint}.

\begin{lemma}\label{lem:delonlypoint}
There is a data structure of $O(n)$ size to maintain a set $\calA$ of $n$ arcs to support $O(\log n)$ amortized time deletions, such that given a query point $q$, the arcs of $\calA$ below $q$ can be computed in $O(\sqrt{n}\log^{O(1)} n +k)$ time, where $k$ is the output size and $n$ is the size of the current set $\Gamma$. If a set $\Gamma$ of $n$ arcs is given initially, 
the data structure can be built in $O(n\log n)$ time. 
\end{lemma}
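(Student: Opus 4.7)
The plan is to adapt the deletion-only technique of Chan~\cite{ref:ChanTh12} for halfspace range reporting, with Theorem~\ref{theo:shallowcut} taking the role of the 3D shallow-cutting construction. Apply Theorem~\ref{theo:shallowcut} with $k=\lceil\sqrt{n}\rceil$ to build a $\sqrt{n}$-shallow $(C/\sqrt{n})$-cutting $\Lambda$ of $\calA$, together with its conflict lists, in $O(n\log n)$ time. This produces $O(\sqrt{n})$ bottom-open pseudo-trapezoidal cells, each with conflict list of size $O(\sqrt{n})$ and hence total conflict-list storage $O(n)$. Preprocess $\Lambda$ for $O(\log n)$-time point location — each cell is bounded by two vertical segments and by $x$-monotone unit arcs, so a standard trapezoidal-map point-location data structure applies — and thread cross-pointers from each arc to its occurrences in the conflict lists.

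Deletions proceed lazily: when an arc $\gamma$ is deleted, follow the cross-pointers and unlink $\gamma$ from every conflict list that contains it. The cumulative cost of all deletions before the next rebuild is $O(n)$ because the total conflict-list size is $O(n)$. After $\lfloor n/2\rfloor$ deletions, rebuild the entire structure from scratch on the surviving arcs in $O(n\log n)$ time; charging the rebuild cost across the phase gives $O(\log n)$ amortized time per deletion. For a query at a point $q$, first locate $q$ in $\Lambda$ in $O(\log n)$ time. If $q$ lies in some cell $\Delta$, then because $\Delta$ is bottom-open with vertical left/right edges, the downward ray $\rho(q)$ stays inside $\Delta$; consequently every arc of the current $\calA$ intersecting $\rho(q)$ lies in $\calA_\Delta$, and scanning $\calA_\Delta$ (of current size $O(\sqrt{n})$) outputs the arcs below $q$ in $O(\sqrt{n}+k)$ time.

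The main obstacle is the case in which $q$ lies above $\Lambda$. To stay within $O(n)$ space and the $O(\sqrt{n}\log^{O(1)}n+k)$ query bound, I adapt the hierarchical conflict-list scheme of Chan~\cite{ref:ChanTh12}: maintain, alongside $\Lambda$, a second-level collection of shallow cuttings at geometrically increasing levels $2\sqrt{n},\,4\sqrt{n},\dots,n$ and share storage across levels (via pointer sharing in the style of~\cite{ref:ChanTh12}) so that the total space stays $O(n)$. To answer an above-$\Lambda$ query, walk outward through the hierarchy until a cell contains $q$; the geometric growth of conflict-list sizes bounds the total walk by $O(\sqrt{n}\log^{O(1)}n)$, and the cell's conflict list then yields the arcs below $q$ in $O(k)$ further time. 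The delicate step is verifying that Chan's bookkeeping, originally developed in~\cite{ref:ChanTh12} for planes in 3D, transfers to our unit arcs; this is enabled by the bottom-open pseudo-trapezoid form guaranteed by Theorem~\ref{theo:shallowcut}, which supplies the arc-analog of the ``downward ray stays in the cell'' property underlying the original analysis.
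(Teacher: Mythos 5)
Your proposal does not follow the paper's route. The paper's proof of this lemma is a direct citation: Matou\v{s}ek's efficient-partition-tree data structure~\cite{ref:MatousekEf92} answers halfplane (dually, below-a-point line) reporting queries in $O(n^{1/2}\log^{O(1)}n+k)$ time with $O(n)$ space and $O(n\log n)$ build time, and supports deletions in $O(\log n)$ amortized time as noted in \cite[Theorem~7.1]{ref:MatousekEf92}; Wang~\cite{ref:WangUn23} generalized this partition-tree technique from lines to arcs of the kind in $\calA$, with the same complexities. That is the whole proof. No shallow cuttings appear in it.

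Your shallow-cutting construction has a concrete gap in the space bound. A single $\sqrt{n}$-shallow cutting with conflict lists has total size $O(n)$, and for queries whose answer lies inside a cell of that cutting your argument is fine. But to handle queries above $\Lambda$ you introduce a hierarchy of cuttings at levels $\sqrt{n},2\sqrt{n},4\sqrt{n},\ldots,n$; the conflict lists at each level sum to $\Theta(n)$, so the hierarchy stores $\Theta(n\log n)$ arc occurrences, not $O(n)$. The appeal to ``pointer sharing in the style of~\cite{ref:ChanTh12}'' is not substantiated and, as far as I can tell, not correct: Chan's Theorem~3.1 in~\cite{ref:ChanTh12} uses $O(n\log n)$ space, which is precisely why it is not used to prove this lemma. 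In fact, the paper's logical dependency runs the other way around from your plan: Lemma~\ref{lem:delonly} is the one that uses a hierarchy of shallow cuttings, and to stay within $O(n)$ space it deliberately does \emph{not} store the conflict lists --- it reconstructs them on demand by querying the Lemma~\ref{lem:delonlypoint} structure. So building Lemma~\ref{lem:delonlypoint} out of a shallow-cutting hierarchy is essentially circular, and the circularity is exactly what hides the space blow-up. To prove the lemma as stated you need an $O(n)$-space data structure that answers arc-reporting queries at \emph{any} level in $O(\sqrt{n}\log^{O(1)}n+k)$ time; the arc partition tree of~\cite{ref:WangUn23} is the tool that delivers this.
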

\begin{proof}
Using a partition tree, Matou\v{s}ek~\cite{ref:MatousekEf92} built a data structure to answer halfplane range searching queries among a set of points in the plane. In the dual problem, we are given a set of lines in the plane, and each query asks for the number of lines below a query point. Matou\v{s}ek~\cite{ref:MatousekEf92} built a data structure of $O(n)$ space in $O(n\log n)$ time, with $O(n^{1/2}\log^{O(1)} n)$ query time. If we wish to report all these planes below the query point, then the query time becomes $O(n^{1/2}\log^{O(1)} n + k)$, where $k$ is the output size. As discussed in~\cite[Theorem~7.1]{ref:MatousekEf92}, the data structure can be easily extended to accommodate line deletions (and insertions) and each deletion can be handled in $O(\log n)$ amortized time.

Wang~\cite{ref:WangUn23} generalized the partition tree technique of Matou\v{s}ek~\cite{ref:MatousekEf92} to queries among a set of arcs like those in $\calA$ with asymptotically the same complexities. We can then use the same method as discussed in \cite[Theorem~7.1]{ref:MatousekEf92} to handle deletions in $O(\log n)$ amortized time each. 
\end{proof}


\begin{lemma}\label{lem:delonly}
There is a data structure of $O(n)$ size to maintain a set $\calA$ of $n$ arcs to support $O(\log n)$ amortized time deletions and $O(\sqrt{n}\log^{O(1)}n  +k)$ time $k$-lowest-arcs queries. If a set $\Gamma$ of $n$ arcs is given initially, the data structure can be constructed in $O(n\log n)$ time. 
\end{lemma}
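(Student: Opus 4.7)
The plan is to combine the shallow cutting hierarchy of Theorem~\ref{theo:shallowcut} with the deletion-supporting arcs-below-point structure of Lemma~\ref{lem:delonlypoint}. At construction I would invoke Theorem~\ref{theo:shallowcut} (starting from $k=1$) to compute in $O(n\log n)$ time the full hierarchy of $(B^i)$-shallow $(CB^i/n)$-cuttings in bottom-open pseudo-trapezoid form for $i=0,1,\ldots,\log_B n$, together with the conflict lists of all their cells. For each cutting I would index its cells by their $x$-extents in a balanced binary search tree, so that point-locating a vertical line $\ell^*$ takes $O(\log n)$ time; since every cell is bottom-open and cells are interior-disjoint, $\ell^*$ hits exactly one cell at each cutting level. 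In parallel I would build the partition-tree-based structure of Lemma~\ref{lem:delonlypoint} on $\calA$ in $O(n\log n)$ time.

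For a $k$-lowest-arcs query at $\ell^*$, I would pick the smallest $i$ with $B^i\geq k$, point-locate $\ell^*$ in the $(B^i)$-shallow cutting to obtain the unique cell $\Delta$ it crosses, and scan the conflict list $\calA_\Delta$ of size $O(B^i)=O(k)$. Because the top edge of $\Delta$ lies at level at most $B^i$ in the original $\calA$, every arc whose intersection with $\ell^*$ is below that top edge is contained in $\calA_\Delta$. I would discard arcs flagged as deleted, compute the $y$-intersection of each remaining arc with $\ell^*$, and use linear-time selection to output the $k$ smallest in $O(\log n+k)$ time. A deletion flags the arc in the Lemma~\ref{lem:delonlypoint} structure in $O(\log n)$ amortized time and removes it from every conflict list that contained it; since each arc belongs to $O(1)$ cells per cutting and there are $O(\log n)$ cutting levels, this adds $O(\log n)$ work per deletion.

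The main obstacle is that arcs deleted \emph{above} the top edge of $\Delta$ in the original $\calA$ can lift the \emph{current} $k$-th-lowest arc at $\ell^*$ above $\Delta$, so that the true answer is no longer contained in $\calA_\Delta$. To control this, I would trigger a global rebuild of the whole data structure every $n/2$ deletions, amortized at $O(\log n)$ per deletion, and fall back on Lemma~\ref{lem:delonlypoint} whenever the scan detects fewer than $k$ live arcs below the top edge of $\Delta$: an $O(\log n)$-step binary search along $\ell^*$ using the counting variant of Lemma~\ref{lem:delonlypoint} (at $O(\sqrt{n}\log^{O(1)}n)$ per probe) locates the height on $\ell^*$ with exactly $k$ live arcs below, and a single final reporting query returns those $k$ arcs in $O(\sqrt{n}\log^{O(1)}n+k)$ time, matching the claimed bound. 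A final subtlety is the $O(n)$ space target: storing full conflict lists across all $\Theta(\log n)$ cutting levels would total $\Theta(n\log n)$, so I would materialize conflict lists only at a single coarse level of the hierarchy and recompute finer conflict lists on the fly during each query by filtering from the coarser one in time proportional to its size.
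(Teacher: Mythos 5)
Your high-level plan (a hierarchy of shallow cuttings from Theorem~\ref{theo:shallowcut} combined with the deletion-supporting structure of Lemma~\ref{lem:delonlypoint}, plus a rebuild every $n/2$ deletions) is the same as the paper's, but you miss the one idea that makes the space bound work, and the patch you introduce to compensate does not hold up.

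The paper deliberately does \emph{not} store any conflict lists. It builds the cuttings $\Xi_i$ and keeps only their cells (total $\sum_i O(n/2^i)=O(n)$ space), and it builds $\calD$, the Lemma~\ref{lem:delonlypoint} structure, as the \emph{only} place arcs live. At query time it locates the cell $\Delta_{\ell^*}$ of the appropriate $\Xi_i$, takes the point $q$ where $\ell^*$ meets the top edge, and asks $\calD$ to report all arcs currently below $q$. Because $q$ has level $O(k_i)=O(k)$ by the cutting guarantee, this report has size $O(k)$ and costs $O(\sqrt{n}\log^{O(1)}n+k)$, after which linear-time selection finds the $k$ lowest. The cutting is used purely as a fixed geometric landmark to obtain the point $q$; all freshness (i.e., awareness of deletions) comes for free from $\calD$. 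This is why the space is $O(n)$, why deletions cost only one $\calD$-update, and why no conflict lists need maintaining at all.

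Your version stores the conflict lists. As you note, that is $\Theta(n\log n)$ space across the hierarchy, and your workaround is to materialize them at ``a single coarse level'' and filter finer lists from the coarse one on the fly. This does not work: (i) for queries with $k$ larger than the chosen level's cell size, the finer-level cell you want is actually a \emph{coarser} cell with a \emph{larger} conflict list than anything you have stored, so there is nothing to filter from; and (ii) even when $k$ is small, a single arc can appear in $\omega(1)$ cells of a single cutting (only the \emph{total} conflict-list size per cutting is $O(n)$), so your ``remove from every conflict list in $O(\log n)$ per deletion'' analysis also needs a lazy-deletion argument that you do not give. The paper sidesteps both problems entirely by never storing the lists.

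Your fallback mechanism (binary-searching along $\ell^*$ with counting queries whenever fewer than $k$ live arcs remain below the top edge) is not part of the paper's proof, and as sketched it is not a well-defined binary search: the set of candidate heights on $\ell^*$ is continuous and you have no sorted sequence of breakpoints to recurse on. In the paper's algorithm, the subtlety you are worried about is handled implicitly by the fact that $\calD$ always returns the \emph{current} set of arcs below $q$ and the structure is rebuilt after $n/2$ deletions; the paper simply reports the $\leq k$ arcs currently below $q$ and selects from those, rather than trying to search above the cutting.

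In short: keep the hierarchy-plus-$\calD$ skeleton, but drop the explicit conflict lists and the fallback binary search, and instead use $\calD$ at query time to report the arcs currently below the top-edge point $q$ of the located cell. That single change restores the $O(n)$ space bound and gives the intended proof.
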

\begin{proof}
We first show that the data structure can be computed in $O(n\log n)$ time and $O(n)$ space for a set $\calA$ of $n$ arcs, and then discuss the deletions. 

Let $k_i=2^i$ for $i=0,1,\ldots,\lceil\log n\rceil$. We compute $k_i$-shallow $(Ck_i/n)$-cuttings $\Xi_{i}$ for $\calA$, for all $i=0,1,\ldots,\lceil\log n\rceil$, for a constant $C$. This takes $O(n\log n)$ time by Theorem~\ref{theo:shallowcut}. We do not store the conflict lists for the cuttings. Therefore, the total space is $O(n)$. Instead, we implicitly maintain the conflict lists of all cuttings by building the data structure of Lemma~\ref{lem:delonlypoint} for $\calA$, denoted by $\calD$, in $O(n)$ space and $O(n\log n)$ time. 

Given a query vertical line $\ell^*$ and a number $k$, we wish to compute the set $\calA_{\ell^*}$ of $k$ lowest arcs of $\calA$ intersecting $\ell^*$. To this end, we first find the smallest $i$ such that $k\leq k_i$, which takes $O(\log n)$ time. By definition, $k_i=O(k)$. Then, in $O(\log n)$ time we can find the cell $\Delta_{\ell^*}$ of $\Xi_i$ intersecting $\ell^*$. Let $q$ be the intersection between the top edge of $\Delta_{\ell^*}$ and $\ell^*$. By the definition of shallow cutting $\Xi_i$, $q$ is above at least $k_i$ arcs of $\calA$. As $k\leq k_i$, $\calA_{\ell^*}$ is a subset of the set $\calA_q$ of arcs of $\calA$ below $q$. Using $\calD$, we compute $\calA_q$ in $O(\sqrt{n}\log^{O(1)}n + |\calA_q|)$ time by Lemma~\ref{lem:delonlypoint}. By the definition of $\Xi_i$, $|\calA_q|=O(k_i)$, and thus $|\calA_q|=O(k)$ as $k_i=O(k)$. Finally, we find the $k$ lowest arcs at $\ell^*$ among the arcs of $\calA_q$, which can be done in $O(|\calA_q|)$ time using the linear time selection algorithm. Since $|\calA_q|=O(k)$, the total query time is $O(\sqrt{n}\log^{O(1)}n+k)$. 

To delete an arc $\gamma$, we first delete it from $\calD$, which takes $O(\log n)$ amortized time by Lemma~\ref{lem:delonlypoint}. After $n/2$ deletions, we reconstruct the entire data structure from scratch. As such, the amortized deletion time is $O(\log n)$. The lemma thus follows. 
\end{proof}

We have the following lemma for another deletion-only data structure, obtained by following the same algorithmic scheme as \cite[Lemma~6]{ref:BergDy23} and replacing their shallow cutting algorithm for lines with our shallow cutting algorithm for arcs of $\calA$ in Theorem~\ref{theo:shallowcut}. 

\begin{lemma}\label{lem:delonly20}{\em\cite[Lemma~6]{ref:BergDy23}}
For any fixed $r$, there is a data structure of $O(n\log r)$ size to maintain a set $\calA$ of $n$ arcs to support $O(r\log n)$ amortized time deletions and $O(\log r + n/r  +k)$ time $k$-lowest-arcs queries. If a set $\Gamma$ of $n$ arcs is given initially, the data structure can be constructed in $O(n\log n)$ time. 
\end{lemma}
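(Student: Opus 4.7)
The plan is to lift the algorithmic scheme of \cite[Lemma~6]{ref:BergDy23} verbatim, substituting our shallow-cutting construction from Theorem~\ref{theo:shallowcut} for their shallow cuttings for lines/planes. The substitution goes through because Theorem~\ref{theo:shallowcut} furnishes cuttings of $\calA$ with the same asymptotic parameters as those of Chan--Tsakalidis \cite{ref:ChanOp16}: a $k$-shallow $(Ck/n)$-cutting has $O(n/k)$ cells and conflict lists of total size $O(n)$, and a full geometric hierarchy of cuttings is computable in $O(n\log n)$ time. In particular, the cells are bottom-open pseudo-trapezoids with vertical left and right edges, so their $x$-ranges are pairwise disjoint and each cutting can be stored in a balanced BST keyed by $x$, enabling $O(\log r)$ point location along any vertical query line.

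Concretely, for $i = 0, 1, \ldots, \lceil\log r\rceil$ set $k_i = 2^i \lceil n/r\rceil$ and invoke Theorem~\ref{theo:shallowcut} to compute the $k_i$-shallow $(Ck_i/n)$-cutting $\Xi_i$ together with its conflict lists. Each $\Xi_i$ contributes $O(n)$ space, so the full hierarchy consumes $O(n\log r)$ space, matching the stated bound, and all of the cuttings are built in a combined $O(n\log n)$ time by a single call to Theorem~\ref{theo:shallowcut}. In addition, for each arc we keep cross-pointers to every cell whose conflict list contains it; the total size of these cross-pointers is proportional to the sum of conflict-list sizes, i.e.\ $O(n\log r)$.

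A $k$-lowest-arcs query along a vertical line $\ell^*$ proceeds by picking the smallest index $i$ with $k_i \geq k$ (this is $i=0$ whenever $k \leq n/r$), binary-searching $\Xi_i$ for the unique cell $\Delta$ whose $x$-range contains $\ell^*$, and letting $q$ be the intersection of the top edge of $\Delta$ with $\ell^*$. Because the top edge of $\Delta$ lies on the boundary of $L_{\leq k_i}(\calA)$, the point $q$ has level at least $k_i \geq k$, so the $k$ lowest arcs crossing $\ell^*$ all pass at or below $q$ and therefore all lie in $\calA_\Delta$. A linear-time selection inside $\calA_\Delta$ extracts the $k$ lowest, and the runtime is $O(\log r + |\calA_\Delta|) = O(\log r + k_i) = O(\log r + n/r + k)$, as required.

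Deletions follow the scheduled-rebuilding framework of \cite[Lemma~6]{ref:BergDy23}: a deleted arc is unlinked from the conflict lists of every cell that references it via the stored cross-pointers, and after $n/r$ deletions the entire structure is rebuilt from scratch in $O(n\log n)$ time, amortizing to $O(r\log n)$ per deletion. The main thing that has to be verified is that our arc cuttings actually admit the abstract operations Berg--Staals rely on---the cell count, the conflict-list bound, the constructibility in the claimed time, and the vertical-line-friendly cell shape used for point location and correctness of the shallow-cover argument---and each of these is directly guaranteed by Theorem~\ref{theo:shallowcut}, so no further technical obstacle arises beyond carefully transcribing the Berg--Staals bookkeeping for the arc setting.
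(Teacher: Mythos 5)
Your proposal follows exactly the same route as the paper: the paper's entire ``proof'' of Lemma~\ref{lem:delonly20} is the one-sentence remark that the result is ``obtained by following the same algorithmic scheme as \cite[Lemma~6]{ref:BergDy23} and replacing their shallow cutting algorithm for lines with our shallow cutting algorithm for arcs of $\calA$ in Theorem~\ref{theo:shallowcut}.'' Your write-up spells out the Berg--Staals bookkeeping that the paper leaves implicit---a geometric hierarchy of $O(\log r)$ shallow cuttings at levels near $n/r$, $O(\log r)$ point location via disjoint $x$-ranges of the bottom-open pseudo-trapezoids, linear-time selection in the conflict list, cross-pointers for unlinking, and a scheduled rebuild after $n/r$ deletions---and these details are consistent with the claimed $O(n\log r)$ space, $O(\log r + n/r + k)$ query, and $O(r\log n)$ amortized deletion bounds.

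Two small inaccuracies worth flagging, neither of which changes the substance: Theorem~\ref{theo:shallowcut} produces the geometric hierarchy with a base $B$ that is a (large) constant, not necessarily $2$, so the indexing should be by powers of $B$ (or one accepts a constant-factor slack); and the claim that ``the top edge of $\Delta$ lies on the boundary of $L_{\leq k_i}(\calA)$'' is stronger than what the cutting guarantees---the top edge may lie strictly above that boundary---but all that the argument needs is that $q$ has level at least $k_i$, which does hold for the unique cell of the vertical decomposition stabbed by $\ell^*$.
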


\subsubsection{Fully-dynamic data structure for Lemma~\ref{lem:karcquery}}

With the two deletion-only data structures in Lemmas~\ref{lem:delonly} and \ref{lem:delonly20}, we are now in a position to describe our fully dynamic data structure for Lemma~\ref{lem:karcquery}. 

\paragraph{Overview.}
To achieve our result in Lemma~\ref{lem:karcquery}, roughly speaking, we can simply plug our shallow cutting algorithm for $\calA$ in Theorem~\ref{theo:shallowcut} and Lemma~\ref{lem:delonly} into the algorithmic scheme of \cite{ref:BergDy23} or \cite{ref:ChanTh12}. The algorithms of \cite{ref:BergDy23} and \cite{ref:ChanTh12} are similar. For the method of \cite{ref:ChanTh12}, we can just replace their shallow cutting algorithm for lines~\cite{ref:ChanOp16} with our shallow cutting algorithm for $\calA$ in Theorem~\ref{theo:shallowcut} and replace their deletion-only data structure~\cite{ref:MatousekEf92} with a combination of Lemmas~\ref{lem:delonly} and \ref{lem:delonly20}.
In addition, a general technique of querying multiple structures simultaneously from \cite[Theorem~1]{ref:BergDy23} is also needed. For the method of \cite{ref:BergDy23}, it was described for the plane problem in 3D with a query time $O(\log^2 n/\log\log n+k)$. We can follow the same algorithmic scheme but using our shallow cutting algorithm for $\calA$ in Theorem~\ref{theo:shallowcut} and Lemma~\ref{lem:delonly} in the corresponding places. In addition, since our problem is a 2D problem, the technique of dynamic interval trees utilized in \cite{ref:ChanOp16} can be used to reduce the query time component from $O(\log^2 n/\log\log n)$ to $O(\log n)$. In the following, we adapt the method from Chan~\cite{ref:ChanTh12}. 
\medskip

The data structure is an adaptation of the one originally for dynamic 3D convex hulls \cite{ref:ChanA10,ref:ChanDy20,ref:KaplanDy20} (the original idea was given in \cite{ref:ChanA10} and subsequent improvements were made in \cite{ref:ChanDy20,ref:KaplanDy20}). We first give the following lemma. The lemma is similar to \cite[Theorem~3.1]{ref:ChanTh12}, which is based on the result in \cite{ref:ChanA10}, but Lemma~\ref{lem:cuttingproperty} provides slightly better complexities than \cite[Theorem~3.1]{ref:ChanTh12} by using the recently improved result of \cite{ref:ChanDy20}. 

\begin{lemma}\label{lem:cuttingproperty}\emph{\cite{ref:ChanA10,ref:ChanTh12,ref:ChanDy20,ref:KaplanDy20}}
Let $\calA$ be a set of arcs, which initially is $\emptyset$ and undergoes $n$ updates (insertions and deletions). For any $b\geq 2$, we can maintain a collection of shallow cuttings $T_i^{j}$ in the bottom-open pseudo-trapezoid form, $i=1,2,\ldots,\lceil\log n\rceil$, $j=1,2,\ldots,O(\log_bn)$, in $b^{O(1)}\log^5 n$ amortized time per update such that the following properties hold. 
\begin{enumerate}
\item Each cutting $T_i^j$ is of size $O(2^i)$ and 
never changes until it is replaced by a new one created from scratch. The total size of all cuttings created over time is $b^{O(1)}\log^3 n$. 
\item Each cell $\Delta\in T_i^j$ is associated with a list $L_{\Delta}$ of $O(n/2^i)$ arcs of $\calA$. Each list $L_{\Delta}$ undergoes deletions only after its creation. The total size of all such lists created over time is $b^{O(1)}n\log^4 n$. 
\item For any $k\geq 1$, let $i_k=\lceil\log(n/Ck)\rceil$ for a sufficiently large constant $C$. For any vertical line $\ell^*$, if an arc $\gamma\in \calA$ is among the $k$ lowest arcs at $\ell^*$, then there exists some $j$ such that $\gamma$ is in the list $L_{\Delta^j}$ of the cell $\Delta^j\in T_{i_k}^j$ intersecting $\ell^*$. 
\item 
At any moment, for each $i$, the number of cells of the current cuttings $T_i^j$ for all $j$ is $O(2^i)$. This implies that the total size of the lists $L_{\Delta}$ of all cells $\Delta$ of all current cuttings $T_i^j$ at any moment is $O(n\log n)$. 
\end{enumerate}
\end{lemma}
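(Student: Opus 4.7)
The plan is to follow the logarithmic-method framework of Chan~\cite{ref:ChanA10}, as refined by Chan--Tsakalidis~\cite{ref:ChanDy20} and Kaplan et al.~\cite{ref:KaplanDy20} for dynamic lower envelopes of planes (equivalently, dynamic 3D convex hulls), and to replace every invocation of a shallow cutting in that framework by the arc shallow cutting from Theorem~\ref{theo:shallowcut}. Because each arc of $\calA$ is $x$-monotone, has both endpoints on $\ell$, and any two such arcs cross at most once, the arrangement of $\calA$ in $\bbR^-$ is combinatorially a pseudo-line arrangement, and Theorem~\ref{theo:shallowcut} supplies shallow cuttings whose cells are bottom-open pseudo-trapezoids; this is the exact analogue of the planar strips used in the original framework, so the structural invariants carry over essentially verbatim.

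First, for each level $i\in\{1,\ldots,\lceil\log n\rceil\}$ I would set up a logarithmic-method tower indexed by $j=1,\ldots,O(\log_b n)$: the currently inserted arcs of $\calA$ are partitioned into $O(\log_b n)$ groups whose sizes form a geometric sequence with ratio $b$, and for the $j$-th group I maintain the $2^i$-shallow $(C\cdot 2^i/n)$-cutting $T_i^j$ produced by Theorem~\ref{theo:shallowcut}, together with its cell lists $L_\Delta$. Once created, $T_i^j$ is never modified, and each $L_\Delta$ only shrinks as arcs are deleted (propagation is done via a back-pointer list attached to every arc, listing the cells whose $L_\Delta$ currently contains it). Insertions enter the smallest group and cascade upward whenever a size threshold is crossed, at which point the affected cuttings are rebuilt from scratch using Theorem~\ref{theo:shallowcut}. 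Property~1 follows by construction, and a standard geometric-sum argument over the $O(\log n)$ levels and the logarithmic-method rebuild schedule bounds the total number of cells ever created by $b^{O(1)}\log^3 n$.

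Next I would verify Properties~2--4. Property~2 follows because each newly built cell $\Delta$ at level $i$ has $|L_\Delta|=O(n/2^i)$ by Theorem~\ref{theo:shallowcut}, so multiplying this by the per-level rebuild count yields the bound $b^{O(1)}n\log^4 n$. Property~4 is the snapshot counterpart: at any moment the groups at level $i$ partition the current $\calA$, and across $j$ their $T_i^j$ contribute $O(2^i)$ cells in aggregate, so the surviving lists at level $i$ have total size $O(n)$, summing to $O(n\log n)$ over $i$. For Property~3, given a query line $\ell^*$ and an arc $\gamma$ among the $k$ lowest at $\ell^*$, let $j$ be the index of the group to which $\gamma$ currently belongs. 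At the moment $T_{i_k}^j$ was last rebuilt, $\gamma$ was in that group; the choice $i_k=\lceil\log(n/Ck)\rceil$ ensures $\ell^*$ has at most $2^{i_k}$ arcs of $\calA$ below the intersection of $\ell^*$ with the top edge of the cell $\Delta^j\in T_{i_k}^j$ piercing $\ell^*$, so by the shallow-cutting guarantee $\gamma$ appears in the original conflict list of $\Delta^j$, and since $\gamma$ has not been deleted since, $\gamma\in L_{\Delta^j}$.

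The main obstacle I anticipate is the amortized cost analysis. Each update triggers $O(\log_b n)$ rebuild events distributed across the $O(\log n)$ levels; rebuilding a single $T_i^j$ costs $O(N_j\log(N_j/2^i))$ time for the group of size $N_j$ by Theorem~\ref{theo:shallowcut}, plus a logarithmic overhead for inserting each new list entry into the arc's back-pointer structure. Charging each rebuild to the insertions that filled the corresponding group and summing the resulting geometric series exactly as in the original analyses produces the claimed $b^{O(1)}\log^5 n$ amortized bound per update, where the factor $b^{O(1)}$ comes from the group ratio and the $\log^5 n$ factor absorbs the $O(\log n)$ levels, the $O(\log_b n)$ slots per level, and the poly-logarithmic cost of maintaining back-pointer structures for propagating deletions. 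The delicate part is ensuring, as in~\cite{ref:ChanDy20,ref:KaplanDy20}, that the rebuild schedule can be performed in the worst-case bound stated, which requires standard but intricate use of partial rebuilding to spread out large reconstructions over subsequent updates.
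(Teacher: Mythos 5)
Your proposal follows essentially the same route as the paper: instantiate the logarithmic-method / dynamic-shallow-cutting framework of Chan (\cite{ref:ChanA10}, refined in \cite{ref:ChanTh12}, \cite{ref:KaplanDy20}, \cite{ref:ChanDy20}) and replace every shallow-cutting construction for lines with Theorem~\ref{theo:shallowcut} for arcs, using the pseudo-line-like properties of $\calA$ to justify that the structural invariants transfer. The one point the paper makes explicit that your write-up glosses over is why the amortized bound is $b^{O(1)}\log^5 n$ rather than the $b^{O(1)}\log^4 n$ that the refined scheme of \cite{ref:ChanDy20} attains for lines: that refinement relies on an \emph{input-restricted} shallow-cutting subroutine, and the paper deliberately forgoes adapting it to arcs, paying an extra $\log n$ factor by rebuilding each cutting from scratch with the $O(n\log n)$-time algorithm of Theorem~\ref{theo:shallowcut}; your accounting of the $\log^5 n$ as ``absorbing'' various polylog factors would benefit from pinning this down. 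Two smaller slips: \cite{ref:ChanDy20} is by Chan alone (Chan--Tsakalidis is \cite{ref:ChanOp16}), and your remark about achieving a worst-case rebuild bound is unnecessary since the lemma only claims amortized time. None of these affects correctness.
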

\begin{proof}
The lemma follows \cite[Theorem~3.1]{ref:ChanTh12}, which is based on the result in \cite{ref:ChanA10}. Following an observation in \cite{ref:KaplanDy20} along with other observations, the result in \cite{ref:ChanA10} is improved in \cite{ref:ChanDy20}. The complexities of the lemma are obtained by following the same algorithmic scheme of \cite[Theorem~3.1]{ref:ChanTh12} with the improved strategy of \cite{ref:ChanDy20}, which relies on an input-restricted shallow cutting algorithm for lines/planes based on a slight modification of the algorithm in \cite{ref:ChanOp16}. Here, we replace their shallow cutting algorithm by ours in Theorem~\ref{theo:shallowcut} for $\calA$. Everything else is the same. If we follow exactly the time complexities in \cite{ref:ChanDy20}, then the amortized time per update would be $b^{O(1)}\log^4n$. However, here the $b^{O(1)}\log^5n$ bound suffices for our purpose as other parts of the algorithm dominate the overall time complexity. As such, we can afford to compute a single shallow cutting in $O(n\log n)$ time by Theorem~\ref{theo:shallowcut} instead of resorting to the input-restricted shallow cutting algorithm as in \cite{ref:ChanDy20}.
\end{proof}

With Lemma~\ref{lem:cuttingproperty}, we can answer a $k$-lowest-arcs query as follows. Consider a query vertical line $\ell^*$. By Lemma~\ref{lem:cuttingproperty}(3),  for each $j$, we compute the cell $\Delta^j$ of $T_{i_k}^j$ intersecting $\ell^*$, which takes $O(\log n)$ time by binary search as the $x$-projections of $T_{i_k}^j$ partition the $x$-axis into intervals. Then, we use ``brute-force'' to find the $k$ lowest arcs among all arcs in $L_{\Delta^j}$ in $O(k)$ time as $|L_{\Delta^j}|=O(k)$. Finally, among all arcs found above, we return the $k$ lowest arcs, which takes $O(k\log_bn)$ time. As such, the total query time is $O((\log n+k)\log_bn)$. 

\paragraph{An improved query algorithm.}
We now improve the query time. We store each list $L_{\Delta}$ by a deletion-only data structure that supports $k$-lowest-arcs queries. Suppose that such a deletion-only data structure is of space $S_0(|L_{\Delta}|)$, supports each $k$-lowest-arcs query in $O(Q_0(|L_{\Delta}|)+k)$ time and $D_0(|L_{\Delta}|)$ deletion time, and can be built in $O(P_0(|L_{\Delta}|))$ time. Then, by Lemma~\ref{lem:cuttingproperty}(2), each update causes at most $b^{O(1)}\log^4 n$ amortized number of deletions to the lists $L_{\Delta}$, and thus the amortized update time is 
\begin{equation}\label{equ:update}
    U(n) = b^{O(1)} \log^5 n + \max_{\Delta\in T_i^j} D_0(|L_{\Delta}|)\cdot b^{O(1)}\log^4 n + P_0(b^{O(1)}n\log^4 n)/n.
\end{equation}
Note that the last term is obtained due to the following. After every $n$ updates, we reconstruct the entire data structure and thus the reconstruction time is on the order of $\sum_{\Delta\in T_i^j}P_0(|L_{\Delta}|)$, which is bounded by $P_0(b^{O(1)}n\log^4 n)$ since $\sum_{\Delta\in T_i^j}|L_{\Delta}|=b^{O(1)}n\log^4 n$ by Lemma~\ref{lem:cuttingproperty}(2) (assuming that $P_0(n)=\Omega(n)$).

By Lemma~\ref{lem:cuttingproperty}(4), the total space is 
\begin{equation}\label{equ:space}
    S(n) = O\left(\sum_{i=1}^{\log n}2^i\cdot S_0(n/2^i)\right).
\end{equation}

For each query, there are two tasks: (1) Compute the cell $\Delta^j$ for every $j$; (2) find the $k$ lowest arcs of all lists $L_{\Delta^j}$ for all $j$. In the following, we solve the first task in $O(\log n+\log_b n)$ time and solve the second task in $O(\log n + k)$ time. 

For the first task, following the method in \cite{ref:ChanTh12}, for each $i$, we use a dynamic interval tree~\cite{ref:deBergCo08} to store the intervals of the $x$-projections of the cuttings $T_i^j$ for all $j$ in an interval tree $T_i$. Using $T_i$, all $t$ intervals intersecting $\ell^*$ can be computed in $O(\log n + t)$ time. In our problem, $t=O(\log_b n)$. Insertions and deletions of intervals on $T_i$ can be supported in $O(\log n)$ amortized time. We can thus maintain all interval trees $T_i$ in additional amortized $\log n\cdot b^{O(1)}\log^3 n$ time per update as the total size of all cuttings over time is $b^{O(1)}\log^3 n$ by Lemma~\ref{lem:cuttingproperty}(1). This additional update time is subsumed by the first term in \eqref{equ:update}. In this way, the first task can be solved in $O(\log n+\log_b n)$ time. 

For the second task, instead of brute-force, we use the deletion-only data structures for the lists $L_{\Delta^j}$ and resort to a technique of querying multiple $k$-lowest-arcs data structures simultaneously in \cite[Theorem~1]{ref:BergDy23}, which is based on an adaption of the heap selection algorithm of Frederickson~\cite{ref:FredericksonAn93}. Applying \cite[Theorem~1]{ref:BergDy23}, the second task can be accomplished in $O(k+\log_bn \cdot \max_{j}Q_0(|L_{\Delta^j}|))$ time, which is $O(k+\log_bn \cdot Q_0(O(k)))$ since the size of each $|L_{\Delta^j}|$ is $O(k)$. 

Combining the complexities of the first and second tasks, the overall query time is 
\begin{equation}\label{equ:query}
Q(n)=O(\log n+\log_b n+ k) + Q_0(O(k))\cdot \log_bn.
\end{equation}

Let $m=|L_{\Delta}|$. Depending on the value of $m$, we use different deletion-only data structures for $L_{\Delta}$. 
\begin{enumerate}
\item If $m\geq \log^3 n$, then we use Lemma~\ref{lem:delonly} to handle $L_{\Delta}$ with $P_0(m)=O(m\log m)$, $S_0(m)=O(m)$, $D_0(m)=O(\log m)$, $Q_0(m)=m^{1/2}\log^{O(1)} m$. Plugging them into \eqref{equ:update}, \eqref{equ:space}, and \eqref{equ:query} and setting $b=\log^{\epsilon}n$, we obtain $U(n)=O(\log^{5+\epsilon}n)$, $S(n)=O(n\log n)$, and $Q(n)=O(\log n+k+k^{1/2}\log^{O(1)}k\cdot \log n/\log\log n)$. Since $m=O(k)$, we have $k=\Omega(m)$. As $m\geq \log^3 n$, we have $k=\Omega(\log^3 n)$. Therefore, $Q(n)=O(\log n+k)$.

\item 
If $m<\log^3 n$, then we use Lemma~\ref{lem:delonly20} to handle $L_{\Delta}$ by setting $r=\log n/\log\log n$. This results in $P_0(m)=O(m\log m)$, $S_0(m)=O(m\log\log n)$, $D_0(m)=O(\log n\log m/\log\log n)$, $Q_0(m)=O(\log\log n+m\log\log n/\log n)$. Since $m<\log^3 m$, we have $D_0(m)=O(\log n)$. 
Plugging them into \eqref{equ:update} and \eqref{equ:query} and setting $b=\log^{\epsilon}n$, we obtain $U(n)=O(\log^{5+\epsilon}n)$ and 
$Q(n)=O(\log n+k+(\log\log n+k\log\log n/\log n)\cdot \log n/\log\log n)$, which is $O(\log n + k)$. For the space, since $m<\log^3n$, if we plug $S_0(m)=O(m\log\log n)$ into \eqref{equ:space}, we only need to consider those $i$'s such that $n/2^i < \log^3 n$. There are only $O(\log\log n)$ such $i$'s. Therefore, we obtain $S(n)=O(n\log^2\log n)$ for all such $m$'s in this case. 
\end{enumerate}

Combining the above two cases leads to $U(n)=O(\log^{5+\epsilon}n)$, $S(n)=O(n\log n)$, and $Q(n)=O(\log n + k)$. We can actually obtain a better bound for the insertion time. If $P'(n)$ is the preprocessing time for constructing the data structure for a set of $n$ arcs, then the amortized insertion time $I(n)$ is bounded by $I(n)=O(b\log_b n\cdot P'(n)/n)$~\cite{ref:BergDy23,ref:ChanDy20}. According to our above discussion and Lemma~\ref{lem:cuttingproperty}(4), constructing the deletion-only data structures for all lists $L_{\Delta}$ is $O(n\log^2 n)$. The shallow cuttings in Lemma~\ref{lem:cuttingproperty} can be built in $O(n\log^2 n)$ following the method in \cite{ref:ChanDy20} and using our shallow cutting algorithm in Theorem~\ref{theo:shallowcut}. In this way, we can bound the amortized insertion time by $O(b\log_b n\log^2 n)$, which is $O(\log^{3+\epsilon}n)$ with $b=\log^{\epsilon}n$. This proves Lemma~\ref{lem:karcquery}.

\section{Algorithm for shallow cuttings}
\label{sec:algoshallow}
In this section, we prove Theorem~\ref{theo:shallowcut}. We follow the notation in Section~\ref{sec:defshallow}.

As in \cite{ref:ChanOp16}, we use parameter $K=n/r$ instead of $r$. A $k$-shallow $(1/r)$-cutting becomes a $k$-shallow $(K/n)$-cutting and we use a $(k,K)$-shallow cutting to represent it. It has the following properties: (1) $|\calA_{\Delta}|\leq K$ for each cell $\Delta$; (2) the union of all cells covers $L_{\leq k}(\calA)$. Theorem~\ref{theo:shallowcut} says that a $(k,O(k))$-shallow cutting of size $O(n/k)$ in the bottom-open pseudo-trapezoid form can be computed in $O(n\log\frac{n}{k})$ time. 

Following the analysis of Matou\v{s}ek~\cite{ref:MatousekRe92}, we start with the following lemma, which shows the existence of the shallow cutting in {\em the pseudo-trapezoid form}, i.e., each cell is a pseudo-trapezoid but not necessarily bottom-open. 
\begin{lemma}\label{lem:shallowexist10}
For any $k\in [1,n]$, there exists a $(k,O(k))$-shallow cutting of size $O(n/k)$ in the pseudo-trapezoid form. 
\end{lemma}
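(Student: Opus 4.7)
The plan is to apply Matou\v{s}ek's classical random-sampling construction for shallow cuttings~\cite{ref:MatousekRe92}. That construction was originally stated for $n$ lines in the plane, but it transfers verbatim to any pseudo-line family, which is precisely what the arcs of $\calA$ form in $\bbR^-$: each arc is $x$-monotone with both endpoints on $\ell$, and any two arcs cross at most once. Hence the arrangement of $\calA$ in $\bbR^-$ has the same combinatorial structure as a line arrangement, and all standard complexity estimates (number of vertices, edges, and faces; size of the $(\le t)$-level; etc.) carry over, so the Clarkson--Shor framework applies without modification.

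The construction is as follows. Draw a random sample $R\subseteq\calA$ of size $r=cn/k$ for a suitable constant $c$. Let $U_R\subseteq\bbR^-$ denote the union of faces of the arrangement of $R$ that lie at level at most $t_0$ in $R$, where $t_0$ is a sufficiently large constant. Vertically decompose $U_R$ by slicing through every vertex of the arrangement of $R$ inside $U_R$ and through every endpoint on $\ell$ of an arc of $R$. Each resulting cell has two vertical line segments as its left and right edges, and its top/bottom edges are sub-arcs of $R$ (or a segment of $\ell$ on top, when the cell abuts $\ell$). Every cell is therefore a pseudo-trapezoid in the sense of Section~\ref{sec:defshallow}; let $\Xi(R)$ denote the resulting collection.

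I would then invoke the Clarkson--Shor moment analysis to show that, with positive probability over the choice of $R$, the following three properties hold simultaneously: (i)~$|\Xi(R)|=O(t_0 r)=O(n/k)$, by the expected-complexity bound on the $(\le t_0)$-level in a pseudo-line arrangement of size $r$; (ii)~every cell $\Delta\in\Xi(R)$ satisfies $|\calA_\Delta|=O(k)$, via the standard exponential-decay form of the sampling lemma; and (iii)~$L_{\le k}(\calA)\subseteq U_R$, since the expected level in $R$ of any point in $L_{\le k}(\calA)$ is at most $c$, so for $t_0$ large enough the level of such a point exceeds $t_0$ in $R$ only with exponentially small probability. Any sample $R$ realizing all three events then yields the desired $(k,O(k))$-shallow cutting of size $O(n/k)$ in pseudo-trapezoid form.

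The main obstacle is upgrading property~(iii) from a per-point statement into the region-wide containment $L_{\le k}(\calA)\subseteq U_R$. This is handled in the classical way of Matou\v{s}ek: the combinatorial type of $L_{\le k}(\calA)$ with respect to the arrangement of $R$ is determined by the levels-in-$R$ at a polynomial-sized set of vertices of the arrangement of $\calA$ lying on $\partial L_{\le k}(\calA)$, so it suffices to control these finitely many levels, which is achieved by a moment bound on the number of such witnesses whose level in $R$ exceeds $t_0$. Once $t_0$ is fixed large enough to make this expected count strictly less than one, a realization of $R$ in which no witness fails exists, and the corresponding $\Xi(R)$ is the required shallow cutting.
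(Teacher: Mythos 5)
Your proposal goes down exactly the same road as the paper: generalize Matou\v{s}ek's existence proof for shallow cuttings from lines to the arcs of $\calA$ by replacing the canonical triangulation with the vertical (pseudo-trapezoidal) decomposition, and argue that the pseudoline behaviour of the arcs ($x$-monotone, pairwise crossing at most once) lets the Clarkson--Shor / Chazelle--Friedman machinery carry over. Two ingredients that the paper names explicitly are missing from your sketch, and one of them matters.

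The minor one: you refer to ``a polynomial-sized set of vertices on $\partial L_{\leq k}(\calA)$''; the paper makes this precise by invoking Sharir's $O(nk)$ bound on the complexity of the $(\leq k)$-level for a family of $x$-monotone pseudo-arcs \cite{ref:SharirOn91}, which is what the witness count actually needs to be. The substantive gap is in your property~(ii). For a single random sample $R$ of size $r = cn/k$, the exponential-decay form of the sampling lemma only bounds the \emph{expected number} of cells of $\Xi(R)$ whose conflict list exceeds $t\cdot n/r$ by $O\bigl(r\cdot 2^{-\Omega(t)}\bigr)$; for a fixed constant $t$ it does not make the event ``every cell has $|\calA_\Delta|=O(k)$'' occur with positive probability, and unioning this with properties (i) and (iii) does not repair it. Matou\v{s}ek's argument (which the paper follows via his Lemmas~2.3, 2.4 and Theorem~2.1) handles this by the Chazelle--Friedman refinement: every overloaded cell is subdivided by a secondary standard $(1/t)$-cutting, and the moment bound keeps the total number of cells after refinement at $O(n/k)$. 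That refinement in turn requires the existence of an ordinary $(1/t)$-cutting of size $O(t^2)$ in pseudo-trapezoid form for an arbitrary subset of $\calA$, which the paper cites from \cite{ref:AgarwalPs05,ref:WangUn23}. You should add both the refinement step and this auxiliary cutting existence result; as written, the ``three events hold simultaneously for some single sample $R$'' claim does not close.
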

\begin{proof}
We can basically follow the proof of \cite[Theorem~2.1]{ref:MatousekRe92}, which uses the random sampling techniques of Chazelle and Friedman~\cite{ref:ChazelleA90}. The proof of \cite{ref:MatousekRe92} is originally for lines in 2D or for hyperplanes in the high-dimensional space. We can generalizes the 2D analysis to arcs of $\calA$. A {\em pseudo-trapezoidal decomposition} for a subset $\calA'\subseteq \calA$ of arcs is to draw a vertical line through each vertex of the arrangement of $\calA'$ until the line hits an arc above the vertex and also hits an arc below the vertex. Each cell of the decomposition is thus a pseudo-trapezoid. We use pseudo-trapezoidal decomposition to replace canonical triangulation for lines used in \cite{ref:MatousekRe92} and then similar properties to those in \cite[Lemma~2.3]{ref:MatousekRe92} still hold. Then, we can follow similar analysis to obtain \cite[Lemma~2.4]{ref:MatousekRe92}, which relies on \cite[Lemma~2.3]{ref:MatousekRe92}. Consequently, \cite[Theorem~2.1]{ref:MatousekRe92} can be proved by using \cite[Lemma~2.4]{ref:MatousekRe92}. In the proof, instead of using canonical triangulation for lines, we use pseudo-trapezoidal decomposition for arcs of $\calA$. The proof also relies on the existence of a standard $(1/t)$-cutting of size $O(t^2)$ in the pseudo-trapezoid form for a subset of arcs of $\calA$. It is already known that such a cutting exists~\cite{ref:AgarwalPs05,ref:WangUn23}. In addition, the proof needs an $O(nk)$ bound on the number of vertices of $L_{\leq k}(\calA)$. Such a bound holds according to the result of Sharir~\cite{ref:SharirOn91}, since each arc of $\calA$ is $x$-monotone and every two arcs intersect at most once. 
Therefore, following the analysis of \cite{ref:MatousekRe92}, the lemma can be proved. 
\end{proof}

Chan and Tsakalidis~\cite{ref:ChanOp16} introduced a vertex form of the shallow cutting for lines. Here for arcs of $\calA$, using vertices is not sufficient. We will introduce a vertex-segment form in Section~\ref{sec:versegform}. The definition requires a concept, which we call {\em line-separated $\alpha$-hulls} and is discussed in Section~\ref{sec:alphahull}. In Section~\ref{subsec:algoshallowcut}, we present our algorithm to compute shallow cuttings in the vertex-segment form. 

\subsection{Line-separated $\alpha$-hulls}
\label{sec:alphahull}
The line-separated $\alpha$-hull is an extension of the $\alpha$-hull introduced in \cite{ref:EdelsbrunnerOn83}. 
In \cite{ref:EdelsbrunnerOn83}, $\alpha$-hull is considered for all values $\alpha\in (-\infty,\infty)$. For our problem, we only consider the value $\alpha=-1$.

Let $Q$ be a set of points in $\bbR^-$. We define the {\em line-separated $\alpha$-hull} $H_{\ell}(Q)$ of $Q$ with respect to the $x$-axis $\ell$ as the complement of the union of all unit disks with centers in $\bbR^+$ that do not contain any point of $Q$ (so the disk centers and the points of $Q$ are separated by $\ell$, which is why we use ``line-separated''; see Fig.~\ref{fig:linehull}).

\begin{figure}[t]
\begin{minipage}[t]{\linewidth}
\begin{center}
\includegraphics[totalheight=1.0in]{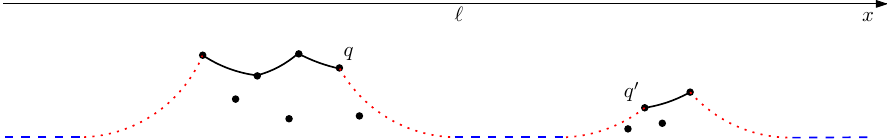}
\caption{Illustration the boundary of $H_{\ell}(Q)$, where $Q$ is the set of points below the $x$-axis $\ell$. It consists of three (blue) dashed horizontal line segments of $y$-coordinates $-1$, four (red) dotted $\bbR^+$-constrained arcs with centers on  $\ell$, and four other solid $\bbR^+$-constrained arcs. The region below the boundary is $H_{\ell}(Q)$.}
\label{fig:linehull}
\end{center}
\end{minipage}
\end{figure}

Many of the properties of the $\alpha$-hulls~\cite{ref:EdelsbrunnerOn83} can be extended to the line-separated case. We list some of these in the following observation; the proof is a straightforward extension of that in \cite{ref:EdelsbrunnerOn83} by adding the ``line-separated'' constraint. 

\begin{observation}\label{obser:hull}
 \begin{enumerate}
    \item $Q\subseteq H_{\ell}(Q)$, and for any subset $Q'\subseteq Q$, $H_{\ell}(Q')\subseteq H_{\ell}(Q)$. 
     \item A point $q\in Q$ is a vertex of $H_{\ell}(Q)$ if and only if there exists a unit disk with center in $\bbR^+$ and its boundary containing $q$ such that the interior of the disk does not contain any point of $Q$. 
     \item If there is an $\bbR^+$-constrained arc connecting two points of $Q$ such that the interior of the underlying disk of the arc does not contain any point of $Q$, then the arc is an edge of $H_{\ell}(Q)$.     
 \end{enumerate}   
\end{observation}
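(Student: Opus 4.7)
The plan is to adapt the classical arguments of Edelsbrunner, Kirkpatrick, and Seidel~\cite{ref:EdelsbrunnerOn83} for the standard $\alpha$-hull. The only new ingredient is the ``line-separated'' constraint that disk centers lie in $\bbR^+$; this merely restricts the family of admissible empty disks, but it preserves the set-theoretic monotonicity and the local-geometric arguments that drive the original proofs. So all three claims will be near-immediate translations of known $\alpha$-hull facts, and I expect the whole proof to be short.

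For part (1), both containments are set-theoretic. If $q\in Q$, then $q$ cannot lie in any open unit disk $D$ with center in $\bbR^+$ that is empty of $Q$ (otherwise $q\in D\cap Q\neq\emptyset$), so $q$ lies in the complement of the union of such disks, which by definition is $H_{\ell}(Q)$; hence $Q\subseteq H_{\ell}(Q)$. For monotonicity, any unit disk avoiding $Q$ automatically avoids any subset $Q'\subseteq Q$, so the family of empty disks grows when we pass from $Q$ to $Q'$, the union grows, and its complement shrinks, giving $H_{\ell}(Q')\subseteq H_{\ell}(Q)$.

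For part (2), the ``if'' direction follows because the witnessing disk $D$ certifies that $q\in \partial H_{\ell}(Q)$: the interior of $D$ is outside $H_{\ell}(Q)$, while $q\in H_{\ell}(Q)$ by part (1), and inspecting how $\partial H_{\ell}(Q)$ bends at $q$ (namely, as the intersection of $\partial D$ with a neighboring empty disk's boundary, or with $\ell$) shows $q$ to be a corner. The ``only if'' direction uses a limiting argument: if $q\in Q$ is a vertex of $H_{\ell}(Q)$, then arbitrarily close to $q$ there exist empty disks with centers in $\bbR^+$ whose boundaries approach $q$; by compactness applied to the one-parameter family of unit disks through $q$ with centers in $\bbR^+$, we extract a limit disk $D$ with $q\in\partial D$, and emptiness of $Q$ in the interior is preserved in the limit since $Q$ is a finite point set.

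The main obstacle is part (3). Given the $\bbR^+$-constrained arc $\gamma$ with endpoints $p_1,p_2\in Q$ on $\partial D$ and an empty underlying disk $D$, I must show that no point $x$ in the relative interior of $\gamma$ lies in the interior of another empty unit disk $D'$ with center in $\bbR^+$. The argument exploits the fact that two distinct unit circles meet in at most two points: if such a $D'$ existed, then $\partial D'$ would cross $\partial D$ at exactly two points, and the local geometry around $x$ on $\gamma$ (together with the fact that both centers lie in $\bbR^+$) forces the open arc of $\partial D$ strictly between the two crossing points to lie inside $D'$; since $p_1$ or $p_2$ must be on this strictly-inside arc (or $D'$ would have to cross $\gamma$ away from $x$, a case handled symmetrically), this contradicts the assumed emptiness of $D'$. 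Hence every interior point of $\gamma$ lies on $\partial H_{\ell}(Q)$, and since the two endpoints are vertices by part (2), $\gamma$ is an edge of $H_{\ell}(Q)$. This local two-circle analysis is the only place one needs to be careful; everything else follows the classical $\alpha$-hull template of~\cite{ref:EdelsbrunnerOn83}.
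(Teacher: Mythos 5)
Your overall approach---adapting the classical $\alpha$-hull arguments of Edelsbrunner, Kirkpatrick, and Seidel~\cite{ref:EdelsbrunnerOn83} by restricting to disks with centers in $\bbR^+$---is exactly what the paper does, and parts (1) and (2) are fine in spirit (though in the ``only if'' direction of (2) you should note that the limiting center lands in the closure of $\bbR^+$, and in the ``if'' direction you implicitly assume $q$ is not in the relative interior of an edge, a degenerate alignment).

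The argument for part (3), however, has a real gap in the very step you flag as the crux. You assert that the arc of $\partial D$ lying inside $D'$ ``must'' contain $p_1$ or $p_2$, and dismiss the alternative---both intersection points of $\partial D\cap\partial D'$ falling strictly between $p_1$ and $p_2$ on $\gamma$---as ``a case handled symmetrically.'' There is no symmetry that disposes of that case. What actually rules it out is a fact specific to this line-separated setting: if $c,c'\in\bbR^+$ are the centers of two distinct unit circles, their two intersection points are symmetric about the midpoint $M$ of $cc'$, and since $y(M)>0$, at most one of the two intersection points lies in $\bbR^-$. Equivalently, the two $\bbR^+$-constrained arcs $\partial D\cap\bbR^-$ and $\partial D'\cap\bbR^-$ cross at most once---the exact analogue of two lines meeting at most once, and the property underpinning the whole lower-envelope/$\alpha$-hull duality in the paper. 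With it, the proof of (3) is clean: travelling from $x$ along $\partial D\cap\bbR^-$ toward $p_1$ and toward $p_2$, each direction must either hit a crossing of $\partial D'$ in $\bbR^-$ or arrive at a point of $Q$ strictly inside $D'$; two distinct crossings below $\ell$ are impossible, so $p_1$ or $p_2$ lies inside $D'$, contradicting emptiness. You should invoke this single-crossing property explicitly rather than gesture at a nonexistent symmetry.
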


For any two points $q,q'\in \bbR^-$ that can be covered by a $\bbR^+$-constrained unit disk, there exists a unique $\bbR^+$-constrained arc that connects $q$ and $q'$; we use $\gamma(q,q')$ to denote that arc.



\subsubsection{Algorithm for computing $H_{\ell}(Q)$}
By slightly modifying the algorithm of \cite{ref:EdelsbrunnerOn83}, $H_{\ell}(Q)$ can be computed in $O(m\log m)$ time, where $m=|Q|$. The algorithm also suggests that $\partial H_{\ell}(Q)$ is $x$-monotone. 
In the following, assuming that the points of $Q$ are already sorted from left to right as $q_1,q_2,\ldots,q_m$, we give a linear time algorithm to compute $H_{\ell}(Q)$, which is similar in spirit to Graham's scan for computing convex hulls. 

\paragraph{Irrelevant points.}
Note that if a point $q\in Q$ whose $y$-coordinate is smaller than or equal to $-1$, then $q$ must be in $H_{\ell}(Q)$ because every $\bbR^+$-constrained disk does not contain $q$ in the interior. Hence, in that case $q$ is {\em irrelevant} for computing $H_{\ell}(Q)$ and thus can be ignored. If all points of $Q$ are irrelevant, then $H_{\ell}(Q)$ is simply the region below the horizontal line whose $y$-coordinate is $-1$. In the following, we assume that every point of $Q$ is relevant. 


\paragraph{Wings.}
Consider a point $q\in Q$. Let $a$ be a point on the $x$-axis $\ell$ with $x(a)<x(q)$ such that $q$ is on the unit circle $C_a$ centered at $a$. Let $p$ be the lowest point of $C_a$. We define the {\em left wing} of $q$ to be the concatenation of the following two parts (see Fig.~\ref{fig:wing}): (1) the arc of $C_a \cap R^-$ between $q$ and $p$, called the {\em left wing arc}, and the horizontal half-line with $p$ as the right endpoint, called the {\em left wing half-line}. The point $p$ is called the {\em left wing vertex} of $q$. We define the {\em right wing} of $q$ and the corresponding concepts symmetrically. The left and right wings together actually form the boundary of the line-separated $\alpha$-hull of $\{q\}$.  In Fig.~\ref{fig:linehull}, the four (red) dotted arcs are wing arcs and the three (blue) dashed segments are on wing half-lines. 

\begin{figure}[t]
\begin{minipage}[t]{\linewidth}
\begin{center}
\includegraphics[totalheight=1.0in]{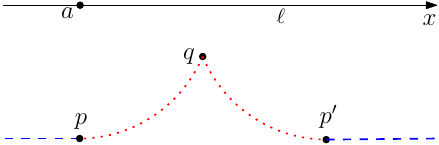}
\caption{Illustration the wings of the a point $q$. The two (red) dotted curves are wing arcs and the two (blue) dashed segments are wing half-lines. $p$ and $p'$ are the left and right wing vertices, respectively.}
\label{fig:wing}
\end{center}
\end{minipage}
\end{figure}

\paragraph{Far-away position.}
Consider two points $q,q'\in \bbR^-$ such that $x(q)<x(q')$. We say that $(q,q')$ are in {\em far-away} position if $x(p)<x(p')$ holds, where $p$ is the right wing vertex of $q$ and $p'$ is the left wing vertex of $q'$ (see Fig.~\ref{fig:farawary}). 
In this case, $q'$ is above the right wing of $q$ and there is no $\bbR^+$-constrained unit disk covering both $q$ and $q'$. 
We use $\beta(q,q')$ to denote the concatenation of the right wing arc of $q$, the segment $\overline{pp'}$, and the left wing arc of $q'$. In fact, the left wing of $q$, $\beta(q,q')$, and the right wing of $q'$ together form the boundary of the line-separated $\alpha$-hull of $\{q,q'\}$. In Fig.~\ref{fig:linehull}, the two points $q$ and $q'$ are also in far-away position. 

\begin{figure}[t]
\begin{minipage}[t]{\linewidth}
\begin{center}
\includegraphics[totalheight=1.0in]{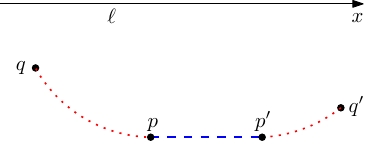}
\caption{Illustration two points $q$ and $q'$ that are in far-away position. The two (red) dotted arcs and the (blue) dashed segments in between constitute $\beta(q,q')$.}
\label{fig:farawary}
\end{center}
\end{minipage}
\end{figure}

\paragraph{The algorithm.}
Define $Q_i=\{q_1,\ldots,q_i\}$ for each $1\leq i\leq m$. Our algorithm handles the points of $Q$ incrementally from $q_1$ to $q_m$. For each $q_i$, the algorithm computes $H_{\ell}(Q_i)$ by updating $H_{\ell}(Q_{i-1})$ with $q_i$.  
Suppose that $q_{i_1},q_{i_2},\ldots,q_{i_t}$ are the points of $Q_i$ that are the vertices of $H_{\ell}(Q_{i})$ sorted from left to right. Then, our algorithm maintains the following invariant: the boundary $\partial H_{\ell}(Q_{i})$ is $x$-monotone and consists of the following parts from left to right: the left wing of $q_{i_1}$, $\gamma(q_{i_j},q_{i_{j+1}})$ or $\beta(q_{i_j},q_{i_{j+1}})$, for each $j=1,2,\ldots,t-1$ in order, and the right wing of $q_{i_t}$. 
$H_{\ell}(Q_i)$ is the region below $\partial H_{\ell}(Q_{i})$.


Initially, for $q_1$, we set $\partial H_{\ell}(Q_{1})$ to the concatenation of the left wing and the right wing of $q_1$. 
In general, suppose we already have $\partial H_{\ell}(Q_{i-1})$. We compute $\partial H_{\ell}(Q_{i})$ as follows. We process the points of $q_{i_1},q_{i_2},\ldots,q_{i_t}$ in the backward order. For ease of exposition, we assume that $t>1$; the special case $t=1$ can be easily handled.

We first process the point $q_{i_t}$.  
If $q_{i_t}$ and $q_i$ are in the far-away position, then we delete the right wing of $q_{i_t}$ from $H_{\ell}(Q_{i-1})$ and add $\beta(q_{i_t},q_i)$ and the right wing of $q_i$. This finishes computing $H_{\ell}(Q_i)$. Below, we assume that $q_{i_t}$ and $q_i$ are not in the far-away position. 

If $q_i$ is below the right wing of $q_{i_t}$, then $q_i$ is inside $H_{\ell}(Q_{i-1})$. In this case, $H_{\ell}(Q_{i})$ is $H_{\ell}(Q_{i-1})$ and we are done. If $q_i$ is above the right wing of $q_{i_t}$, then we further check whether the arc $\gamma(q_{i_t},q_i)$ exists (which is true if and only if there exists an $\bbR^+$-constrained unit disk covering both $q_{i_t}$ and $q_i$). 

\begin{itemize}
    \item If $\gamma(q_{i_t},q_i)$ does not exist (in this case $q_{i_t}$ must be below the left wing of $q_i$ and thus $q_{i_t}$ does not contribute to $H_{\ell}(Q_i)$ because it is ``dominated'' by $q_i$), then we ``prune'' $q_{i_t}$ from $\partial H_{\ell}(Q_{i-1})$, i.e., delete the right wing of $q_{i_t}$ and also delete $\gamma(q_{i_{t-1}},q_{i_t})$ or $\beta(q_{i_{t-1}},q_{i_t})$ whichever exists in $H_{\ell}(Q_{i-1})$. Next, we process $q_{i_{t-1}}$ following the same algorithm. 
    \item If $\gamma(q_{i_t},q_i)$ exists, 
    then we further check whether $D$ contains $q_{i_{t-1}}$, where $D$ is the underlying disk of $\gamma(q_{i_t},q_i)$. If $q_{i_{t-1}}\in D$, then $q_{i_{t}}$ must be in $H_{\ell}(\{q_{i_{t-1}},q_i\})$ and thus does not contribute to $H_{\ell}(Q_i)$. In this case, we prune $q_{i_t}$ as above and continue processing $q_{i_{t-1}}$. If $q_{i_{t-1}}\not\in D$, we delete the right wing of $q_{i_t}$ from $\partial H_{\ell}(Q_{i-1})$ and add the arc $\gamma(q_{i_t},q_i)$ and the right wing of $q_i$; this finishes computing $H_{\ell}(Q_i)$. 
\end{itemize}

Clearly, the runtime for computing $H_{\ell}(Q_i)$ is $O(1+t')$, where $t'$ is the number of points of $q_{i_1},q_{i_2},\ldots,q_{i_t}$ pruned from $H_{\ell}(Q_{i-1})$. 
The overall algorithm for computing $H_{\ell}(Q)$ takes $O(m)$ time since once a point is pruned it will never appear on the hull again, which resembles Graham's scan for computing convex hulls. 

\subsubsection{Vertical decompositions}

According to the above discussion, $H_{\ell}(Q)$ has at most $5m$ vertices with $m=|Q|$, including all wing vertices. The vertical downward rays from all vertices partition $H_{\ell}(Q)$ into at most $5m$ bottom-open pseudo-trapezoids and rectangles. We call this partition the {\em vertical decomposition} of $H_{\ell}(Q)$, denoted by $\vd(Q)$. 

In our later discussion, we need to combine $Q$ with a set $S$ of pairwise disjoint segments on $\ell$ whose endpoints are all in $Q$. For each segment $s\in S$, we draw a vertical downward ray from each endpoint of $s$; let $R(s)$ denote the bottom-open rectangular region bounded by the two rays and $s$. The regions $R(s)$ for all segments $s\in S$ form the {\em vertical decomposition} of $S$, denoted by $\vd(S)$. 

We combine  $\vd(Q)$ and $\vd(S)$ to form a vertical decomposition of $Q$ and $S$, denoted by $\vd(Q\cup S)$ as follows. Let $U$ be the upper envelope of $H_{\ell}(Q)$ and $S$. We draw a vertical downward ray from each vertex of $v$ of $U$. These rays divide the region below $U$ into cells, each of which is bounded by two vertical rays from left and right, and bounded from above by a line segment or an $\bbR^+$-constrained arc. 
These cells together form the vertical decomposition $\vd(Q\cup S)$. 

In the following, depending on the context, $\vd(Q)$ may refer to the region covered by all cells of it; the same applies to $\vd(S)$ and $\vd(Q\cup S)$. As such, we have $\vd(Q\cup S)=\vd(Q)\cup \vd(S)$. Note that since the endpoints of all segments of $S$ are in $Q$ and on $\ell$, the boundary $\partial \vd(Q\cup S)$ is $x$-monotone. 

\subsection{Shallow cuttings in the vertex-segment form}
\label{sec:versegform}
We introduce a vertex-segment form of the shallow cutting. Given parameters $k,K\in [1,n]$ with $k\leq K$, a $(k,K)$-shallow cutting for the arcs of $\calA$ in the {\em vertex-segment} form is a set $Q$ of points in $\bbR^-$ along with a set $S$ of interior pairwise-disjoint segments on $\ell$ such that the following conditions hold:
\begin{enumerate}
    \item The endpoints of all segments of $S$ are in $Q$. 
    \item Every point of $Q$ has level at most $K$ in $\calA$.
    \item Every segment of $S$ intersects at most $K$ arcs of $\calA$.  
    \item $\vd(Q\cup S)$ covers $L_{\leq k}(\calA)$.
\end{enumerate}
The {\em conflict list} of a point $q \in Q$, denoted by $\calA_q$, is the set of arcs of $\calA$ below $q$. Note that $|\calA_q|\leq K$ as the level of $q$ is at most $K$. The {\em conflict list} of a segment $s\in S$, denoted by $\calA_s$, is the set of arcs intersecting $s$. The conflict lists of $(Q,S)$ refer to the conflict lists of all points of $Q$ and all segments of $S$. The {\em size} of the cutting is defined to be $|Q|$. Observe that since the endpoints of all segments of $S$ are in $Q$ and the segments of $S$ are interior pairwise-disjoint, we have $|S|<|Q|$. Therefore, $\vd(Q\cup S)$ has $O(|Q|)$ cells, and more specifically, at most $5|Q|$ cells. Further, we have following observation. 

\begin{observation}\label{obser:cellconlist}
Suppose that $(Q,S)$ is a $(k,K)$-shallow cutting for $\calA$ in the vertex-segment form. Then every cell of $\vd(Q\cup S)$ intersects at most $3K$ arcs of $\calA$. 
\end{observation}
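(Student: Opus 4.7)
The plan is to fix an arbitrary cell $\Delta\in\vd(Q\cup S)$ and exhibit three subsets of $\calA$, each of size at most $K$, whose union covers every arc meeting $\Delta$. Recall that $\Delta$ is bounded above by a piece $e$ of the upper envelope $U$ of $\partial H_\ell(Q)$ and $S$, and on the sides by vertical rays descending from two vertices $v_L,v_R$ of $U$.

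The first two subsets are the conflict lists $\calA_{v_L}$ and $\calA_{v_R}$, which capture all arcs crossing the vertical side edges: an arc $\gamma$ crosses the left edge iff it lies strictly below $v_L$ at $x=x(v_L)$, i.e., iff $\gamma\in\calA_{v_L}$, and analogously for the right. Every vertex of $U$ is either a point of $Q$, yielding $|\calA_{v_L}|\le K$ by condition 2, or a wing vertex at $y=-1$, yielding $\calA_{v_L}=\emptyset$ because every arc of $\calA$ lies strictly above $y=-1$ (its underlying disk has center strictly above $\ell$). The third subset will handle arcs meeting the top edge $e$ or lying entirely inside $\Delta$, and its identity depends on the type of $e$. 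If $e$ is a subsegment of some $s\in S$, any such arc has an endpoint on $s$ and hence lies in $\calA_s$, of size at most $K$ by condition 3. If $e$ lies at $y=-1$ (part of a wing half-line or of a $\beta$-curve), no arc of $\calA$ meets $e$ or the cell at all.

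The remaining case, where $e$ is an $\bbR^+$-constrained arc on $\partial H_\ell(Q)$, is the geometric heart of the proof; here we show that arcs meeting $e$ are already contained in $\calA_{v_L}\cup\calA_{v_R}$, so no new set is needed. Let $D^*$ be the underlying disk of $e$ and $D$ the underlying disk of a candidate $\gamma\in\calA$. The two points of $\partial D\cap\partial D^*$ are symmetric about the line through the two centers, and their mean $y$-coordinate equals the mean of the two centers' $y$-coordinates. Since the center of $D$ has $y>0$ and that of $D^*$ has $y\ge 0$, this mean is strictly positive, so at most one of the two intersection points lies in $\bbR^-$. Hence $\gamma$ meets $e$ in at most one point $p$. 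At $p$ the arc $e$ crosses $\partial D$, so one of the two subarcs of $e$ from $p$ to an endpoint lies inside $D$; in particular, one of $v_L,v_R$ lies in the interior of $D$ (a wing vertex at $y=-1$ cannot, since no unit disk with center above $\ell$ reaches $y=-1$). A short algebraic equivalence shows $q\in\text{int}(D)\iff\gamma\in\calA_q$, so $\gamma\in\calA_{v_L}\cup\calA_{v_R}$ as required.

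Combining the three bounds gives $|\{\gamma\in\calA:\gamma\cap\Delta\ne\emptyset\}|\le 3K$ for every cell $\Delta$. The main obstacle is the symmetry argument that forces $|\partial D\cap\partial D^*\cap\bbR^-|\le 1$, which reduces to a one-line computation using the perpendicular bisector of the two disk centers.
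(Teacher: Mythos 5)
Your proof is correct and follows essentially the same route as the paper: a case analysis on the type of the top edge $e$ of the cell, bounding the arcs meeting the cell by the conflict lists of the two vertices flanking $e$, together with the conflict list of $e$ itself when $e$ is a segment of $S$. You also supply a clean geometric justification (two unit circles with centers at $y\ge 0$ meet in at most one point below $\ell$, so an arc crossing an $\bbR^+$-constrained top edge has one of that edge's endpoints in the interior of its underlying disk) for a step that the paper's proof asserts without elaboration in the arc and wing-arc cases.
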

\begin{proof}
Consider a cell $\Delta$ of $\vd(Q\cup S)$. 
Let $e$ be the top edge of $\Delta$. By the definition of $\vd(Q\cup S)$, $e$ is one of the following: a segment of $S$, an arc $\gamma(q,q')$ for two points $q,q'\in Q$, a wing arc of a point $q\in Q$, and a segment of a wing half-line of a point $q\in Q$. 
Below, we argue $|\calA_{\Delta}|\leq 3K$ for each of these cases, where $\calA_{\Delta}$ is the set of arcs of $\calA$ intersecting $\Delta$. 

\begin{enumerate}
    \item If $e$ a segment of $S$, then recall that $|\calA_e|\leq K$. Also, the size of the conflict list of each endpoint of $e$ is at most $K$. For any arc $\gamma\in \calA$ intersecting $\Delta$, since the center of $\gamma$ is in $\bbR^+$, $\gamma$ must either intersect $e$ or in the conflict list of at least one endpoint of $e$. Therefore, $|\calA_{\Delta}|\leq 3K$ holds. 
    
    \item If $e$ is an arc $\gamma(q,q')$ for two points $q,q'\in Q$, then any arc $\gamma\in \calA$ intersecting $\Delta$ must be in the conflict list of one of $q$ and $q'$. Hence, we have $|\calA_{\Delta}|\leq 2K$.

    \item If $e$ is a wing arc $\gamma$ of a point $q\in Q$, then $q$ is an endpoint of $\gamma$. Let $p$ be the other endpoint of $\gamma$. By definition, the $y$-coordinate of $p$ is $-1$. Thus, no arc of $\calA$ is below $p$. By definition, the radius of $\gamma$ is $1$ and the center of $\gamma$ is on $\ell$. Hence, any arc of $\calA$ intersecting $\Delta$ must be in the conflict list of $q$ and thus $|\calA_{\Delta}|\leq |\calA_q|\leq K$.

    \item If $e$ is a segment $s$ of a wing half-line of a point $q\in Q$, then by definition $e$ is horizontal and has $y$-coordinate equal to $-1$. As centers of all arcs of $\calA$ are in $\bbR^+$, no arc of $\calA$ can intersect $\Delta$. Hence, $|\calA_{\Delta}|= 0$.   
\end{enumerate}
Combining all the above cases leads to $|\calA_{\Delta}|\leq 3K$. 
\end{proof}

In the next two lemmas, we show that shallow cuttings in the vertex-segment form and in the pseudo-trapezoid form can be transformed to each other. 

\begin{lemma}\label{lem:regulartovertex}
A $(k,K)$-shallow cutting of size $t$ in the pseudo-trapezoid form can be transformed into a $(k,k+K)$-shallow cutting in the vertex-segment form of size $O(t)$.     
\end{lemma}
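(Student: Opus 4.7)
The plan is to define $Q$ as the set of all four corners of every cell in the input pseudo-trapezoid cutting $\Xi$, and $S$ as the set of top edges of $\Xi$ that happen to be segments on $\ell$. Since $|\Xi| = t$ and each cell contributes at most four corners, $|Q| = O(t)$. I also assume without loss of generality that every $\Delta \in \Xi$ contains at least one point of $L_{\leq k}(\calA)$; otherwise $\Delta$ can be dropped from $\Xi$ without violating the hypothesis that $\Xi$ covers $L_{\leq k}(\calA)$.

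Conditions~1, 2, and 4 of the vertex-segment definition are immediate. Endpoints of every $s \in S$ are top corners of the corresponding cell and hence lie in $Q$; segments in $S$ are pairwise interior-disjoint because the cells of $\Xi$ are; and any arc of $\calA$ meeting a top segment $s$ of a cell $\Delta$ has one of its two endpoints on $s \subseteq \ell$ and therefore enters the interior of $\Delta$ just below $\ell$, so it lies in $\calA_\Delta$, giving $|\calA_s| \leq |\calA_\Delta| \leq K \leq k+K$. For Condition~3, any corner $v$ of a cell $\Delta$ satisfies $\mathrm{level}(v) \leq (\min\text{-level in }\Delta) + K \leq k + K$ by the standard shallow-cutting argument: moving within $\Delta$ between a minimum-level point and $v$ can only cross arcs of $\calA_\Delta$.

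The key step is the covering property $\vd(Q \cup S) \supseteq L_{\leq k}(\calA)$. Take $p \in L_{\leq k}(\calA)$, and let $\Delta \in \Xi$ be a cell containing $p$. If $\Delta$'s top is a segment $s$ on $\ell$, then $s \in S$, $p$ lies directly below $s$, and so $p \in R(s) \subseteq \vd(Q \cup S)$. Otherwise $\Delta$'s top is an $\bbR^+$-constrained arc $\alpha$ with endpoints $r_1, r_2$, and its bottom is an $\bbR^+$-constrained arc $\gamma$ with endpoints $q_1, q_2$; by construction $\{q_1, q_2, r_1, r_2\} \subseteq Q$. Let $Q' = \{q_1, q_2, r_1, r_2\}$. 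Within its $x$-range the set $D_\alpha \cap \bbR^-$ is exactly the region lying above $\alpha$ and below $\ell$, so $q_1, q_2$ (which lie on $\gamma$, strictly below $\alpha$) are outside $D_\alpha$, and $r_1, r_2 \in \partial D_\alpha$ are not in its interior. Hence the interior of $D_\alpha$ is disjoint from $Q'$, and Observation~\ref{obser:hull}(3) implies that $\alpha$ is an edge of $H_\ell(Q')$. Since $\partial H_\ell(Q')$ is $x$-monotone, it coincides with $\alpha$ throughout the $x$-range of $\Delta$, so $\Delta \subseteq H_\ell(Q')$. By the monotonicity given in Observation~\ref{obser:hull}(1), $H_\ell(Q') \subseteq H_\ell(Q) = \vd(Q)$, and therefore $p \in \Delta \subseteq \vd(Q) \subseteq \vd(Q \cup S)$.

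The main obstacle is the second case of the covering argument: one must ensure that the top arc $\alpha$ of $\Delta$ truly appears on the boundary of some appropriate line-separated $\alpha$-hull. Observation~\ref{obser:hull}(3) takes care of this for the restricted set $Q'$ as soon as one verifies that $D_\alpha$ is empty with respect to $Q'$, which follows from the vertical ordering of $\gamma$ below $\alpha$ within $\Delta$. The transfer from $H_\ell(Q')$ to $H_\ell(Q)$ is then free by monotonicity, and the remaining conditions and the $O(t)$ size bound are routine bookkeeping.
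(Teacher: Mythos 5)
Your proof is correct and follows essentially the same route as the paper: take $Q$ to be the cell vertices of $\Xi$, take $S$ to be the top edges lying on $\ell$, verify the level and intersection bounds by the standard ``walk within a cell crossing at most $K$ arcs'' argument, and establish coverage via the observation that each cell with a circular top arc sits inside the line-separated $\alpha$-hull of a subset of $Q$, which is in turn contained in $H_\ell(Q)$ by monotonicity.

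The only place you diverge slightly is the covering step for cells with arc tops. You invoke Observation~\ref{obser:hull}(3) with a four-point set $Q'=\{q_1,q_2,r_1,r_2\}$ (both top and bottom corners) to certify that the top arc is an edge of $H_\ell(Q')$, then pass to $H_\ell(Q)$. The paper argues more directly that the pseudo-trapezoid $\Delta$ is already contained in $H_\ell(\{r_1,r_2\})$ using only the two endpoints of the top arc (essentially because $H_\ell(\{r_1,r_2\})$ is precisely the region below the top arc flanked by wings). The bottom corners are unnecessary for the argument (a bottom-open cell would have no such corners yet still be covered), so the paper's version is marginally cleaner, but your reasoning is sound.
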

\begin{proof}
Let $\Xi$ be a $(k,K)$-shallow cutting of size $t$ in the pseudo-trapezoid form. Without loss of generality, we assume that all cells of $\Xi$ intersect $L_{\leq k}(\calA)$. Define $Q$ to be the set of vertices of all cells of $\Xi$. Define $S$ to be the top edges of all cells of $\Xi$ that are segments of $\ell$. Since the interiors of cells of $\Xi$ are pairwise disjoint, the segments of $S$ are also interior pairwise-disjoint. As $\Xi$ has $t$ cells, we have $|Q|=O(t)$. In the following, we argue that $(Q,S)$ is a $(k,k+K)$-shallow cutting in the vertex-segment form. 

First of all, by definition, endpoints of all segments of $S$ are in $Q$.
Consider a point $q\in Q$, which is a vertex of a cell $\Delta\in \Xi$. As $\Delta$ intersects $L_{\leq k}(\calA)$ and $|\calA_{\Delta}|\leq K$, there are at most $k+K$ arcs of $\calA$ below $q$. Hence, $q$ has level at most $k+K$ in $\calA$.  
For each segment $s\in S$, since it is a top edge of a cell $\Delta\in \Xi$ and $|\calA_{\Delta}|\leq K$, we obtain $|\calA_s|\leq K$.

It remains to argue that $\vd(Q\cup S)$ covers $L_{\leq k}(\calA)$. By definition, the union of all cells of $\Xi$ covers $L_{\leq k}(\calA)$. Consider a cell $\Delta\in \Xi$, which is a pseudo-trapezoid. We show that $\Delta\subseteq \vd(Q\cup S)$, which will prove that $\vd(Q\cup S)$ covers $L_{\leq k}(\calA)$. 

Let $e$ be the top edge of $\Delta$. As $\Delta$ is a pseudo-trapezoid, $e$ is either a segment on $\ell$ or an $\bbR^+$-constrained arc. If $e$ is a segment of $\ell$, then $e\in S$ and thus $\Delta$ must be contained in a cell of $\vd(S)$. Hence, $\Delta\subseteq\vd(S)\subseteq \vd(Q\cup S)$. If $e$ is an $\bbR^+$-constrained arc, then let $q_1$ and $q_2$ be its two endpoints; thus $e$ is the arc $\gamma(q_1,q_2)$. Since $\Delta$ is a pseudo-trapezoid with $\gamma(q_1,q_2)$ as the top edge, $\Delta$ must be contained in the line-separated $\alpha$-hull $H_{\ell}(\{q_1,q_2\})$, which is a subset of $H_{\ell}(Q)$ by Observation~\ref{obser:hull}(1) as $q_1,q_2\in Q$. Recall that $\vd(Q)$ is the vertical decomposition of $H_{\ell}(Q)$.
Hence, we have $\Delta\subseteq \vd(Q)\subseteq \vd(Q\cup S)$.

This proves $\Delta\subseteq \vd(Q\cup S)$ and therefore  $\vd(Q\cup S)$ covers $L_{\leq k}(\calA)$.
\end{proof}

\begin{lemma}\label{lem:vertextoregular}
A $(k,K)$-shallow cutting of size $t$ in the vertex-segment form can be transformed into a $(k,3K)$-shallow cutting of size $O(t)$ in the bottom-open pseudo-trapezoid form. 
\end{lemma}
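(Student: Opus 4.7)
The plan is to take the vertical decomposition $\vd(Q \cup S)$ itself as the output cutting, essentially unchanged apart from a local adjustment for the few cells whose top edge is a wing half-line segment. Set $t = |Q|$. Since every segment of $S$ has both endpoints in $Q$ and the segments are pairwise interior-disjoint (conditions (1) of the vertex-segment form), we have $|S| < t$, so $\vd(Q \cup S)$ has at most $5(|Q|+|S|) = O(t)$ cells. Every cell is bounded on the left and right by vertical downward segments dropped from vertices of the upper envelope $U$ of $\partial H_\ell(Q)$ and $S$, and is open below.

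The coverage and conflict-list requirements are then immediate from what we already have. By condition (4) of the vertex-segment form, $\vd(Q \cup S) \supseteq L_{\leq k}(\calA)$, giving coverage. By Observation~\ref{obser:cellconlist}, every cell of $\vd(Q\cup S)$ has conflict list of size at most $3K$, giving the needed $(k, 3K)$-shallow property.

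What remains is to confirm that every cell has the bottom-open pseudo-trapezoid form. The top edge of each cell lies on $U$ and, as in the proof of Observation~\ref{obser:cellconlist}, is one of four types: (a)~a segment of $S$ on $\ell$; (b)~an arc $\gamma(q,q')$ for $q,q'\in Q$; (c)~a wing arc of some $q\in Q$; or (d)~a segment of a wing half-line at $y=-1$. Types (a)--(c) directly satisfy the definition, because the top edge is either on $\ell$ or is an $\bbR^+$-constrained arc. For type (d), the cell is a semi-infinite horizontal strip with top at $y=-1$, which does not match the definition. The fix I plan is to replace each such cell by the bottom-open pseudo-trapezoid that has the same vertical left and right sides but whose top is the segment of $\ell$ lying directly above; because $U$ coincides with the wing half-line over the relevant $x$-interval, the vertical strip between $y=-1$ and $\ell$ there is disjoint from the other cells of $\vd(Q\cup S)$, so no overlaps are created.

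The main obstacle I expect is making sure that this replacement does not blow up the conflict list past $3K$. The original type-(d) cell has an empty conflict list (Observation~\ref{obser:cellconlist}, case 4), but the extension picks up every arc that dips into the added strip. My plan is to bound these arcs using the conflict lists of the two $Q$-vertices flanking the wing vertex and the geometry of the wings; if a single extension is not enough, I would further subdivide the strip by a few vertical lines so that each resulting bottom-open pseudo-trapezoid has its top drawn from an $S$-segment, an arc $\gamma(q,q')$, or a wing arc of a neighboring $Q$-vertex, so that each piece inherits a conflict-list bound of $3K$ from those $O(1)$ witnessing features. This keeps the total number of cells $O(t)$ and preserves the $3K$ bound, completing the transformation.
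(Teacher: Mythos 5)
Your overall structure matches the paper's proof: take $\vd(Q\cup S)$, observe that cells with top edge of types (a)--(c) are already bottom-open pseudo-trapezoids, and extend the type-(d) cells (wing half-line tops at $y=-1$) upward to $\ell$. The size bound and the coverage argument you give are also the ones the paper uses.

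Where you have a real gap is the conflict-list bound for the extended type-(d) cells. You treat this as ``the main obstacle,'' worry that the added rectangle may ``pick up'' arcs, and gesture at a contingency plan (bound by flanking $Q$-vertices, or subdivide vertically). That plan is not worked out, and the subdivision idea doesn't obviously help: after extending the top to $\ell$, the cell's top edge lies on $\ell$, and cutting it with vertical lines still leaves pieces with tops on $\ell$, not inherited from a neighboring arc or $S$-segment, so the witnessing features you name don't actually bound the conflict list. More importantly, you've overlooked that the extension in fact adds \emph{no} arcs at all, and this follows directly from the coverage condition (4). Here is the argument the paper uses: let $\Delta$ be a type-(d) cell with top edge $e$ at $y=-1$, let $\Delta'$ be its upward extension to $\ell$, and let $R=\Delta'\setminus\Delta$ be the added rectangle. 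Suppose for contradiction some arc of $\calA$ were below a point $p\in R$. Take the lowest arc below $p$ and let $p'$ be its intersection with the downward ray $\rho(p)$; then $p'$ has level $0$, so $p'\in L_{\leq 0}(\calA)\subseteq L_{\leq k}(\calA)$ must be covered by $\vd(Q\cup S)$. But since every arc of $\calA$ stays above $y=-1$ in its interior, $p'$ lies strictly above $e$, i.e., strictly above the cell $\Delta$ and inside the vertical strip of $\Delta$, a region $\vd(Q\cup S)$ does not cover (as $e$ is a top edge). Contradiction. Hence $\calA_{\Delta'}=\emptyset$, which trivially satisfies the $3K$ bound. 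You should replace your contingency plan with this clean observation; without it, your proof as written does not establish the $(k,3K)$ claim.
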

\begin{proof}
Let $(Q,S)$ be a $(k,K)$-shallow cutting of size $t$ in the vertex-segment form. We intend to take the vertical decomposition $\vd(Q,S)$ as the $(k,3K)$-shallow cutting $\Xi$ of size $O(t)$ in the bottom-open pseudo-trapezoid form. However, a subtle issue is that some cells of $\vd(Q,S)$ might not be pseudo-trapezoids. More specifically, consider a cell $\Delta\in \vd(Q,S)$. Let $e$ be the top edge of $\Delta$. According to the definition of $\vd(Q,S)$, $e$ belongs to one of the three cases: (1) $e$ is an $\bbR^+$-constrained arc; (2) $e$ is a segment of $\ell$; (3) $e$ is a segment of a wing half-line of a point of $Q$. In the first two cases, $\Delta$ is a bottom-open pseudo-trapezoid and we include $\Delta$ in $\Xi$. In the third case, $\Delta$ is not a pseudo-trapezoid by our definition since $e$ is a line segment but not on $\ell$. In this case, we extend $\Delta$ by moving $e$ upwards until $\ell$ to obtain an extended cell $\Delta'$, which is a bottom-open pseudo-trapezoid; we add $\Delta'$ to $\Xi$. We call $\Delta'$ a {\em special cell} of $\Xi$. 

We claim that $\calA_{\Delta'}=\emptyset$, i.e., $\Delta'$ does not intersect any arcs of $\calA$. Indeed, let $e'$ be the top edge of $\Delta'$. Let $R$ be the rectangular region of $\Delta'$ between $e$ and $e'$ (see Fig.~\ref{fig:extendcell}). Then, $\Delta'=\Delta\cup R$. Consider any point $p\in R$. We argue that no arc of $\calA$ is below $p$, which will prove the claim. Assume to the contradiction that there is an arc $\gamma\in \calA$ below $p$. Without loss of generality, we assume that $\gamma$ is the lowest arc of $\calA$ intersecting the vertical downward ray $\rho(p)$. Let $p'$ be the intersection of $\gamma$ and $\rho(p)$. By definition, $p'\in L_{\leq 0}(\calA)$. Since $(Q,S)$ is a $(k,K)$-shallow cutting, $\vd(Q,S)$ covers $L_{\leq k}(\calA)$ and thus covers $L_{\leq 0}(\calA)$ as $k\geq 0$. Therefore, $\vd(Q,S)$ covers $p'$. 
On the other hand, since $e$ is on a wing half-line of a point of $Q$, the $y$-coordinate of $e$ is $-1$ and thus no arcs of $\calA$ intersect $e$. Hence, $p'$ must be above $e$. But since $e$ is the top edge of the cell $\Delta\in \vd(Q,S)$, $p'$ cannot be covered by $\vd(Q,S)$, a contradiction. This proves that $\calA_{\Delta'}=\emptyset$.

\begin{figure}[t]
\begin{minipage}[t]{\linewidth}
\begin{center}
\includegraphics[totalheight=1.6in]{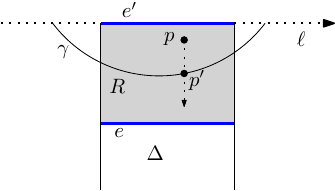}
\caption{Illustrating the proof of $\calA_{\Delta'}=\emptyset$, with $\Delta'=R\cup \Delta$, where $R$ is the gray rectangle and $\Delta$ is the region below $e$.}
\label{fig:extendcell}
\end{center}
\end{minipage}
\end{figure}

Since $|Q|=t$, $\vd(Q\cup S)$ has $O(t)$ cells. By definition, the size of $\Xi$ is $O(t)$. We next show that $\Xi$ is a $(k,3K)$-shallow cutting in the bottom-open pseudo-trapezoid form. 
First of all, by definition, each cell of $\vd(Q,S)$ is a bottom-open pseudo-trapezoid. Also, since $\vd(Q,S)$ covers $L_{\leq k}(\calA)$ and each cell of $\vd(Q,S)$ is either in $\Xi$ or contained in a cell of $\Xi$, $\vd(Q,S)$ is a subset of $\Xi$. Therefore, $\Xi$ covers $L_{\leq k}(\calA)$. 
For each $\Delta$ of $\Xi$, if it is a special cell, then $|\calA_{\Delta}|=0$ as proved above; otherwise 
$\Delta$ is also a cell in $\vd(Q,S)$ and we have 
$|\calA_{\Delta}|\leq 3K$ by Observation~\ref{obser:cellconlist}. Therefore, $\Xi$ is a $(k,3K)$-shallow cutting in the bottom-open pseudo-trapezoid form. 
\end{proof}

Combining Lemmas~\ref{lem:shallowexist10} and \ref{lem:regulartovertex} leads to the following. 
\begin{corollary}\label{coro:shallowexist20}
Given $k\in [1,n]$, there exists a $(k,O(k))$-shallow cutting of size $O(n/k)$ in the vertex-segment form.     
\end{corollary}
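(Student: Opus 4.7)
The plan is to obtain the corollary by straightforward composition of the two preceding results. First, I would invoke Lemma~\ref{lem:shallowexist10} to guarantee the existence of a $(k, K)$-shallow cutting $\Xi$ for $\calA$ in the pseudo-trapezoid form, where $K = O(k)$ and the size of $\Xi$ is $t = O(n/k)$. This establishes the starting object: a family of pseudo-trapezoidal cells whose union covers $L_{\leq k}(\calA)$ and each of which has conflict list of size $O(k)$.

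Next, I would feed $\Xi$ into Lemma~\ref{lem:regulartovertex}. That lemma produces, from any $(k, K)$-shallow cutting of size $t$ in the pseudo-trapezoid form, a $(k, k+K)$-shallow cutting in the vertex-segment form of size $O(t)$, by letting $Q$ be the vertices of the cells of $\Xi$ and $S$ be those top edges of cells of $\Xi$ that lie on $\ell$. Applied to our $\Xi$, this yields a pair $(Q,S)$ with $|Q| = O(t) = O(n/k)$ that is a $(k, k + O(k))$-shallow cutting, i.e., a $(k, O(k))$-shallow cutting in the vertex-segment form, which is exactly what the corollary asserts.

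There is essentially no obstacle here, since the existential content of the corollary is precisely the composition of an existence statement with a transformation whose correctness has already been verified. The only points worth flagging are that the two $O(k)$ bounds on conflict-list size combine to an $O(k)$ bound (because $k + O(k) = O(k)$), and that the size bound $O(n/k)$ is preserved up to constants through the transformation in Lemma~\ref{lem:regulartovertex}. Both are immediate from the statements of the two lemmas, so the proof is a one-line combination.
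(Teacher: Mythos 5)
Your proposal is correct and matches the paper exactly: the corollary is obtained by composing Lemma~\ref{lem:shallowexist10} (existence of a $(k,O(k))$-shallow cutting of size $O(n/k)$ in pseudo-trapezoid form) with the transformation of Lemma~\ref{lem:regulartovertex}, observing that $k+O(k)=O(k)$ and that the size bound is preserved up to constants. Nothing more is needed.
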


\subsection{Computing shallow cuttings in the vertex-segment form}
\label{subsec:algoshallowcut}

In what follows, by extending the algorithm of \cite{ref:ChanOp16}, we present an algorithm to compute shallow cuttings for $\calA$ in the vertex-segment form. 

We say that a (standard) cutting is in the {\em pseudo-trapezoid form} if every cell of the cutting is a pseudo-trapezoid. The following result was known previously~\cite{ref:ChazelleCu93,ref:WangUn23}. 
\begin{lemma}{\em \cite{ref:ChazelleCu93,ref:WangUn23}}\label{lem:algocut}
Given any constant $\epsilon>0$, an $\epsilon$-cutting for $\calA$ in the pseudo-trapezoid form of $O(1)$ size covering the plane, along with its conflict lists, can be computed in $O(n)$ time. 
\end{lemma}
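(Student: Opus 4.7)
The plan is to adapt the standard Chazelle-style cutting construction to the arcs of $\calA$, exploiting the fact that they are $x$-monotone, pairwise cross at most once, and each has constant description complexity. First, I would pick a (random or derandomized) sample $R \subseteq \calA$ of constant size $r = \Theta(1/\epsilon)$, and take its pseudo-trapezoidal decomposition $\mathcal{T}(R)$: for every arrangement vertex of $R$, shoot a vertical ray both upward and downward until it meets another arc of $R$ (or escapes to infinity). Because any two arcs of $R$ intersect at most once, the arrangement has $O(r^2)$ vertices and $\mathcal{T}(R)$ has $O(r^2) = O(1)$ pseudo-trapezoidal cells.

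Next, for each of the $n$ arcs of $\calA$, I would scan the $O(1)$ cells of $\mathcal{T}(R)$ and record the intersections; this both computes conflict lists and lets me verify whether $|\calA_{\Delta}| \le \epsilon n$ for every cell $\Delta$. For any cell that violates the bound, I would recursively refine it by sampling $O(1/\epsilon)$ additional arcs from its conflict list and taking the pseudo-trapezoidal decomposition of those within the cell. Since $\epsilon$ is constant and the recursion depth is constant, the refinement adds only $O(1)$ cells overall and each round costs $O(n)$ work to redistribute conflict lists.

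The correctness of refinement (i.e., the claim that with constant-sized samples one can achieve a cutting) rests on the Clarkson–Shor random sampling theorem applied to the pseudo-trapezoidal range space induced by $\calA$. That range space has bounded VC-dimension because each cell is defined by $O(1)$ arcs of $\calA$ and each arc has constant algebraic description; furthermore, the bound of $O(nk)$ on the vertices of $L_{\le k}(\calA)$ (used already in Lemma~\ref{lem:shallowexist10}) via Sharir's result gives the exact hypothesis needed to conclude that the expected conflict-list size in a random sample's decomposition is $O(n/r)$. For determinism, I would invoke the derandomized cutting construction of Chazelle~\cite{ref:ChazelleCu93}, which builds such cuttings in $O(n)$ time when $r$ is constant; in this setting it applies directly since its prerequisites (pseudo-trapezoidal decomposition, bounded VC-dimension, linear-time conflict-list computation against $O(1)$ cells) are all available.

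The main obstacle is not any single computation but verifying that every ingredient of Chazelle's framework transfers from lines to the arcs of $\calA$: namely the existence and constant combinatorial complexity of the pseudo-trapezoidal decomposition of a subset, the single-crossing property needed for the sampling bounds, and the $O(nk)$-bound on $(\leq k)$-level vertices used when estimating conflict sizes. Once these three properties are in hand---and all three are explicit features of $\calA$ already exploited earlier in the paper---the remainder of the argument is a routine adaptation, and appealing to \cite{ref:ChazelleCu93,ref:WangUn23} suffices to close the proof.
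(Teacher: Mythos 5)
The paper offers no proof of this lemma; it simply cites \cite{ref:ChazelleCu93} for the general cutting framework and \cite{ref:WangUn23} for its adaptation to circular arcs of the kind in $\calA$. Your proposal reconstructs exactly that route: sample a constant number of arcs, take the pseudo-trapezoidal decomposition (which has $O(r^2)$ cells because any two arcs cross at most once), distribute the arcs into conflict lists in linear time, refine overloaded cells to constant recursion depth, and invoke Chazelle's derandomization to make the sampling deterministic without leaving $O(n)$ time for constant~$\epsilon$. That is the correct and, as far as the paper is concerned, the intended argument.

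One attribution in your third paragraph is off, however, and is worth flagging so it does not mislead: you claim Sharir's $O(nk)$ bound on the vertices of $L_{\leq k}(\calA)$ is ``the exact hypothesis needed'' to conclude that expected conflict-list sizes are $O(n/r)$. That bound is used in the \emph{shallow} cutting existence argument (Lemma~\ref{lem:shallowexist10}) and has no role in proving Lemma~\ref{lem:algocut}. For an ordinary $\epsilon$-cutting of constant size covering the whole plane, the Clarkson--Shor / Chazelle--Friedman estimate requires only that each pseudo-trapezoid be determined by $O(1)$ arcs and that the decomposition of an $r$-element sample have $O(r^2)$ cells; both follow from the single-crossing property alone. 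Your proof still goes through because those genuinely needed facts are present in your argument, but stating that the $(\leq k)$-level bound is load-bearing here is incorrect and should be removed.
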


We say that a shallow cutting $(Q,S)$ in the vertex-segment form is {\em sorted} if points of $Q$ are sorted by their $x$-coordinates. The following is the main theorem about our algorithm.  

\begin{theorem}\label{theo:algoshallowcut}
There exist constants $B,C,C'$, such that for any parameter $k\in [1,n]$, given a $(Bk,CBk)$-shallow cutting $(\iq,\is)$ in the sorted vertex-segment form for $\calA$ of size at most $C'\frac{n}{Bk}$ along with its conflict lists, we can compute a $(k,Ck)$-shallow cutting $(\oq,\os)$ in the sorted vertex-segment form for $\calA$ of size at most $C'\frac{n}{k}$ along with its conflict lists in $O(n)$ time. 
\end{theorem}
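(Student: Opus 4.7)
The plan is to mirror the refinement step in the shallow-cutting algorithm of Chan and Tsakalidis~\cite{ref:ChanOp16}: process each cell of $\vd(\iq\cup\is)$ independently, refining the cell's conflict list with the constant-size pseudo-trapezoidal $\epsilon$-cutting of Lemma~\ref{lem:algocut}, and then converting the aggregate to vertex-segment form via Lemma~\ref{lem:regulartovertex}. Because the work per cell is linear in the size of its conflict list and, by Observation~\ref{obser:cellconlist}, all conflict lists together have size $O((n/(Bk))\cdot Bk)=O(n)$, the total running time is $O(n)$.

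In more detail, I would first build $\vd(\iq\cup\is)$ together with the conflict lists of its cells in linear time from the sorted $\iq$; this yields $O(n/(Bk))$ cells $\Delta$, each with $|\calA_\Delta|\le 3CBk$. Fixing $\epsilon=1/(3CB)$ and invoking Lemma~\ref{lem:algocut} on each $\calA_\Delta$ produces in $O(|\calA_\Delta|)$ time a pseudo-trapezoidal $\epsilon$-cutting $\Xi_\Delta$ of the plane of constant size, each sub-cell of which meets at most $\epsilon\cdot 3CBk=k$ arcs of $\calA_\Delta$. Since arcs outside $\calA_\Delta$ avoid $\Delta$ entirely, a clipped piece $\Delta'\cap\Delta$ meets at most $k$ arcs of $\calA$ in total. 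Two pseudo-trapezoids intersect in a region of constant combinatorial complexity, which a few extra vertical cuts split into $O(1)$ pseudo-trapezoids. The resulting refinement $\Xi$, the union of all these pieces over all $\Delta$, has size $O(n/(Bk))$ and covers $\bigcup_\Delta\Delta\supseteq L_{\leq Bk}(\calA)\supseteq L_{\leq k}(\calA)$; hence $\Xi$ is a $(k,k)$-shallow cutting of $\calA$ in pseudo-trapezoid form. Applying Lemma~\ref{lem:regulartovertex} to $\Xi$ then gives a $(k,2k)$-shallow cutting $(\oq,\os)$ in vertex-segment form of size $O(n/(Bk))\le C'n/k$ for $C'$ large enough, which is a $(k,Ck)$-shallow cutting for $C\ge 2$. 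The sorted order on $\oq$ comes for free from a left-to-right scan of $\vd(\iq\cup\is)$, and the conflict lists of $(\oq,\os)$ are read off from those of $\Xi$ in total time $O(n)$.

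The main obstacle is the implicit ``WLOG every cell of $\Xi$ intersects $L_{\leq k}(\calA)$'' assumption on which Lemma~\ref{lem:regulartovertex} relies: vertices harvested from pieces $\Delta'\cap\Delta$ that lie entirely above the $k$-level could have level exceeding $Ck$. I would handle this by a pruning step that discards pieces whose lowest point has level greater than $k$ in $\calA$. The required level can be computed locally inside each $\Delta$ from $\calA_\Delta$ together with the number of arcs of $\calA$ lying strictly below $\Delta$, a quantity that is easily maintained while $\vd(\iq\cup\is)$ is constructed from the input conflict lists of $(\iq,\is)$. This pruning, together with the constant-complexity decomposition of each $\Delta'\cap\Delta$ into pseudo-trapezoids, is routine case analysis that does not inflate the $O(n)$ running time.
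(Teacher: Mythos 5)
Your proposal follows the paper's initial steps closely: build $\vd(\iq,\is)$, compute a constant-size $\epsilon$-cutting per cell with $\epsilon = 1/(3CB)$, clip to $\Delta$, and finally convert to vertex-segment form. Coverage, level bounds, and the $O(n)$ running time are all handled essentially as in the paper. The gap is in the size bound, and it is not a routine detail.

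Aggregating all clipped $\epsilon$-cutting pieces produces, per cell of $\vd(\iq,\is)$, on the order of $1/\epsilon^2 = \Theta((CB)^2)$ sub-cells, for a total of $\Theta((CB)^2 \cdot n/(Bk)) = \Theta(C^2 B\cdot n/k)$ pieces. Pruning pieces lying above the $k$-level does not bring this down to $C'\,n/k$ for a \emph{fixed} constant $C'$ that is independent of the recursion depth: the surviving pieces still number $\Omega(\text{poly}(CB)\cdot n/k)$ in the worst case. Your remark ``$\le C'n/k$ for $C'$ large enough'' is exactly the step that cannot be made, because the same $C'$ must serve as both the input bound $C'\,n/(Bk)$ and the output bound $C'\,n/k$ for Corollary~\ref{coro:shallowcut}'s recursion to close; if $C'$ grows by a constant factor at each of the $\log_B(n/k)$ levels, the overall size and time guarantees collapse. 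This is precisely why the paper does not keep the $\epsilon$-cutting as the answer. Instead it performs a brute-force search inside each $\Delta$ for the \emph{smallest} subset $Q'_\Delta\subseteq Q_\Delta$ (with a compatible $S'_\Delta$) satisfying four explicit local conditions, and then proves $|\oq|\le C'n/k$ by a comparison argument: take an existence-guaranteed $(5k,5c_0k)$-shallow cutting $(Q^*,S^*)$ of size $c_0'n/(5k)$ from Corollary~\ref{coro:shallowexist20}, chop it at the walls of $\vd(\iq,\is)$, snap its vertices to vertices of the $\Xi_\Delta$'s, modify $S^*$, and prune to show the modified $(Q^*_\Delta,S^*_\Delta)$ satisfies the four conditions, whence $|Q'_\Delta|\le|Q^*_\Delta|$ and $|\oq|\le (c_0'+50C'/B)\,n/k$. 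Choosing $C' = c_0'/(1-50/B)$ with $B>50$ closes the fixed-point inequality. Without this optimality-plus-comparison mechanism (or an equivalent size-controlling device), the theorem's size claim does not follow from your construction.
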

\begin{proof}
Let $\epsilon$ be a constant to be set later. 
We begin by computing the decomposition $\vd(\iq,\is)$. Since $\iq$ is sorted, $H_{\ell}(\iq)$ can be computed in $O(|\iq|)$ time using the algorithm from Section~\ref{sec:alphahull}. As the endpoints of all segments of $\is$ are in $\iq$ and segments of $\is$ are interior pairwise-disjoint on $\ell$, the segments of $\is$ can also be sorted from left to right in $O(|\iq|)$ time. As such, computing $\vd(\iq,\is)$ can be done in $O(|\iq|)$ time, which is $O(n/k)$ as $|\iq|\leq C'\frac{n}{Bk}$. 

Next, for each cell $\Delta\in \vd(\iq,\is)$, we perform the following two steps. 
\begin{enumerate}
    \item Compute an $\epsilon$-cutting $\Xi_{\Delta}$ of size $O(1)$ for $\calA_{\Delta}$. We clip the cells of $\Xi_{\Delta}$ to lie within $\Delta$ (and redecompose each new cell into pseudo-trapezoids if needed). Let $Q_{\Delta}$ denote the set of vertices of all cells of $\Xi_{\Delta}$ and $S_{\Delta}$ the set of top edges of the cells of $\Xi_{\Delta}$ that are segments of $\ell$. 
    
    Since $\epsilon=O(1)$, $\Xi_{\Delta}$ has $O(1)$ cells and computing $\Xi_{\Delta}$ takes $O(|\calA_{\Delta}|)$ time by Lemma~\ref{lem:algocut}. Hence, both $|Q_{\Delta}|$ and $|S_{\Delta}|$ are $O(1)$. As $\sum_{\Delta\in \vd(\iq,\is)}|\calA_{\Delta}|=O(n)$, the total time of this step for all cells $\Delta\in \vd(\iq,\is)$ is $O(n)$. 
    
    \item Compute by brute force a smallest subset $Q'_{\Delta}\subseteq Q_{\Delta}$, along with 
    a subset $S'_{\Delta}\subseteq S_{\Delta}$, such that the following conditions are satisfied. 
    \begin{enumerate}
        \item The endpoints of all segments of $S'_{\Delta}$ are in $Q'_{\Delta}$.
        \item Every vertex in $Q'_{\Delta}$ has level in $\calA_{\Delta}$ at most $Ck$. 
        \item Every segment in $S'_{\Delta}$ intersects at most $Ck$ arcs of $\calA_{\Delta}$.
        \item For each cell $\sigma\in \Xi_{\Delta}$ whose vertices are all in $L_{\leq 2k}(\calA_{\Delta})$,  $\sigma$ is covered by $\vd(Q'_{\Delta},S'_{\Delta})$.
     \end{enumerate}
    

    As both $|Q_{\Delta}|$ and $|S_{\Delta}|$ are $O(1)$, there are $O(1)$ different pairs of $Q'_{\Delta}$ and $S'_{\Delta}$. For each such pair $(Q'_{\Delta},S'_{\Delta})$, we can check whether the four conditions are satisfied in $O(|\calA_{\Delta}|)$ time, because $|Q'_{\Delta}|$, $|S'_{\Delta}|$, and the size of $\Xi_{\Delta}$ are all $O(1)$. Hence, finding a smallest subset $Q'_{\Delta}$ with $S'_{\Delta}$ takes $O(|\calA_{\Delta}|)$ time. 
    After that, for each point $q\in Q'_{\Delta}$, its conflict list in $\calA_{\Delta}$, which is also its conflict list in $\calA$, can be found in $O(|\calA_{\Delta}|)$ time. Similarly, for each segment of $S'_{\Delta}$, its conflict list can be found in $O(|\calA_{\Delta}|)$ time. As both $|Q'_{\Delta}|$ and $|S'_{\Delta}|$ are $O(1)$, finding the conflict lists of $(Q'_{\Delta},S'_{\Delta})$ takes $O(|\calA_{\Delta}|)$ time. 
    
    As $\sum_{\Delta\in \vd(\iq,\is)}|\calA_{\Delta}|=O(n)$, the total time of this step for all $\Delta\in \vd(\iq,\is)$ is $O(n)$. 
   
\end{enumerate}
    
We define $\oq=\bigcup_{\Delta\in \vd(\iq,\is)}Q'_{\Delta}$ and $\os=\bigcup_{\Delta\in \vd(\iq,\is)}S'_{\Delta}$. We can sort the points of $\oq$ in $O(n/k)$ time as follows. For each $\Delta \in\vd(\iq,\is)$, we sort the points of $Q'_{\Delta}$, which takes $O(1)$ time as $|Q'_{\Delta}|=O(1)$. Then, for all cells $\Delta \in\vd(\iq,\is)$ in order from left to right, we concatenate the sorted lists of $Q'_{\Delta}$, and the resulting list is a sorted list of $\oq$. This takes $O(n/k)$ time in total as $\vd(\iq,\is)$ has $O(n/k)$ cells. 

The total runtime of the above algorithm is $O(n)$. 

\paragraph{Correctness.}
In the following, we argue the correctness, i.e., prove that $(\oq,\os)$ is a $(k,Ck)$-shallow cutting for $\calA$ of size at most $C'\frac{n}{k}$. We first show that $(\oq,\os)$ is a $(k,Ck)$-shallow cutting and then bound its size. 

Consider a cell $\Delta \in\vd(\iq,\is)$. For each point $q\in Q'_{\Delta}$, according to our algorithm, $q$ has level in $\calA_{\Delta}$ at most $Ck$. In light of the definition of $\vd(\iq,\is)$, $\Delta$ is a bottom-open cell bounded by two vertical rays. Hence, the level of $q$ in $\calA_{\Delta}$ is also its level in $\calA$. Therefore, $q$ has level in $\calA$ at most $Ck$. 

For each segment $s\in S'_{\Delta}$, by definition, $s$ is a top edge of a cell $\Delta\in \vd(\iq,\is)$. According to our algorithm, $s$ intersects at most $Ck$ arcs of $\calA_{\Delta}$. Since $s\subseteq \Delta$, any arc of $\calA$ intersecting $s$ must intersect $\Delta$ and thus is in $\calA_{\Delta}$. Therefore, $s$ intersects at most $Ck$ arcs of $\calA$. In addition, according to our algorithm, the endpoints of $s$ are in $Q'_{\Delta}$. 

To show that $(\oq,\os)$ is a $(k,Ck)$-shallow cutting, it remains to prove that $\vd(\oq,\os)$ covers $L_{\leq k}(\calA)$. Consider a point $p\in L_{\leq k}(\calA)$. By definition, $\vd(\iq,\is)$ covers $L_{\leq Bk}(\calA)$. By setting $B>1$, $\vd(\iq,\is)$ covers $L_{\leq k}(\calA)$ and thus $p$ must be in a cell $\Delta$ of $\vd(\iq,\is)$. In the following, we argue that $p$ is covered by $\vd(Q'_{\Delta}\cup S'_{\Delta})$, which will prove that $\vd(\oq,\os)$ covers $L_{\leq k}(\calA)$. 

Let $\sigma$ be the cell of the cutting $\Xi_{\Delta}$ that contains $p$. Since $p$ has level in $\calA_{\Delta}$ at most $k$ and the number of arcs of $\calA_{\Delta}$ intersecting $\sigma$ is at most $\epsilon\cdot |\calA_{\Delta}|$, every vertex of $\sigma$ has level at most $k+\epsilon\cdot |\calA_{\Delta}|$. Because the size of the conflict list of each point of $\iq$ is at most $CBk$, $|\calA_{\Delta}|\leq 3CBk$ by Observation~\ref{obser:cellconlist}. Therefore, $k+\epsilon\cdot |\calA_{\Delta}|\leq 2k$ by setting the constant $\epsilon = 1/(3CB)$. Hence, all vertices of $\sigma$ are in $L_{\leq 2k}(\calA_{\Delta})$ and thus $\sigma$ is covered by $\vd(Q'_{\Delta}\cup S'_{\Delta})$ according to our algorithm. 
As $p\in \sigma$, we obtain that $p$ is covered by $\vd(Q'_{\Delta}\cup S'_{\Delta})$.


\paragraph{Bounding the size of $\vd(\oq,\os)$, i.e., $|\oq|$.} We now prove $|\oq|\leq C'n/k$. To this end, we compare it against a $(5k,5c_0k)$-shallow cutting $(Q^*,S^*)$ of size at most $c_0'n/(5k)$ for some constant $c_0'$, whose existence is guaranteed by Corollary~\ref{coro:shallowexist20}. 

We set constant $B\geq 15c_0$. We first claim that $\vd(\iq,\is)$ covers $\vd(Q^*,S^*)$. Indeed, 
consider a cell $\Delta\in \vd(Q^*,S^*)$. 
By Observation~\ref{obser:cellconlist}, $\Delta$ intersects at most $15c_0k$ arcs of $\Gamma$. Hence, $\Delta\subseteq L_{\leq 15c_0k}(\Gamma)$.  As $\vd(\iq,\is)$ covers $L_{\leq Bk}(\calA)$ and $B\geq 15c_0$, $\Delta\subseteq\vd(\iq,\is)$. Hence, $\vd(\iq,\is)$ covers $\vd(Q^*,S^*)$. 


We render $(Q^*,S^*)$ comparable to $(\oq,\os)$ by performing three modification steps: (1) Add some points to $Q^*$; (2) modify $S^*$; (3) remove some points of $Q^*$. Note that these modification steps are not part of the algorithm but for this proof only. 

\paragraph{Step (1): Adding points to $\boldsymbol{Q^*}$.}
We chop $\vd(Q^*,S^*)$ at the walls (i.e., the edges) of the cells of $\vd(\iq,\is)$. Consider a cell $\Delta\in \vd(\iq,\is)$. Consider the right edge of $\Delta$, which is a vertical downward ray $\rho(q)$ from the right endpoint $q$ of the top edge of $\Delta$. Since $\vd(\iq,\is)$ covers $\vd(Q^*,S^*)$ and $\partial \vd(Q^*,S^*)$ is $x$-monotone, an edge $e$ of $\partial \vd(Q^*,S^*)$ must intersect $\rho(q)$ at a point $p$. 
We create a new vertex at $p$ for $\vd(Q^*,S^*)$ and add two copies of it to $Q^*$ (one assigned to each of the two incident cells of $\vd(Q^*,S^*)$). Since each endpoint of $e$ has level at most $5c_0k$, the new vertex $p$ has level at most $15c_0k$ by Observation~\ref{obser:cellconlist}. In this way, the number of extra vertices added is at most $2|\vd(\iq,\is)|$, where $|\vd(\iq,\is)|$ is the number of cells of $\vd(\iq,\is)$. Since $|\iq|\leq C'\frac{n}{Bk}$, we have $|\vd(\iq,\is)|\leq 5C'\frac{n}{Bk}$. Thus the size of $Q^*$ is now at most $(\frac{c_0'}{5}+10\frac{C'}{B})\frac{n}{k}$.

For each cell $\Delta\in \vd(\iq,\is)$, define $Q^*_{\Delta}=Q^*\cap \Delta$. For each cell $\sigma\in \Xi_{\Delta}$, if a point $p$ of $Q^*_{\Delta}$ is in $\sigma$, then we snap $p$ to the at most four vertices of $\sigma$ by adding the vertices of $\sigma$ to $Q^*_{\Delta}$ (we refer to it as the {\em snapping process}; $p$ is still kept in $Q^*_{\Delta}$ for now but will be deleted later in the third modification step). After this, the coverage of $\vd(Q^*_{\Delta})$ only increases. 
Every point in $Q^*_{\Delta}$ now has level at most $15c_0k+\epsilon\cdot |\calA_{\Delta}|\leq 15c_0k+\epsilon(3CBk)=(15c_0+1)k$ since $\epsilon=1/(3CB)$. The size of $Q^*$ becomes at most five times larger, which is at most $5(\frac{c_0'}{5}+10\frac{C'}{B})\frac{n}{k}=(c_0'+50\frac{C'}{B})\frac{n}{k}$. 

\paragraph{Step (2): Modifying $\boldsymbol{S^*}$.}
We now modify $S^*$. Consider a segment $s\in S^*$. By definition, $s\subseteq \ell$. If the interior of $s$ contains a vertex $q$ of a cell of $\vd(\iq,\is)$, we break $s$ at $q$ into two segments and use them to replace $s$ in $S^*$. 
If we process every segment of $S^*$ as above, we obtain a new set $S^*$ such that $\vd(S^*)$ has the same coverage as before (i.e., the union of the new segments is the same as that of the old segments), but every segment of $S^*$ is now in a single cell of $\vd(\iq,\is)$. For each cell $\Delta\in \vd(\iq,\is)$, we use $S^*_{\Delta}$ to denote the subset of segments of $S^*$ that are in $\Delta$. In addition, since each new segment $s$ is a sub-segment of an old segment, which intersects at most $5c_0k$ arcs of $\calA$ by definition, $s$ intersects at most $5c_0k$ arcs of $\calA$. 

For each cell $\Delta\in \vd(\iq,\is)$, we further adjust the segments of $S^*_{\Delta}$ to obtain a new set $S^*_{\Delta}$ such that (1) the coverage of the old set is a subset of the new coverage; (2) all segment endpoints are in $Q^*_{\Delta}$ and also in $Q_{\Delta}$, i.e., in $Q^*_{\Delta}\cap Q_{\Delta}$; (3) each segment intersects at most $(5c_0+2)k$ segments of $\calA$. 

We process the segments of $S^*_{\Delta}$ from left to right. Consider the leftmost segment $s_1$. Let $l_1$ and $r_1$ be the left and right endpoints of $s_1$, respectively (see Fig.~\ref{fig:segment}). Since $l_1\in \ell$, $l_1$ must be on the top edge $e_{l1}$ of a cell of $\Xi_{\Delta}$ and $e_{l1}\subseteq \ell$. Let $a_{l1}$ and $b_{l1}$ be the left and right endpoints of $e$, respectively. By definition, $a_{l1}$ and $b_{l1}$ are in $Q_{\Delta}$.
By the above snapping process, both $a_{l1}$ and $b_{l1}$ are already in $Q^*_{\Delta}$. 
Similarly, we define $e_{r1}$, $a_{r1}$, and $b_{r1}$ for $r_1$, and both $a_{r1}$ and $b_{r1}$ are in $Q_{\Delta}\cap Q^*_{\Delta}$. 
We replace the original segment $s_1$ with the new segment $\overline{a_{l1}b_{r1}}$. Clearly, the old segment is a subset of the new segment, i.e., the coverage increases, and both endpoints of the new segment are now in $Q_{\Delta}\cap Q^*_{\Delta}$. The number of arcs intersecting the new segment is at most the sum of the following: (1) the number of arcs intersecting $e_{l1}$, which is $\epsilon\cdot |\calA_{\Delta}|\leq k$; (2) the number of arcs intersecting $e_{r1}$, which is also $\epsilon\cdot |\calA_{\Delta}|\leq k$; (3) the number of arcs intersecting the original segment, which is $5c_0k$. Therefore, the number of arcs of $\calA$ intersecting the new segment is $(5c_0+2)k$. 

\begin{figure}[t]
\begin{minipage}[t]{\linewidth}
\begin{center}
\includegraphics[totalheight=0.8in]{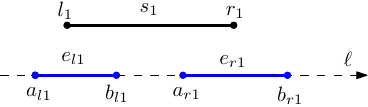}
\caption{Illustrating the notation for processing $s_1$. The segment $s_1$ is supposed be on $\ell$, but we move it upwards off $\ell_1$ for better illustration.}
\label{fig:segment}
\end{center}
\end{minipage}
\end{figure}

Processing $s_2$ is done in a similar but slightly different way. Let $l_2$ and $r_2$ be the left and right endpoints of $s_2$, respectively. Define $e_{l2}$, $a_{l2}$, $b_{l2}$ for $l_2$, and $e_{r2}$, $a_{r2}$, $b_{r2}$ for $r_2$, similarly to the above. If $e_{r1}$ is not $e_{l2}$, then we process $s_2$ in the same way as above (i.e., replace $s_2$ by $\overline{a_{l2}b_{r2}}$). Otherwise, $e_{l2}$ is already included in the new segment for $s_1$; in this case, we replace $s_2$ by $\overline{b_{l2}b_{r2}}$ (see Fig.~\ref{fig:segment20}). Although the new segment does not fully cover the old $s_2$, the union of the new $s_1$ and the new $s_2$ still covers the union of the old $s_1$ and $s_2$. Also, both endpoints of the new $s_2$ are now in $Q_{\Delta}\cap Q^*_{\Delta}$. Following a similar argument to the above, the number of arcs of $\calA$ intersecting the new $s_2$ is at most $(5c_0+2)k$.

\begin{figure}[t]
\begin{minipage}[t]{\linewidth}
\begin{center}
\includegraphics[totalheight=1.0in]{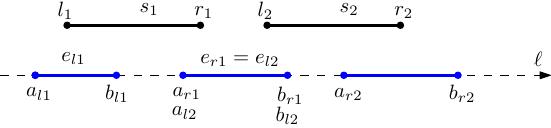}
\caption{Illustrating the notation for processing $s_2$ when $e_{r1}=e_{l2}$, i.e., $a_{r1}=a_{l2}$ and $b_{r1}=b_{l2}$.}
\label{fig:segment20}
\end{center}
\end{minipage}
\end{figure}

We process other segments in the same way as above for $s_2$. After all segments are processed, we will obtain a new set $S^*_{\Delta}$ that satisfies the three conditions mentioned above. 

Finally, we add a few more segments to $S^*$ as follows. For each cell $\sigma\in \Xi_{\Delta}$ whose top edge $e$ is on $\ell$ such that $\sigma$ contains a point $p\in Q_{\Delta}^*$, we do the following. By the above snapping process, all vertices of $\sigma$ are in $Q^*_{\Delta}$. In particular, both endpoints of $e$ are in $Q^*_{\Delta}$ (and also in $Q_{\Delta}$). According to our above way of modifying $S^*_{\Delta}$, either $e$ is contained in a segment of $S^*_{\Delta}$ or the interior of $e$ does not intersect any segment of $S^*_{\Delta}$. In the latter case, we add $e$ to $S^*$. This increases the coverage of $S^*$. 
As $e\subseteq\sigma$, 
the number of arcs intersecting the new segment $e$ is at most $\epsilon\cdot |\calA_{\Delta}|\leq k$. 

This finishes the modification of $S^*$. Each segment of $S^*$ intersects at most $(5c_0+2)k$ arcs of $\calA$. For each cell $\Delta\in \vd(\iq,\is)$, the endpoints of all segments of $S^*_{\Delta}$ are in $Q_{\Delta}\cap Q^*_{\Delta}$.

\paragraph{Step (3): Removing points of $\boldsymbol{Q^*}$.}
For each cell $\Delta\in \vd(\iq,\is)$, we do the following. It is possible that $Q^*_{\Delta}$ has points that are not in $Q_{\Delta}$. In this last step, we remove all those points from $Q^*_{\Delta}$. After this removal step, we have $Q^*_{\Delta}\subseteq Q_{\Delta}$. Since the endpoints of all segments of $S^*_{\Delta}$ are in $Q_{\Delta}\cap Q^*_{\Delta}$, none of the segment endpoint is removed from $Q^*_{\Delta}$. 

We claim that the coverage of $\vd(Q^*_{\Delta},S^*_{\Delta})$ does not change after the removal. Indeed, let $p$ be a point removed above. We argue that removing $p$ does not change $\vd(Q^*_{\Delta},S^*_{\Delta})$. 
Let $\sigma$ be the cell of $\Xi_{\Delta}$ that contains $p$.  Let $e$ be the top edge of $\sigma$. As $\sigma$ is a pseudo-trapezoid, $e$ is either a segment of $\ell$ or an $\bbR^+$-constrained arc. 
\begin{enumerate}
\item 
If $e$ is a segment, according to our modification on $S^*$, $e$ must be contained in a segment of $S^*$, implying that $p$ must be below (or on) a segment of $S^*$ (but $p$ is not a segment endpoint since every segment endpoint is in $Q_{\Delta}$ and $p\not\in Q_{\Delta}$). Hence, removing $p$ does not change the coverage of $\vd(S^*)$ and thus does not change the coverage of $\vd(Q^*_{\Delta},S^*_{\Delta})$. 

\item 
If $e$ is an $\bbR^+$-constrained arc, let $q_1$ and $q_2$ be the two endpoints of $e$. 
Thus $e$ is the arc $\gamma(q_1,q_2)$. As discussed in the proof of Lemma~\ref{lem:regulartovertex}, $\Delta\subseteq H_{\ell}(\{q_1,q_2\})$. Since $p\in \sigma$, by the snapping process, $q_1,q_2\in Q^*_{\Delta}$. By Observation~\ref{obser:hull}(1), $H_{\ell}(\{q_1,q_2\})\subseteq \vd(Q^*_{\Delta})\subseteq\vd(Q^*_{\Delta},S^*_{\Delta})$. On the other hand, since $p\not\in Q_{\Delta}$, $p$ cannot be a vertex of $\sigma$.
Thus, $p\not\in \{q_1,q_2\}$. Hence, removing $p$ does not change $H_{\ell}(\{q_1,q_2\})$ and thus does not change $\vd(Q^*_{\Delta})$ or  $\vd(Q^*_{\Delta},S^*_{\Delta})$. 
\end{enumerate}

In summary, the above removal step does not change the coverage of $\vd(Q^*_{\Delta},S^*_{\Delta})$. But now we have $Q^*_{\Delta}\subseteq Q_{\Delta}$. The endpoints of all segments of $S^*_{\Delta}$ are still in $Q^*_{\Delta}$.

\paragraph{Bounding $\boldsymbol{|\oq|}$.}
By setting $C=15c_0+2$, for each $\Delta\in \vd(\iq,\is)$, we claim that $(Q^*_{\Delta},S^*_{\Delta})$ satisfy the four conditions for $Q'_{\Delta}$ and $S'_{\Delta}$ in the second step of our algorithm. Indeed, the above already shows that the endpoints of all segments of $S^*_{\Delta}$ are in $Q^*_{\Delta}$; the first condition is thus satisfied. As discussed above, every point in $Q^*_{\Delta}$ has level at most $(15c_0+1)k<Ck$; the second condition is thus satisfied. The above also shows that every segment of $S^*_{\Delta}$ intersects at most $(5c_0+2)k<Ck$ arcs, which satisfies the third condition. For the fourth condition, consider a cell $\Delta\in \vd(\iq,\is)$ and let $\sigma$ be a cell of $\Xi_{\Delta}$ whose vertices are all in $L_{\leq 2k}(\calA_{\Delta})$. Consider any point $p\in \sigma$. We argue that $p$ is covered by $\vd(Q^*_{\Delta},S^*_{\Delta})$, which will prove the fourth condition. 

Let $e$ be the top edge of $\sigma$. An arc of $\calA$ is below $p$ only if it is below at least one endpoint of $e$ or intersects $e$. Hence, the number of arcs of $\calA$ below $p$ is at most the sum of the following two values: (1) the number of arcs of $\calA$ below at least one  endpoint of $e$, which is at most $4k$ since both points are in $L_{\leq 2k}(\calA_{\Delta})$; (2) the number of arcs intersecting $e$, which is at most $\epsilon\cdot |\calA_{\Delta}|\leq k$. Therefore, $p$ has level in $\calA$ is at most $5k$ and thus by definition is covered by $\vd(Q^*,S^*)$ of the original $(Q^*,S^*)$ before modification. After the modification of $Q^*$ and $S^*$, the coverage of $\vd(Q^*\cup S^*)$ increases. Hence, $\vd(Q^*,S^*)$ of the new $(Q^*,S^*)$ also covers $L_{\leq 5k}(\calA)$ and thus covers $p$. Our modification also guarantees that $\vd(Q^*,S^*)\cap \Delta=\vd(Q^*_{\Delta},S^*_{\Delta})$. Therefore, $p$ is covered by $\vd(Q^*_{\Delta},S^*_{\Delta})$. 
The fourth condition thus holds. 


As $(Q^*_{\Delta},S^*_{\Delta})$ satisfies the four conditions, we have $|Q'_{\Delta}|\leq |Q^*_{\Delta}|$ by the definition of $Q'_{\Delta}$. Summing over all cells in $\vd(\iq,\is)$ leads to $|\oq|\leq |Q^*|\leq (c_0'+50\frac{C'}{B})\frac{n}{k}$, which is less than or equal to $C'\frac{n}{k}$ as desired by setting the constant $C'=c_0'/(1-\frac{50}{B})$ with any constant $B>50$ (combining the above, we set $B>\max\{15c_0,50\}$). 

This proves the theorem. 
\end{proof}

Theorem~\ref{theo:algoshallowcut} leads to the following corollary. 
\begin{corollary}\label{coro:shallowcut}
There exist constants $B$, $C$, and $C'$, such that for any parameter $k\in [1,n]$, we can compute a $(B^ik,CB^ik)$-shallow cutting in the sorted vertex-segment form of size at most $C'\frac{n}{B^ik}$, along with its conflict lists, for all $i=0,1,\ldots,\log_B\frac{n}{k}$ in $O(n\log \frac{n}{k})$ total time. In particular, we can compute a $(k,Ck)$-shallow cutting of size $O(n/k)$ in the sorted vertex-segment form, along with its conflict lists, in $O(n\log \frac{n}{k})$ time.   
\end{corollary}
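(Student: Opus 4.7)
The plan is to apply Theorem~\ref{theo:algoshallowcut} iteratively, starting from a coarse base-case cutting at the largest index and descending to $i=0$. Let $B$, $C$, and $C'$ be the constants supplied by Theorem~\ref{theo:algoshallowcut}, and set $I=\lfloor\log_B(n/k)\rfloor$ so that $B^I k=\Theta(n)$. First I would construct a $(B^I k,\,CB^I k)$-shallow cutting in sorted vertex-segment form of size at most $C'n/(B^I k)$, along with its conflict lists, directly in $O(n)$ time. Then I would feed this base-case cutting into Theorem~\ref{theo:algoshallowcut} with parameter $B^{I-1}k$ to obtain a $(B^{I-1}k,\,CB^{I-1}k)$-shallow cutting, and repeat the process for $i=I-1,I-2,\ldots,0$. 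At each step the theorem returns the next cutting in sorted vertex-segment form with its conflict lists in $O(n)$ time, while preserving the size bound $C'n/(B^i k)$. All intermediate cuttings are retained as part of the final output.

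For the base case at $i=I$, the size bound $C'n/(B^I k)=O(1)$ permits only a constant number of vertices, and the level/conflict-size conditions on $Q$ and $S$ are essentially vacuous: since $B^I k=\Theta(n)$ and $|\calA|=n$, every point has level at most $n\le CB^I k$ and every segment is crossed by at most $n\le CB^I k$ arcs. Thus the base-case cutting can be built ad hoc: take $S$ to be a single segment on $\ell$ that spans the $x$-range of all endpoints of the arcs in $\calA$, and take $Q$ to be its endpoints (plus, if needed, one or two auxiliary points whose wings extend $\vd(Q\cup S)$ to cover all of $L_{\leq B^I k}(\calA)$). Any point whose $x$-coordinate lies outside the horizontal range of $\calA$ has level $0$ and is already covered by the wing portion of $\vd(Q)$, while points inside the slab are covered by $\vd(S)$. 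With $|Q|=O(1)$, the conflict lists of $(Q,S)$ are computed by a single scan of $\calA$ in $O(n)$ time, and $Q$ is trivially sorted by $x$-coordinate.

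For the total time: the number of iterations is $I+1=O(\log_B(n/k))=O(\log(n/k))$ since $B$ is a positive constant, and each iteration---including the base case---runs in $O(n)$ time. Hence the overall cost is $O(n\log(n/k))$ as claimed, and the ``in particular'' clause is exactly the cutting returned at $i=0$. The main subtlety lies in the base case: one must check that the constructed $O(1)$-size $(Q,S)$ genuinely covers $L_{\leq B^I k}(\calA)$ and fits within the size bound $C'n/(B^I k)$ using the same constant $C'$ that is preserved through Theorem~\ref{theo:algoshallowcut}. This reduces to a direct verification using the observations above, and by choosing $I$ as above (rather than pushing $B^I k$ strictly to $n$) one keeps the base-case construction trivial.
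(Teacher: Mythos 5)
Your proposal follows essentially the same route as the paper: repeatedly apply Theorem~\ref{theo:algoshallowcut} from a constant-size base-case cutting at the coarsest level down to the desired $k$, each step costing $O(n)$, for $O(\log_B(n/k))$ steps in total. The paper phrases this as the recurrence $T(n,k)=T(n,Bk)+O(n)$ with base case $T(n,n)=O(n)$, glossing over the base case as ``trivial,'' whereas you spell out the constant-size construction explicitly; the substance is the same. One small slip: with $I=\lfloor\log_B(n/k)\rfloor$ you only have $n/B< B^Ik\le n$, so the inequality $n\le CB^Ik$ that you invoke to make the level and conflict-size conditions vacuous requires $C\ge B$, which Theorem~\ref{theo:algoshallowcut} does not guarantee (its proof sets $C=15c_0+2$ and $B>\max\{15c_0,50\}$). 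This is easily repaired by starting the base case at $k'=n$ as the paper does, or equivalently by taking $I=\lceil\log_B(n/k)\rceil$ so that $B^Ik\ge n$, after which $n\le B^Ik\le CB^Ik$ holds with no constraint on $C$.
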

\begin{proof}
By Theorem~\ref{theo:algoshallowcut}, the runtime $T(n,k)$ satisfies the recurrence $T(n,k)=T(n,Bk)+O(n)$ with the trivial base case $T(n,n)=O(n)$. The recurrence solves to $T(n,k)=O(n\log \frac{n}{k})$. 
\end{proof}

\paragraph{Proving Theorem~\ref{theo:shallowcut}.}
We first apply Corollary~\ref{coro:shallowcut} to compute the shallow cuttings in the sorted vertex-segment form. Then, we transform them to shallow cuttings in the bottom-open pseudo-trapezoid form by Lemma~\ref{lem:vertextoregular}, which can be done in additional $O(n\log\frac{n}{k})$ time (i.e., linear time for each cutting). This proves Theorem~\ref{theo:shallowcut}.

\section{The static UDRR problem}
\label{sec:udrr}
In this section, we discuss our static data structure. 
Let $P$ be a set of $n$ points in the plane. The problem is to construct a data structure for $P$ to answer unit-disk range reporting queries.

In the preprocessing, we apply the algorithm of Lemma~\ref{lem:grid} to compute a conforming coverage set $\calC$ of $O(n)$ cells for $P$ and the data structure of Lemma~\ref{lem:grid}(2).

Consider a query unit disk $D_q$ whose center is $q$. If $q$ is not in a cell of $\calC$, then  $P(D_q)=\emptyset$ and we return null. Otherwise, as discussed in Section~\ref{sec:dynamicreport}, it suffices to
report $P(C') \cap D_q$ for all cells $C' \in N(C)$. If $C' = C$,
we report all points of $P(C)$. Otherwise, $C$ and $C'$ are separated by an axis-parallel line. Without loss of generality, we assume that $C$ and $C'$ are separated by a horizontal line $\ell$ with $C'$ above $\ell$ and $C$ below $\ell$. As $q\in C$, $q$ is separated from $C'$ by $\ell$.
Our goal is to report points of $P(C')\cap D_q$. We formulate the problem as the following {\em static line-separable UDRR problem}:

\begin{problem}(Static line-separable UDRR)
\label{problem:LS-UDRR-old}
Given a set $Q$ of $m$ points above a horizontal line $\ell$ such that all
points of $Q$ are contained in a unit disk, build a data structure so that for
any query unit disk $D_q$ centered at a point $q$ below $\ell$, the
points of $Q$ in $D_q$ can be reported efficiently.
\end{problem}

Note that comparing to Problem~\ref{problem:dynamic-LS-UDRR}, we require all points of $Q$ to be contained in a unit disk. This constraint, which suffices for our purpose, will make our algorithm simpler. To solve our problem, we can set $Q=P(C')$ since all points of $P(C')$ are in $C'$, which is contained in a unit disk. In what follows, we will prove Lemma~\ref{lem:lineUDRR}.

\begin{lemma}\label{lem:lineUDRR}
For the static line-separable UDRR, we can build a data structure of $O(m)$ space in
$O(m\log m)$ time that can answer each query in $O(k+\log m)$ time,
where $k$ is the output size.
\end{lemma}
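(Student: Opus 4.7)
The plan is to mimic the classical Chazelle--Guibas--Lee half-plane range reporting data structure \cite{ref:ChazelleTh85}, replacing lines by the arcs already introduced in Section~\ref{sec:lemdynamiclineUDRR}. For each point $p \in Q$, let $\gamma_p$ be the portion of $\partial D_p$ that lies below $\ell$; since $p$ is above $\ell$, $\gamma_p$ is an $x$-monotone $\bbR^+$-constrained arc with both endpoints on $\ell$. As observed earlier, $p \in D_q$ iff $q$ lies above $\gamma_p$. Let $\calA = \{\gamma_p : p \in Q\}$. Reporting $Q \cap D_q$ therefore reduces to reporting all arcs of $\calA$ that lie below the query point $q$. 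Because every two arcs of $\calA$ cross at most once and each is $x$-monotone, $\calA$ forms a family of pseudo-lines, which is exactly the structural property that the Chazelle--Guibas--Lee machinery needs.

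I would then build the \emph{lower-envelope layers} of $\calA$: let $\calA_1$ be the set of arcs appearing on the lower envelope of $\calA$, $\calA_2$ the lower-envelope arcs of $\calA \setminus \calA_1$, and so on, until $\calA$ is exhausted. This produces a sequence of at most $m$ layers $L_1, L_2, \ldots$, each stored as the $x$-sorted sequence of arcs along the corresponding lower envelope. The crucial combinatorial fact is that a query point $q$ is above exactly the prefix $L_1, \ldots, L_t$ of layers whose envelopes pass below $q$ at $x = x(q)$; moreover, within each such layer $L_i$, the arcs below $q$ form a contiguous sub-sequence around the arc of $L_i$ that lies directly below $q$. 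This is the pseudo-line analog of the standard property for lines and follows because within one layer the arcs form a lower envelope, so the set of arcs of $L_i$ below $q$ can be discovered by walking left and right from the arc directly below $q$ until an arc above $q$ is encountered. Computation of all the layers in $O(m \log m)$ time and $O(m)$ space is deferred to Section~\ref{sec:ComputingLayers}, as announced in the outline.

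Next, I would build a fractional cascading structure \cite{ref:ChazelleFr86,ref:ChazelleFr862} over the $x$-sorted sequences of envelope breakpoints of $L_1, L_2, \ldots$, so that given $x(q)$, once we locate the arc of $L_1$ below $q$ in $O(\log m)$ time, subsequent locations in $L_2, L_3, \ldots$ cost $O(1)$ each. To answer a query at $q$, we locate the arc $a_1 \in L_1$ above $x(q)$, test whether $q$ is above $a_1$, and if so report $a_1$ and walk left and right along $L_1$ reporting further arcs below $q$ until we exit $L_1$'s witnesses; we then descend to $L_2$, using fractional cascading to locate the arc of $L_2$ above $x(q)$ in $O(1)$, and repeat. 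We stop at the first layer $L_{t+1}$ whose located arc is above $q$. Since each of $L_1, \ldots, L_t$ contributes at least one reported arc, the total query cost is $O(\log m + k)$.

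The only non-routine step is verifying the ``contiguous subsequence'' property, i.e.\ that the set of arcs below $q$ within a single layer is connected along the envelope. This will be the main obstacle, but it follows from the pseudo-line property: if two arcs $\gamma, \gamma'$ of the same layer are both below $q$ and a third arc $\gamma''$ of the same layer, lying between them along the envelope, is above $q$, one can derive a contradiction by counting intersections of $\gamma''$ with $\gamma \cup \gamma' \cup \rho(q)$. Once this is in hand, the rest (layer construction in $O(m\log m)$, fractional cascading space $O(m)$, query time $O(\log m + k)$) follows from standard arguments, giving the claimed $O(m)$ space, $O(m\log m)$ preprocessing, and $O(k + \log m)$ query time.
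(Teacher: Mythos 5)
Your proposal matches the paper's proof essentially step for step: the same reduction to reporting arcs of $\calA$ below $q$, the same lower-envelope-layer decomposition (with construction deferred to Theorem~\ref{theorem:ComputingLayers}), the same walk-left-and-right contiguity lemma within a layer (the paper's Lemma~\ref{lem:SearchInOneLayer}), and the same fractional-cascading traversal of the layers. The contiguity fact you flag as the main obstacle is indeed the crux, and the paper proves it via the same ``two unit-radius arcs cross at most once'' reasoning you sketch, together with the auxiliary observation (Lemma~\ref{lem:CenterOrderAndTraversalOrder}) that the envelope arcs' endpoints and centers are $x$-sorted in traversal order.
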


Before proving Lemma~\ref{lem:lineUDRR}, we prove the following main result for UDRR using Lemma~\ref{lem:lineUDRR}.

\begin{theorem}
    \label{theo:udrr}
    Given a set $P$ of $n$ points in the plane, we can build a data structure of $O(n)$ space in $O(n \log n)$ time such that given any query unit disk, the points of $P$ in the disk can be reported in $O(\log n + k)$ time, where $k$ is the output size.
\end{theorem}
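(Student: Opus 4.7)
The plan is to combine Lemma~\ref{lem:grid} (static conforming coverage) with Lemma~\ref{lem:lineUDRR} (static line-separable UDRR) exactly as the dynamic analogue (Theorem~\ref{theo:dynamicudrr}) does, only without the overhead of maintenance. First I would preprocess $P$ by invoking Lemma~\ref{lem:grid}(1) to compute a conforming coverage set $\calC$ of $O(n)$ cells together with $P(C)$ and $N(C)$ for every $C\in\calC$, in $O(n\log n)$ time and $O(n)$ space, and then build the point-location structure of Lemma~\ref{lem:grid}(2) on $\calC$ in the same bounds.

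Next, for each non-empty cell $C\in\calC$ and for each of the four axis-parallel edges $e$ of $C$, I would construct the static line-separable UDRR data structure $\calD_e(C)$ of Lemma~\ref{lem:lineUDRR} on the point set $P(C)$, viewing $P(C)$ as a set of points lying on one side of the supporting line of $e$ (this is permitted because $C$ is an axis-parallel rectangle of side length at most $1/2$ by Definition~\ref{def:cell}(1), so $P(C)$ is contained in a unit disk, meeting the input requirement of Problem~\ref{problem:LS-UDRR-old}). Each such $\calD_e(C)$ uses $O(|P(C)|)$ space and is built in $O(|P(C)|\log|P(C)|)$ time. Since every cell contributes four structures and $\sum_{C\in\calC}|P(C)|=n$, the overall space is $O(n)$ and the overall preprocessing time is $O(n\log n)$.

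For a query unit disk $D_q$ with center $q$, I would first call the point-location structure of Lemma~\ref{lem:grid}(2) in $O(\log n)$ time; if no cell of $\calC$ contains $q$, Definition~\ref{def:cell}(5) lets me return null immediately. Otherwise, let $C$ be the cell containing $q$ and $N(C)$ its neighbor set. By Definition~\ref{def:cell}(4), $P(D_q)\subseteq\bigcup_{C'\in N(C)}P(C')$, so it suffices to output $P(C')\cap D_q$ for every $C'\in N(C)$. For the case $C'=C$, Definition~\ref{def:cell}(1) guarantees every point of $P(C)$ lies in $D_q$, so I report $P(C)$ wholesale. For $C'\ne C$, Definition~\ref{def:cell}(3) provides an axis-parallel line separating $C$ from $C'$; this line coincides with the supporting line of some edge $e$ of $C'$ and puts $q$ strictly on the opposite side of $e$ from $P(C')$. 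I then issue the query to $\calD_e(C')$ and report its output in $O(\log|P(C')|+k_{C'})$ time by Lemma~\ref{lem:lineUDRR}. Since $|N(C)|=O(1)$, summing over the constant number of neighbors gives total query time $O(\log n+k)$, which establishes the theorem.

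There is no real obstacle here: the nontrivial components (the coverage set and the line-separable data structure) have been discharged by Lemmas~\ref{lem:grid} and \ref{lem:lineUDRR}. The only point requiring a small verification is that for each $C'\in N(C)$ one can identify an edge $e$ of $C'$ whose supporting line is the separating line of Definition~\ref{def:cell}(3); this follows because cells of $\calC$ are axis-parallel rectangles, so any axis-parallel line separating two cells must contain an edge of each.
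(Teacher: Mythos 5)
Your proof follows essentially the same approach as the paper: compute the conforming coverage via Lemma~\ref{lem:grid}, build the line-separable structures of Lemma~\ref{lem:lineUDRR} for each edge of each non-empty cell, and at query time dispatch to the $O(1)$ neighboring cells. One small imprecision in your closing remark: the separating line of Definition~\ref{def:cell}(3) need not \emph{coincide} with an edge of $C'$ (it may lie strictly between the cells); what matters, and what holds because the cells are axis-parallel rectangles, is that $C'$ has an edge whose supporting line is parallel to that separator and itself separates $C$ from $C'$, which is exactly the form the paper uses.
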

\begin{proof}
    We first compute a conforming coverage set $\calC$ of $O(n)$ cells for $P$ and build the data structure $\calD$ of  Lemma~\ref{lem:grid}(2). Then, for each cell $C\in \calC$ that contains at least one point of $P$, we construct a data structure $\calD_e(C)$ of Lemma~\ref{lem:lineUDRR} for $P(C)$ with respect to the supporting line of each edge $e$ of $C$, which takes $O(|P(C)|)$ space and $O(|P(C)|\cdot \log |P(C)|)$ time. Since each cell of $\calC$ has four edges and $\sum_{C\in \calC}|P(C)|=n$, the total space of the overall data structure is $O(n)$ and the total preprocessing time is $O(n\log n)$. 

    Given a query unit disk $D_q$ with center $q$, we first check whether $q$ is in a cell of $\calC$, and if so, find such a cell; this takes $O(\log n)$ time by Lemma~\ref{lem:grid}. If no cell of $\calC$ contains $q$, then $P\cap D_q=\emptyset$ and we simply return null. Otherwise, let $C$ be the cell of $\calC$ that contains $q$. We first report all points of $P(C)$. Next, for each $C'\in N(C)$, by Definition~\ref{def:cell}(3), $C$ and $C'$ are separated by an axis-parallel line $\ell$. Since each edge of $C$ and $C'$ is axis-parallel, $C'$ must have an edge $e$ whose supporting line is parallel to $\ell$ and separates $C$ from $C'$. Using  $\calD_e(C')$, we report all points of $P(C')$ inside $D_q$. As $|N(C)|=O(1)$, the total query time is $O(\log n + k)$ by Lemma~\ref{lem:lineUDRR}.
\end{proof}

\subsection{Static line-separable UDRR: Proving Lemma~\ref{lem:lineUDRR}}
\label{sec:subproblem}

We now prove Lemma~\ref{lem:lineUDRR}. For convenience, we use $n$ to denote the size of $Q$ instead of $m$. 

Consider a query unit disk $D_q$ whose center $q$ is below $\ell$. The goal of
the query is to report $Q(D_q)$.
As in Section~\ref{sec:lemdynamiclineUDRR} for the dynamic problem, we define $\calA$ as the set of arcs below $\ell$ of the circles centered at the points of $Q$. 
Reporting the points of $Q$ in $D_q$ becomes reporting the arcs of $\calA$ below $q$.

Define $\calU_1$ as the lower envelope of the arcs of $\calA$ (see
Fig.~\ref{fig:OneLowerEnvelope}). Since each arc of $\calA$ is $x$-monotone,
$\calU_1$ is also $x$-monotone. Note that $\calU_1$ may have several connected
components (see Fig.~\ref{fig:OneLowerEnvelopeSeveralComponents}). Observe that
$q$ is above an arc of $\calA$ if and only if $q$ is above $\calU_1$ (see
Fig.~\ref{fig:OneLowerEnvelope}). It has
been proved by Wang and Zhao (Lemma 9 in~\cite{ref:WangCo22}) that each arc of $\calA$ can
contribute at most one arc in $\calU_1$. Suppose we traverse arcs of $\calU_1$
from left to right; the order of these arcs encountered during our traversal is
called the \emph{traversal order}.
The following lemma shows that the traversal order is consistent with the order of the arcs of $\calU_1$ sorted by their centers from left to right.

%

 \begin{figure}[t]
     \centering
     \begin{minipage}[t]{0.48\textwidth}
         \centering
         \includegraphics[height=1.2in]{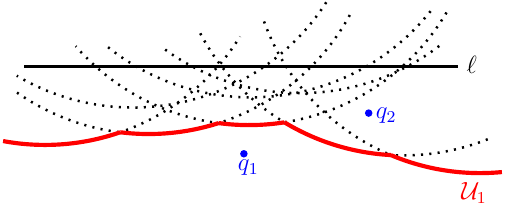}
         \caption{Illustrating the lower envelope $\calU_1$. Black dotted arcs are boundaries of unit disks centered at points of $Q$. The point $q_1$ is below $\calU_1$ while $q_2$ is above $\calU_1$.}
         \label{fig:OneLowerEnvelope}
     \end{minipage}
     \hspace{0.08in}
     \begin{minipage}[t]{0.48\textwidth}
         \centering
         \includegraphics[height=1.2in]{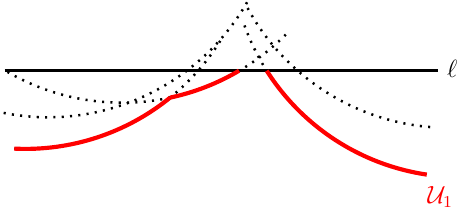}
         \caption{Illustrating a lower envelope $\calU_1$ with two connected components.}
         \label{fig:OneLowerEnvelopeSeveralComponents}
     \end{minipage}
 \end{figure}


\begin{lemma}
    \label{lem:CenterOrderAndTraversalOrder}
    The centers of arcs in $\calU_1$ following the traversal order are sorted in ascending order by $x$-coordinate.
\end{lemma}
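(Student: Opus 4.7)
The plan is to establish the ordering by examining each pair $(\gamma_p,\gamma_{p'})$ of consecutive arcs along the traversal of $\calU_1$ and arguing $x(p) < x(p')$ in the two possible configurations: (i) $\gamma_p$ and $\gamma_{p'}$ lie on the same connected component of $\calU_1$ and meet at a common breakpoint $v$; (ii) they lie in different connected components, separated by a gap of $\calU_1$.

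For case (i), I would use the explicit expression for an arc, namely $\gamma_q:\ y = y(q)-\sqrt{1-(x-x(q))^2}$, whose slope at abscissa $x$ equals $(x-x(q))/\sqrt{1-(x-x(q))^2}$. Because the function $t\mapsto t/\sqrt{1-t^2}$ is strictly increasing on $(-1,1)$, for a fixed $x$ this slope is a strictly decreasing function of $x(q)$. Since $\gamma_p$ lies strictly below $\gamma_{p'}$ just to the left of $v$ and strictly above it just to the right, the slope of $\gamma_p$ at $v$ must exceed that of $\gamma_{p'}$; by the monotonicity just noted, this forces $x(p) < x(p')$. The corner case in which $v$ lies on $\ell$ (so that one arc ends and the other begins at the same endpoint on $\ell$) is handled identically using one-sided slopes, which remain finite since $y(p),y(p')>0$ implies the slope evaluates to $\pm\sqrt{1-y(\cdot)^2}/y(\cdot)$.

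For case (ii), the key observation is that a connected component of $\calU_1$ cannot terminate strictly inside the support of its rightmost contributing arc $\gamma_p$: otherwise $\gamma_p$ would still be defined just to the right and some other arc would realize the envelope there, extending the component. Hence the component ending with $\gamma_p$ terminates at the right endpoint of its support, $x(p)+\sqrt{1-y(p)^2}$, and symmetrically the next component begins at $x(p')-\sqrt{1-y(p')^2}$. From ``end $\le$ start'' I obtain $x(p')-x(p) \ge \sqrt{1-y(p)^2}+\sqrt{1-y(p')^2}\ge 0$, and hence $x(p)<x(p')$ (strictly, since any $q\in Q$ with $y(q)\ge 1$ contributes no nontrivial arc and may be ignored). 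Chaining (i) and (ii) across the traversal yields the lemma.

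The main obstacle I expect is articulating case (ii) cleanly, in particular justifying that a component of $\calU_1$ must terminate exactly at the support endpoint on $\ell$ of its last arc; once this is nailed down, the rest is a one-line calculus fact about the monotonicity of $t/\sqrt{1-t^2}$.
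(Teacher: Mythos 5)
Your argument is correct, but it takes a genuinely different and more self-contained route than the paper's. The paper proves the lemma as a near-immediate corollary of Lemma~9 of Wang--Zhao~\cite{ref:WangCo22}: that lemma guarantees that as one traverses $\calU_1$ from left to right, both the left endpoints $a_j$ and the right endpoints $b_j$ of the contributing arcs $\gamma_j$ of $\calA$ appear in non-decreasing $x$-order; since the $x$-coordinate of the center is the average $x(p_j)=(x(a_j)+x(b_j))/2$ of the two endpoints on $\ell$, averaging the two endpoint inequalities gives $x(p_i)\le x(p_{i+1})$ in one line. Your proof instead works from first principles using the explicit parametrization $y=y(q)-\sqrt{1-(x-x(q))^2}$: at an interior breakpoint you compare one-sided slopes and invoke the monotonicity of $t\mapsto t/\sqrt{1-t^2}$, and across a gap between connected components you compare the right support endpoint $x(p)+\sqrt{1-y(p)^2}$ with the left support endpoint $x(p')-\sqrt{1-y(p')^2}$. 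Both arguments are sound; yours buys independence from the cited endpoint-ordering lemma (and delivers strict inequality), at the cost of an explicit two-case analysis, while the paper's buys brevity by reusing existing machinery. One minor observation: the ``corner case'' you append to case~(i), where the breakpoint lies on $\ell$, is exactly the degenerate instance of your case~(ii) with ``end $=$ start,'' so it is already subsumed there; the one-sided slope computation $\pm\sqrt{1-y(\cdot)^2}/y(\cdot)$ is correct but not strictly needed.
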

\begin{proof}
    \label{proof:lem-CenterOrderAndTraversalOrder}
    Consider two consecutive arcs $\calU_1$ following the traversal order and
	let $\gamma_i$ and $\gamma_{i+1}$ be the two arcs of $\calA$ containing them, respectively.
    Let $p_i$ and $p_{i+1}$ are the centers of $\gamma_i$ and $\gamma_{i + 1}$, respectively. Our goal is to prove that $x(p_i)\leq x(p_{i+1})$.

    Let $a_j$ and $b_j$ be the left and right endpoints of $\gamma_j$, for $j\in \{i,i+1\}$. It has been proved in Lemma 9 of~\cite{ref:WangCo22} that
		$x(b_i)\leq x(b_{i+1})$ because the subarc of $\gamma_i$ on $\calU_1$ appears
		in the front of that of $\gamma_{i+1}$ in the traversal order; symmetrically, $x(b_i)\leq x(b_{i+1})$ also holds.
  As such, we obtain $x(p_i)\leq x(p_{i+1})$ since $x(p_j)=(x(a_j)+x(b_j))/2$ for $j\in \{i,i+1\}$.
\end{proof}



Define $Q_1$ as the set of centers of all arcs of $\calU_1$.
We say that an arc $\gamma$ of $\calU_1$ {\em spans} a point $p$, if $x(p)$ is
between the $x$-coordinates of the two endpoints of $\gamma$.

\begin{lemma}
    \label{lem:SearchInOneLayer}
    Suppose $q$ is a point below $\ell$ and the arc of $\calU_1$ spanning $q$ is
	known; then the points of $Q_1\cap D_q$ can be reported
	in $O(|Q_1\cap D_q|)$ time (assuming that $\calU_1$ is stored in a data
	structure so that one can access from each arc of $\calU_1$ its
	neighboring arcs in $O(1)$ time).
\end{lemma}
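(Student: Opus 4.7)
The plan is a walk along $\calU_1$ starting from the given spanning arc. Let $\gamma_0$ denote the arc of $\calU_1$ spanning $q$, and $p_0$ its center. First, I would test whether $q$ lies strictly above $\gamma_0$. If $q$ is on or below $\gamma_0$, then since $\gamma_0$ realizes $\calU_1$ at $x(q)$, every arc of $\calA$ defined at $x(q)$ lies at or above $q$, which implies $q\notin D_p$ for every $p\in Q$, so $Q_1\cap D_q=\emptyset$ and we return immediately in $O(1)$ time. Otherwise $p_0\in D_q$, and I report $p_0$.

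Next, using the constant-time adjacency access on $\calU_1$, I would walk outward from $\gamma_0$ in both directions. At each newly visited arc, with center $p$, I test $|pq|\le 1$ in $O(1)$ time; if this holds, I report $p$ and continue in that direction; otherwise I stop walking in that direction. Each visited arc incurs $O(1)$ work, and the number of visited arcs is $|Q_1\cap D_q|$ plus at most two ``stopping'' arcs (one per direction), giving an $O(|Q_1\cap D_q|)$ total time, since $|Q_1\cap D_q|\ge 1$ once the initial test is passed.

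Correctness of the stopping rule hinges on a contiguity claim: $Q_1\cap D_q$ forms a contiguous block in the traversal order of $\calU_1$. To establish this I would show that if $p,p''\in Q_1\cap D_q$ and $p'\in Q_1$ lies between them in the traversal order, then $p'\in D_q$. Geometrically, since $p'\in Q_1$ has its arc $\gamma_{p'}$ participating in $\calU_1$ between the envelope pieces of $\gamma_p$ and $\gamma_{p''}$, and since by Lemma~\ref{lem:CenterOrderAndTraversalOrder} the centers $p,p',p''$ are in $x$-sorted order, the arc $\gamma_{p'}$ must dip below both $\gamma_p$ and $\gamma_{p''}$ on its envelope stretch; this forces the lens $D_p\cap D_{p''}\cap \bbR^-$ to be contained in $D_{p'}$, and in particular $q$, which lies in this lens, lies in $D_{p'}$.

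The main obstacle is making the geometric containment argument rigorous, and in addition ruling out the possibility that a point of $Q_1\cap D_q$ could live in a connected component of $\calU_1$ different from the one containing $\gamma_0$ (a subtle point because an arc on a far-away component can still have its underlying unit circle extending across $x(q)$). Both issues rely on the lower-envelope structure of $\calU_1$ and on the hypothesis that $Q$ is contained in a single unit disk; the latter keeps the $x$-spread of $Q$ bounded by~$2$, which is what controls the geometry of the bracketing lens and the single-component behavior.
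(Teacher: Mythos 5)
Your algorithm is exactly the paper's: test whether $q$ lies above the spanning arc, report its center if so, then walk outward in both directions along $\calU_1$ until a center falls outside $D_q$. The time analysis is also the same (each direction incurs at most one non-reported arc). Where you diverge, and where the gap lies, is in the correctness argument.

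You try to establish a global contiguity claim: if $p,p''\in Q_1\cap D_q$ and $p'$ is between them in traversal order, then $p'\in D_q$. To get this you assert the set containment $D_p\cap D_{p''}\cap \bbR^-\subseteq D_{p'}$ and justify it by the remark that $\gamma_{p'}$ dips below $\gamma_p$ and $\gamma_{p''}$ on the stretch where $\gamma_{p'}$ contributes to $\calU_1$. That observation is local to one $x$-interval; it does not by itself bound where the lens $D_p\cap D_{p''}\cap\bbR^-$ can lie, nor does it control the behaviour of $\gamma_{p'}$ outside that interval. As stated, the containment is an unproven (and quite strong) geometric claim, not a consequence of the dipping observation, and you yourself flag this as ``the main obstacle.'' You also flag the multi-component issue as a separate open concern. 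Both concerns are real and are not resolved in the proposal.

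The paper avoids the global lens claim entirely by proving a one-sided monotonicity statement anchored at the spanning arc $\gamma_i'$: for $j>i$, if $p_j\notin D_q$ then $p_{j+1}\notin D_q$ (and symmetrically to the left). The key ingredient you are missing is the position constraint $x(q)\le x(b_j)$ (and in particular $x(q)<x(u)$ when $\gamma_j'$ and $\gamma_{j+1}'$ meet at a vertex $u$), which follows from $q$ being spanned by $\gamma_i'$ with $i<j$. With that, the argument splits cleanly into two local cases: if $\gamma_j'$ and $\gamma_{j+1}'$ share a vertex $u$, then left of $u$ the arc $\gamma_j$ is below $\gamma_{j+1}$ (same radius, single crossing below $\ell$), so $q$ not above $\gamma_j$ forces $q$ not above $\gamma_{j+1}$; if they do not meet (gap between components), then $x(b_j)<x(a_{j+1})$, so $x(q)<x(a_{j+1})$ and $q$ cannot be above $\gamma_{j+1}$ at all. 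Note that this second case is precisely what dissolves your worry about far-away components: once the walk reaches the end of a component, every center further in traversal order is automatically outside $D_q$, with no need for a separate argument. So the plan is correct and the intuition (contiguity around the spanning arc) is right, but you would need to replace the lens claim with a local, step-by-step argument of the paper's kind, using the x-ordering of centers from Lemma~\ref{lem:CenterOrderAndTraversalOrder} and the spanning-arc position constraint, to make the proof complete.
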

\begin{proof}
    \label{proof:lem-SearchInOneLayer}
    Let $\gamma'_1,\gamma'_2,\ldots,\gamma'_t$ be the arcs of $\calU_1$ following their
	traversal order, where $t$ is the number of arcs of $\calU_1$. For each
	$1\leq i\leq t$, let $p_i$ be the center of $\gamma_i'$ and $\gamma_i$ be the arc of
	$\calA$ containing $\gamma_i'$. By definition, $Q_1=\{p_1,p_2,\ldots, p_t\}$.

	If no arc of $\calU_1$ spans $q$, then it is not difficult to see that
	$Q_1\cap D_q=\emptyset$. In the following, we assume that $\calU_1$ has an arc
	spanning $q$, denoted by $\gamma_i'$.

	If $q$ is below $\gamma_i'$, then $q$ is below $\calU_1$ and thus $Q_1\cap
	D_q=\emptyset$. We thus assume that $q$ is above $\gamma_i'$ (see Fig.~\ref{fig:SearchInOneLowerEnvelope}, where $\gamma_i'$ is $\gamma_3'$). In this case, $p_i$
	is in $D_q$ and we report it. Next, starting from $\gamma_i'$, we traverse
	on the arcs of $\calU_1$ rightwards (resp., leftwards) until the distance
	between $q$ and the center of an arc is larger than $1$.
    Specifically, for the rightwards case, we check the arcs of $\{\gamma'_{i + 1},
	\gamma'_{i + 2}, ...\}$ in this order and for each arc $\gamma_j'$, $j\geq i+1$, if $p_j$ is in
	$D_q$, then we report $p_j$ and proceed on $j+1$; otherwise, we halt the procedure. The leftwards
	case is symmetric.
    To see the correctness, we only argue the rightwards case as the other case is symmetric.

 \begin{figure}[t]
     \centering
     \begin{minipage}[t]{0.48\textwidth}
         \centering
         \includegraphics[height=1.3in]{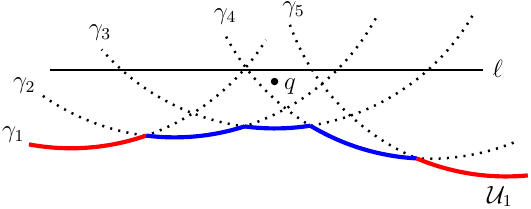}
         \caption{The three blue arcs are below $q$ while the two red arcs are above $q$.}
         \label{fig:SearchInOneLowerEnvelope}
     \end{minipage}
     \hspace{0.08in}
     \begin{minipage}[t]{0.48\textwidth}
         \centering
         \includegraphics[height=1.3in]{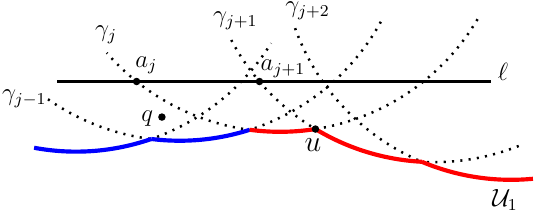}
         \caption{Illustrating the case where $\gamma_j'$ and $\gamma_{j+1}'$ intersect at a vertex $u$ of $\calU_1$.}
         \label{fig:intersect}
     \end{minipage}
 \end{figure}

    Suppose $p_j$ is outside $D_q$. Our goal is to show that $p_h$ is not in
	$D_q$ for any $j+1\leq h\leq t$.
    Consider the arc $\gamma'_{j + 1}$. There are two cases depending on whether
	$\gamma'_j$ and $\gamma'_{j + 1}$ intersect. Let $a_i$ and $b_i$ be the left and right
	endpoints of $\gamma_i$, respectively,
	for $i\in \{j,j+1\}$.

	\begin{itemize}
	\item
    If $\gamma'_j$ and $\gamma'_{j + 1}$ intersect, say, at a point $u$, then $u$ is a
	vertex of $\calU_1$ (see Fig.~\ref{fig:intersect}). As $q$ is spanned by $\gamma_i'$ and $i<j$, it holds that
	$x(q)<x(u)$. Since $p_j$ is outside $D_q$, $q$ is not above $\gamma_j$, and more
	specifically, not above the portion of $\gamma_j$ between $a_j$ and $u$. Since
	$\gamma'_j$ and $\gamma'_{j + 1}$ intersect and both arcs have the same radius, the
	portion of $\gamma_j$ between $a_j$ and $u$ is below the portion of $\gamma_{j+1}$
	between $a_{j+1}$ and $u$. Since $q$ is not above the
	portion of $\gamma_j$ between $a_j$ and $u$, $q$ cannot be above the portion of $\gamma_{j+1}$
	between $a_{j+1}$ and $u$. As $x(q)<x(u)$, this implies that $q$ cannot
	be above $\gamma_{j+1}$ and thus $p_{j+1}$ cannot be in $D_q$.

	\item
	If $\gamma'_j$ and $\gamma'_{j + 1}$ do not intersect, then both the right endpoint
	$b_j$ of $\gamma_j$ and the left endpoint $a_{j+1}$ of $\gamma_{j+1}$ are
	vertices of $\calU_1$ and $x(b_j)<x(a_{j+1})$. As $q$ is spanned by
	$\gamma_i'$ and $i<j$, $x(q)\leq x(b_j)$, and thus $x(q)<x(a_{j+1})$.
	Hence, $q$ cannot be above $\gamma_{j+1}$ and therefore $p_{j+1}$ cannot be in
	$D_q$.
	\end{itemize}

	The above proves that $p_{j+1}$ cannot be in $D_q$. Following the same
	analysis, we can show that $p_h$ cannot be in $D_q$ for all
	$h=j+2,j+3,\ldots,t$.

    Clearly, the algorithm runs in $O(k)$ time, where $k=|Q_1\cap D_q|$. This proves the lemma.
\end{proof}

By Lemma~\ref{lem:SearchInOneLayer}, if we store arcs of $\calU_1$ by a balanced
binary search tree, given a query point $q$ below $\ell$, the arc of $\calU_1$
spanning $q$ can be computed in $O(\log n)$ time and consequently $Q_1\cap D_q$ can be
reported in additional $O(|Q_1\cap D_q|)$ time. Recall that our goal is to
report $Q\cap D_q$. To report the remaining points, i.e., those of $Q\setminus
Q_1$ in $D_q$, we apply the idea recursively on $Q\setminus Q_1$. Specifically,
define $\calU_2$ as the lower envelope of the arcs of $\calA$ after the arcs
defined by the points of $Q_1$ are removed; let $Q_2$ denote the set of centers
of the arcs of $\calU_2$. In general, define $\calU_i$ as the lower envelope of
the arcs of $\calA$ after the arcs defined by the points of $\bigcup_{j=1}^{i-1}
Q_j$ are removed for $i=2,3,\ldots$ (see Fig.~\ref{fig:MultipleOneLowerEnvelopes}); let $Q_i$ denote the set of centers of the arcs of $\calU_i$.
We call $\{\calU_i\}$ the {\em lower envelope layers} of $\calA$. The
following theorem, which will be proved in Section~\ref{sec:ComputingLayers},
computes the lower envelope layers.

\begin{figure}[t]
    \centering
    \includegraphics[height=1.0in]{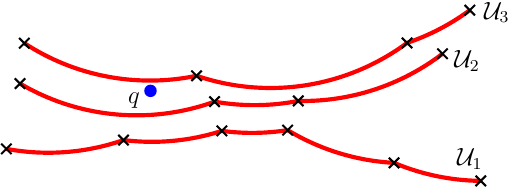}
    \caption{Illustrating layers of lower envelopes $\calU_1, \calU_2, \calU_3$.}
    \label{fig:MultipleOneLowerEnvelopes}
\end{figure}


\begin{theorem}
\label{theorem:ComputingLayers}
    The lower envelope layers of $\calA$ can be computed in $O(n\log n)$ time and $O(n)$ space, where $n=|\calA|$.
\end{theorem}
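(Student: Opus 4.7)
The approach adapts the onion-peeling framework for convex layers (Chazelle's classical algorithm) to the unit-arc setting. The starting point is three preliminary observations: (i) any two arcs in $\calA$ cross at most once, since the two underlying unit circles have centers in $\bbR^+$ and their (at most two) intersection points are symmetric about the line through the centers, which lies in $\bbR^+$, so at most one intersection falls below $\ell$; (ii) by Lemma~9 of~\cite{ref:WangCo22}, each arc contributes at most one connected piece to any lower envelope, so $\sum_i |Q_i| = n$ and the total combinatorial size of all layers is $O(n)$; and (iii) sorting the arcs once by the $x$-coordinate of their centers yields an order consistent with the traversal order on every $\calU_i$, by the same reasoning as in Lemma~\ref{lem:CenterOrderAndTraversalOrder}.

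Given these facts, I would first sort the arcs by center $x$-coordinate in $O(n\log n)$ time, then compute $\calU_1$ by divide-and-conquer on the sorted sequence: split into halves, recursively compute the lower envelopes of each half, and merge in linear time via a left-to-right sweep (the pseudo-arc property ensures that two merged envelopes together have $O(n)$ combinatorial complexity, and the sweep locates each intersection in constant amortized time). This computes $\calU_1$ in $O(n\log n)$ time. After recording $\calU_1$, ``peel it off'' by marking its arcs as deleted and repeat to produce $\calU_2,\calU_3,\ldots$. To avoid spending $\Theta(n\log n)$ per layer, I would maintain a hive-graph-style auxiliary structure storing, for each arc on the current envelope, pointers to its vertical neighbors on the next layer, so that after removing an arc only $O(1)$ amortized edges need repair. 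Summed over all layers this contributes an extra $O(n)$ term, giving $O(n\log n)$ total time and $O(n)$ space.

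The principal obstacle is generalizing Chazelle's hive-graph amortization---originally designed for vertices and segments of straight convex hulls---to $x$-monotone unit arcs. The combinatorial ingredient required is exactly that each pair of arcs cross at most once, which holds here by observation (i). The remaining work is to define the correct vertical-visibility structure relating consecutive $\bbR^+$-constrained arc layers and to verify that Chazelle's potential-function argument, which charges each rebuild event to a unique crossing between two arcs that become vertically adjacent, still applies. Because our arcs satisfy the same single-crossing axioms as straight convex-hull edges, this adaptation amounts largely to routine bookkeeping---replacing straight segments by $\bbR^+$-constrained arcs in the planar subdivision---rather than a new combinatorial insight, yielding the claimed $O(n\log n)$ time and $O(n)$ space bounds.
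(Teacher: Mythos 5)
Your preliminary observations are correct and match what the paper uses: arcs of $\calA$ pairwise cross at most once, each arc contributes at most one piece to any lower envelope (Lemma~9 of~\cite{ref:WangCo22}), and sorting by center $x$-coordinate is consistent with the traversal order (cf.\ Lemma~\ref{lem:CenterOrderAndTraversalOrder}). Computing a single layer $\calU_1$ by divide-and-conquer merge is also fine. The gap is in the peeling step and its amortization. You invoke a ``hive-graph-style'' auxiliary structure with vertical-neighbor pointers and claim that removing an arc requires only $O(1)$ amortized repair, charging each rebuild to ``a unique crossing between two arcs that become vertically adjacent.'' Neither claim is established. The hive graph is a static planar subdivision used for fractional cascading and ray shooting, not a structure that supports envelope-layer maintenance under arc removal, and you give no construction or invariant for it. The crossing-charging scheme is also not Chazelle's argument: removing an arc from $\calU_1$ can expose a long run of arcs belonging to $\calU_2$, and without some hierarchical decomposition there is no reason the total exposure work over all layers is $O(n)$; charging to ``vertically adjacent crossings'' does not give a bound because the same pair can repeatedly become and cease to be adjacent as layers are stripped.

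Chazelle's convex-layers amortization actually rests on a balanced binary tree over the $x$-sorted input, maintaining the (lower) hull of every subtree, and, upon deleting a hull vertex, repairing common tangents along an $O(\log n)$-height root-to-leaf path; the $O(n\log n)$ bound comes from a potential argument tied to the tree height, not to crossings. The paper (Section~\ref{sec:ComputingLayers}) adapts exactly this scheme, but \emph{dualizes} first: it replaces lower-envelope layers of $\calA$ with lower $\alpha$-hull layers $\{\calH_i\}$ of the point set $Q$, observes a vertex/arc duality between $\calH_i$ and $\calU_i$, builds a tree graph $G$ whose edges are common tangent arcs of sibling subtrees' $\alpha$-hulls (Lemmas~\ref{lem:CommonArcTangent} and~\ref{lem:ConstructG}), and deletes vertices while bounding the total repair cost by a ``promotion/confirmation'' count keyed to the tree height (Lemma~\ref{lem:ComputingLowerAlphaHullLayers}). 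If you wish to stay in the primal (arc) formulation you would still need a tree hierarchy of sub-envelopes and an analogous potential argument; the single-crossing property justifies the tangent/merge computations along the way but is not, by itself, a substitute for that amortization.
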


\paragraph{Proving Lemma~\ref{lem:lineUDRR}.}
We now have all ingredients to prove Lemma~\ref{lem:lineUDRR}. We compute the lower
envelope layers of $\calA$ by Theorem~\ref{theorem:ComputingLayers}. Then, we
construct a fractional cascading data structure on the vertices of the lower
envelope layers~\cite{ref:ChazelleFr86,ref:ChazelleFr862}.
This finishes the preprocessing, which takes $O(n)$ space and
$O(n\log n)$ time in total. Given a query unit disk $D_q$ centered at a point $q$
below the line $\ell$, using the fractional cascading data structure, we can
compute the arc of $\calU_1$ spanning $q$ in $O(\log n)$ time and compute the
arc of the next layer $\calU_2,\calU_3,\ldots$ spanning $q$ in $O(1)$ time each.
We compute the arc $\gamma'_i$ of $\calU_i$ that spans $q$ for all $i=1,2,\ldots$
until an index $j$ such that $q$ is below $\gamma_j'$
(and thus $Q_j$ does not have any point in
$D_q$, which is also the case for $Q_{j+1}, Q_{j+2}, \cdots$). Then, for each $\calU_i$ with $1\leq i\leq j-1$, using the arc
$\gamma_i'$, we apply Lemma~\ref{lem:SearchInOneLayer} to report the points of
$Q_i\cap D_q$. Because $q$ is above $\calU_i$ for each $1\leq i\leq j-1$, $Q_i$ has
at least one point in $D_q$. As such, the total time of the query algorithm is
bounded by $O(k+\log n)$, where $k=|Q\cap D_q|$.
This proves Lemma~\ref{lem:lineUDRR}.

\section{Computing layers of lower envelopes}
\label{sec:ComputingLayers}

In this section, we prove Theorem \ref{theorem:ComputingLayers}. We follow the
same notation as before, e.g., $Q$, $\calA$, $\calU_i$, $Q_i$, except that we now use $n$ to
denote $|Q|$ for convenience. Recall that all points of $Q$ are contained in a unit disk
and thus the distance of every two points of $Q$ is at most $2$.
For ease of exposition, we assume that no two points of $Q$ have the same $x$-coordinate.
For any subset $Q'\subseteq Q$, define $\calA(Q')=\{\gamma_p\ |\ p\in Q'\}$.

\begin{figure}[t]
    \centering
    \begin{minipage}[t]{0.41\textwidth}
    \centering
    \includegraphics[height=1.0in]{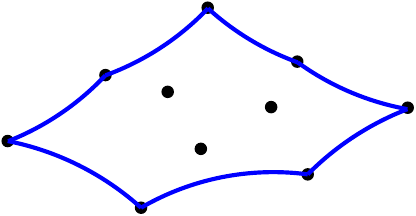}
    \caption{Illustrating the $\alpha$-hull of $Q$, for $\alpha = -1$.}
    \label{fig:AlphaHull}
    \end{minipage}
    \hspace{0.05in}
    \begin{minipage}[t]{0.55\textwidth}
    \centering
      \includegraphics[height=1.6in]{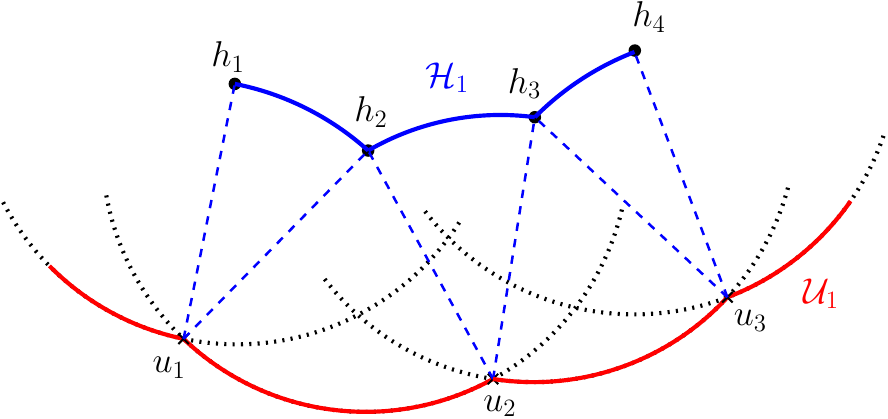}
    \caption{Illustrating the lower $\alpha$-hull $\calH_1$ of $Q$ and the lower
	envelope $\calU_1$ of $\calA$. Black dotted
	arcs are boundaries of underlying disks of arcs of $\calU_1$.
	Vertices of $\calH_1$ are centers of arcs of $\calU_1$, and vice versa.
	}
    \label{fig:LowerAlphaHullLayersToLowerEnvelopes}
    \end{minipage}
\end{figure}

Our goal is to compute the lower envelope layers $\{\calU_i\}$. Instead of computing them directly, we consider a {\em dual problem}.
We borrow a concept {\em $\alpha$-hull} from \cite{ref:EdelsbrunnerOn83}, which is a generalization of the  convex hull. For a real number $\alpha$, a \emph{generalized disk} of radius $1 / \alpha$ is defined to be a disk of radius $1 / \alpha$ if $\alpha > 0$, the complement of a disk of radius $- 1 / \alpha$ if $\alpha < 0$, and a halfplane if $\alpha = 0$. The \emph{$\alpha$-hull} of $Q$ is the intersection of all generalized disks with radius $1 / \alpha$ that contain all points of $Q$ (see Fig.~\ref{fig:AlphaHull}).
For our problem, we are interested in the case $\alpha = -1$. Henceforth,
unless otherwise stated, $\alpha=-1$.

It is known that the leftmost (resp., rightmost) point of $Q$ must be the
leftmost (resp., rightmost) vertex of the $\alpha$-hull of
$Q$~\cite{ref:EdelsbrunnerOn83}. The \emph{lower $\alpha$-hull} of $Q$, denoted
by $\calH_1$, is defined as the portion of the boundary of the $\alpha$-hull
counterclockwise from its leftmost vertex to its rightmost vertex
(similar concepts have been used elsewhere, e.g.,~\cite{ref:DumitrescuSp22}).

For any two points $p$ and $p'$ of $Q$, as their distance is at most $2$, there
are two circular arcs of radius $1$ connecting them. One of these arcs having its
center below the line through $p$ and $p'$ while the other having its center
above the line (recall that $x(p)\neq x(p')$ due to our assumption);
we call the former arc the \emph{concave arc} of $p$ and $p'$,
denoted by $\gamma(p, p')$. Note that the lower $\alpha$-hull $\calH_1$ comprises
concave arcs~\cite{ref:EdelsbrunnerOn83}.


We observe the following {\em duality} between the lower hull $\calH_1$ of $Q$ and the lower envelope $\calU_1$ of $\calA$ (see Fig.~\ref{fig:LowerAlphaHullLayersToLowerEnvelopes}):
The center of each arc in $\calH_1$ is a vertex of $\calU_1$ while the center of each arc of $\calU_1$ is a vertex of $\calH_1$. Due to this duality, $Q_1$ is exactly the set of vertices of $\calH_1$.


Like the lower envelope layers of $\calA$, we can correspondingly define {\em lower
$\alpha$-hull layers} of $Q$. Specifically, define $\calH_2$ as the lower
$\alpha$-hull of $Q \setminus Q_1$, i.e., the remaining points of $Q$ after
vertices of $\calH_1$ are removed; $\calH_i$ is defined similarly for
$i=3,4,\ldots$; see
Fig.~\ref{fig:LowerAlphaHullLayers}. As above, each $\calH_i$ is dual to
$\calU_i$, and thus $Q_i$ is the set of vertices of $\calH_i$. As such, to
compute layers of lower envelopes $\{\calU_i\}$ of $\calA$, it suffices to compute layers of
lower $\alpha$-hulls $\{\calH_i\}$ of $Q$, which is our focus below.

\begin{figure}[t]
    \centering
    \includegraphics[height=0.9in]{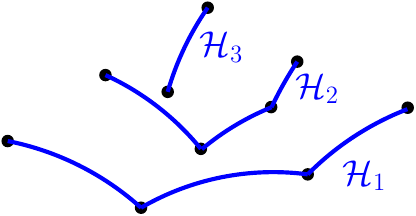}
    \caption{Illustrating lower $\alpha$-hull layers $\{\calH_1,
	\calH_2,\calH_3\}$.}
    \label{fig:LowerAlphaHullLayers}
\end{figure}

We present an algorithm to compute the lower $\alpha$-hull layers $\{\calH_i\}$ in $O(n)$ space and $O(n\log n)$ time.
We follow the scheme of Chazelle's algorithm~\cite{ref:ChazelleOn85} for computing convex hull layers of a set of points in the plane. Our algorithm is actually simpler since cross deletions are not needed in our algorithm.
The main idea is to construct a {\em tree graph} $G$ embedded in the plane such that each edge is a circular arc.
$\calH_1$ can be produced in $O(|\calH_1|)$ time by using $G$. Then, vertices of $\calH_1$
are removed from $G$ and $G$ is updated so that $\calH_2$ can be produced in
$O(|\calH_2|)$ time. Repeating this process until $G$ becomes $\emptyset$ will produce the lower
$\alpha$-hull layers $\{\calH_i\}$. In what follows, we first define the graph
$G$ in Section~\ref{sec:defgraph} and then describe an algorithm to construct it in
Section~\ref{sec:constructgraph}. Finally in Section~\ref{sec:computelayer} we compute lower $\alpha$-hull layers using $G$.

\subsection{Defining the tree graph $\boldsymbol{G}$}
\label{sec:defgraph}


Let $p_1, p_2, ..., p_n$ be the list of the points of $Q$ sorted from left to
right. Let $T$ be a complete binary tree whose leaves store $p_1, p_2, ...,
p_n$ from left to right, respectively. For each node $v$ of $T$, let $Q(v) \subseteq Q$ be the
set of points that are stored at the leaves of the subtree rooted at $v$ and let
$\calA(v)=\calA(Q(v))$. Let
$\calH(v)$ denote the lower $\alpha$-hull of points in $Q(v)$ and $\calU(v)$ the
lower envelope of $\calA(v)$. Hence, $\calH(v)$ and $\calU(v)$ are dual to each
other.

The graph $G$ is defined as follows: Its vertex set is $Q$ and its edge set
consists of arcs of $\calH(v)$ of all nodes $v$ of $T$ (see
Fig.~\ref{fig:GraphG}). As such, each edge of $G$ is a
concave arc.

For any vertex $p$ of the lower $\alpha$-hull $\calH$ of a subset $Q'$ of $Q$, we say that a
circular arc $\gamma$ containing $p$ is {\em tangent} to $\calH$ at $p$ if no point of $Q'$
is contained in the interior of the underlying disk of $\gamma$. Note that $\gamma$ is
tangent to $\calH$ if and only if the two adjacent vertices of $p$ on $\calH$
are outside the underlying disk of $\gamma$.

\begin{figure}[t]
    \centering
    \begin{minipage}[t]{0.48\textwidth}
        \centering
        \includegraphics[height=1.0in]{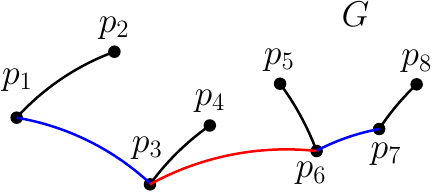}
        \caption{Illustrating the graph $G$ for a set $Q = \{p_1, p_2, ..., p_8\}$ of $8$ points.}
        \label{fig:GraphG}
    \end{minipage}
    \hspace{0.08in}
    \begin{minipage}[t]{0.48\textwidth}
        \centering
        \includegraphics[height=1.0in]{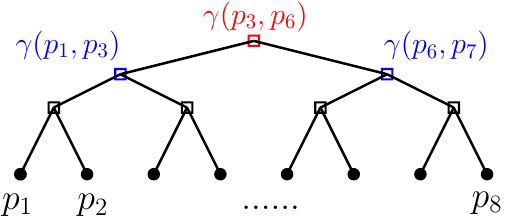}
        \caption{Illustrating $T$ for the example in Fig.~\ref{fig:GraphG}. Internal
		nodes store common tangent arcs, which are edges of $G$.}
        \label{fig:TreeT}
    \end{minipage}
\end{figure}




Consider a node $v\in T$.
Let $u$ and $w$ be $v$'s left and right children, respectively. A concave arc
$\gamma(p_i,p_j)$
connecting a vertex $p_i$ of $\calH(u)$ and a vertex $p_j$ of $\calH(w)$ is
called a {\em common tangent arc} of $\calH(u)$ and $\calH(w)$ if $\gamma(p_i,p_j)$ is tangent
to $\calH(u)$ at $p_i$ and tangent to $\calH(w)$ at $p_j$. By duality,
$\gamma(p_i,p_j)$ corresponds to the intersection $a$ between $\calU(u)$ and $\calU(w)$ (i.e., $a$
is the center of $\gamma(p_i,p_j)$). It has been proved
in Lemma 10 of~\cite{ref:WangCo22} that $\calU(u)$ and $\calU(w)$ have at most one
intersection, and thus $\calH(u)$ and $\calH(w)$ have at most one common
tangent arc. In fact, since all points of $Q$ are contained in a unit disk,
$\calH(u)$ and $\calH(w)$ have exactly one common tangent arc, say, $\gamma(p_i,p_j)$, connecting a
vertex $p_i$ of $\calH(u)$ and a vertex $p_j$ of $\calH(w)$. Then $\calH(v)$
consists of the following three portions in order from left to right: the
portion of $\calH(u)$ between its leftmost vertex and $p_i$, the arc $\gamma(p_i,p_j)$, and
the portion of $\calH(w)$ between $p_j$ and its rightmost vertex. We store
$\gamma(p_i,p_j)$ at $v$, denoted by $\gamma(v)$; see Fig.~\ref{fig:TreeT}.
The common tangent arcs $\gamma(v)$ for all internal nodes $v$ of $T$ form exactly the edge set of $G$.

We store the graph $G$ in an adjacency-list structure as follows. Each vertex $p$ of $G$
is associated with two doubly linked lists
$L_l(p)$ and $L_r(p)$ such that $L_l(p) \cup L_r(p)$ contains all adjacent
vertices of $p$ in $G$, where $L_l(p)$ (resp., $L_r(p)$) stores adjacent vertices
of $p$ that are to the left (resp., right) of $p$. For each adjacent vertex $q$
of $p$, we define the \emph{tangent angle} of the concave arc $\gamma(p,q)$ of $G$
connecting $p$ and $q$ as the acute angle of the tangent ray of $\gamma(p, q)$ at $p$
following the direction toward $q$ with the horizontal line through $p$ (see
Fig.~\ref{fig:AdjacencyList}). Vertices of $L_l(p)$ (resp., $L_r(p)$) are sorted
by the tangent angles of their corresponding arcs. The \emph{bottom edge} of $L_l(p)$
(resp., $L_r(p)$) is defined as the arc with the minimum tangent angle in $L_l(p)$ (resp., $L_r(p)$); see Fig.~\ref{fig:AdjacencyList}.
We add two pointers at $p$ to access the two bottom edges in $L_l(p)$
and $L_r(p)$.

\begin{figure}[t]
    \centering
    \includegraphics[height=1.3in]{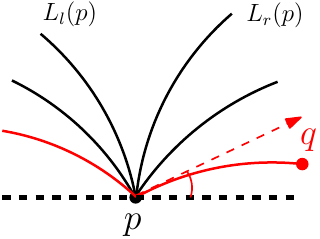}
    \caption{Illustrating the adjacency lists $L_l(p)$ and $L_r(p)$ at $p$.
	The two red arcs are bottom edges. The red dashed segment with arrow is the tangent ray of $\gamma(p,q)$ at $p$ and the tangent angle is shown.}
    \label{fig:AdjacencyList}
\end{figure}

\subsection{Constructing the tree graph $\boldsymbol{G}$}
\label{sec:constructgraph}

The following lemma will be used as a subroutine in our algorithm for constructing $G$.

\begin{lemma}
    \label{lem:CommonArcTangent}
    Given the lower $\alpha$-hull $\calH'$ of a subset $Q'\subseteq Q$ and the
	lower $\alpha$-hull $\calH''$ of another subset $Q''\subseteq Q$ such that $Q'$
	and $Q''$ are separated by a vertical line, the common
	tangent arc of $\calH'$ and $\calH''$ can be computed in $O(|\calH'| + |\calH''|)$ time.
\end{lemma}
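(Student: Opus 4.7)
The plan is to adapt the classical linear-time walk used to find the common tangent of two disjoint convex hulls, exploiting the vertical separation of $Q'$ and $Q''$ together with the duality between lower $\alpha$-hulls and lower envelopes of arcs that the paper has already set up.

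First I will initialize two pointers: let $p$ be the rightmost vertex of $\calH'$ and $q$ be the leftmost vertex of $\calH''$. At each step I consider the concave arc $\gamma(p,q)$ and test its tangency to $\calH'$ at $p$ by checking whether the left neighbor of $p$ on $\calH'$ lies outside the underlying disk of $\gamma(p,q)$, and symmetrically its tangency to $\calH''$ at $q$ by checking the right neighbor of $q$ on $\calH''$. Each test is $O(1)$ given the adjacency-list representation of the hulls. If tangency fails at $p$, I advance $p$ one vertex leftward along $\calH'$; if it fails at $q$, I advance $q$ one vertex rightward along $\calH''$; once both tangency conditions hold I output $\gamma(p,q)$. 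Since neither pointer ever reverses, the total number of steps is $O(|\calH'|+|\calH''|)$, matching the claimed bound.

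The correctness obligation is to prove a monotonicity invariant: if $(p^*,q^*)$ is the (unique) pair of vertices realizing the common tangent, then throughout the walk $p^*$ lies weakly left of $p$ on $\calH'$ and $q^*$ lies weakly right of $q$ on $\calH''$. This is the main obstacle, and I plan to establish it through the dual picture. By the cited Lemma~10 of~\cite{ref:WangCo22}, the $x$-monotone lower envelopes $\calU'=\calU(Q')$ and $\calU''=\calU(Q'')$ cross in exactly one point $a^*$, which is precisely the center of $\gamma(p^*,q^*)$. Under the hull/envelope duality, a vertex of $\calH'$ corresponds to an arc of $\calU'$, and the failure of tangency at $p$ for $\gamma(p,q)$ is equivalent to the arc of $\calU'$ dual to $p$ lying strictly on one designated side of $a^*$ (namely, above $\calU''$ at the relevant $x$-coordinate). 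Consequently, moving $p$ one vertex leftward slides the dual arc one position along $\calU'$ toward $a^*$ but can never overshoot it, so $p$ stays weakly right of $p^*$; the symmetric argument handles $q$.

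Finally, once the invariant is in hand, termination is immediate: if the walk ever reaches the state $p=p^*$ and $q=q^*$ then both tangency tests succeed and the algorithm halts; conversely, the algorithm cannot halt earlier because at any other state at least one tangency condition fails by the characterization above, forcing a further advance. The running time is linear because every iteration either advances $p$ left on $\calH'$ or advances $q$ right on $\calH''$, each of which can happen at most $|\calH'|$ and $|\calH''|$ times respectively, yielding the stated $O(|\calH'|+|\calH''|)$ bound.
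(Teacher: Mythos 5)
Your proposal uses the same two-pointer walk (start at the rightmost vertex of $\calH'$ and the leftmost vertex of $\calH''$, advance the pointers monotonically toward the common tangent, with $O(1)$ local tangency tests) that the paper uses, which it describes as an adaptation of the classical common-tangent walk for two vertically separated lower convex hulls. The paper's own proof is in fact terser — it simply states the alternating advance rule and asserts linear running time — whereas you additionally sketch a monotonicity invariant via the hull/envelope duality and the uniqueness of the crossing point of $\calU'$ and $\calU''$; this is extra justification rather than a different route.
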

\begin{proof}
     Without loss of generality, we assume that $\calH'$ is to the left of
	 $\calH''$. Our goal is to compute a vertex $u\in \calH'$ and a vertex $v\in
	 \calH''$ such that the arc $\gamma(u,v)$ is tangent to both $\calH'$ and
	 $\calH''$. The algorithm is similar to that for computing a common tangent
	 of two lower convex hulls that are separated by a vertical line; we briefly
	 discuss it below.

    Initially we set $u$ to the rightmost vertex of $\calH'$ and $v$ the
	leftmost vertex of $\calH''$. We keep moving $u$ leftwards on $\calH'$ until
	$\gamma(u,v)$ is tangent to $\calH'$ at $u$. Then we check whether $\gamma(u,v)$ is
	tangent to $\calH''$ at $v$. If yes, then we are done.
	Otherwise, we keep moving $v$ rightwards on $\calH''$ until $\gamma(u,v)$ is tangent to
	$\calH''$ at $v$. Next we check whether $\gamma(u,v)$ is tangent to $\calH'$ at
	$u$. If yes, we are done. Otherwise, we move $u$
	leftwards again. We repeat this process and eventually a common tangent arc
	will be found. Clearly, the runtime is $O(|\calH'| + |\calH''|)$.
\end{proof}

With Lemma~\ref{lem:CommonArcTangent}, the next lemma constructs the graph $G$.

\begin{lemma}
    \label{lem:ConstructG}
    The graph $G$ can be constructed in $O(n \log n)$ time and $O(n)$ space.
\end{lemma}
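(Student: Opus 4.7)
The plan is a bottom-up divide-and-conquer on $T$, in the spirit of Chazelle's convex-hull-layers algorithm~\cite{ref:ChazelleOn85}, using Lemma~\ref{lem:CommonArcTangent} as the merge step. First I would sort the points of $Q$ by $x$-coordinate in $O(n\log n)$ time and build the complete binary tree $T$ with the sorted points at the leaves; at each leaf $v$ initialize $\calH(v)$ as a singleton doubly linked list, and for each $p\in Q$ initialize empty adjacency lists $L_l(p)$ and $L_r(p)$.

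Then I would process the internal nodes of $T$ in bottom-up order. For each internal node $v$ with children $u,w$, apply Lemma~\ref{lem:CommonArcTangent} to $\calH(u)$ and $\calH(w)$ to compute the common tangent arc $\gamma(v)=\gamma(p_i,p_j)$ in $O(|\calH(u)|+|\calH(w)|)$ time. Form $\calH(v)$ by splicing the two doubly linked lists: keep the prefix of $\calH(u)$ ending at $p_i$, append $\gamma(v)$, then append the suffix of $\calH(w)$ starting at $p_j$, and unlink the remaining vertices from the hull list. The monotonicity of lower $\alpha$-hull vertex sets justifies this in-place unlinking: if $p\in Q(v)$ is a vertex of some ancestor hull $\calH(v')$, then a witness unit disk through $p$ with center above $\ell$ avoids $Q(v')\supseteq Q(v)$, so it also certifies $p$ as a vertex of $\calH(v)$; contrapositively, any point discarded when forming $\calH(v)$ is absent from every ancestor hull and will never be needed again. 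Finally, append the edge $\gamma(v)$ to $L_r(p_i)$ and to $L_l(p_j)$.

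After all internal nodes have been processed, I would sort each list $L_l(p)$ and $L_r(p)$ by tangent angle and install the bottom-edge pointer at the head of each sorted list. For the time analysis, at each of the $O(\log n)$ levels of $T$ the sets $Q(v)$ across nodes at that level form pairwise-disjoint subsets of $Q$, so $\sum_v |\calH(v)|\le n$ at each level; this gives $O(n)$ work per level by Lemma~\ref{lem:CommonArcTangent} and $O(n\log n)$ overall. Since $G$ has exactly $n-1$ edges (one per internal node of $T$), we have $\sum_{p\in Q}(|L_l(p)|+|L_r(p)|)=2(n-1)$, so the final sorting phase costs $O(n\log n)$. For space, the active hulls collectively occupy $O(n)$ linked-list nodes at every moment, $G$ has $O(n)$ edges, and $T$ itself uses $O(n)$ space.

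The main subtlety is the bookkeeping at each vertex, which carries two kinds of pointers: transient hull-list pointers (overwritten on every merge) and permanent $G$-adjacency pointers in $L_l, L_r$. These must be stored in disjoint fields so that unlinking a vertex from its current hull during a merge does not disturb the $G$-edges already created for it at lower levels; with this separation in place, every operation beyond the tangent-finding cost is $O(1)$ per step, and the claimed $O(n\log n)$-time, $O(n)$-space construction of $G$ follows.
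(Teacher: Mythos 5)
Your proof is correct and follows essentially the same bottom-up merge over $T$ using Lemma~\ref{lem:CommonArcTangent} as the paper. The only notable difference is bookkeeping: the paper observes that each newly created common tangent arc is automatically the bottom edge of $L_r(p_i)$ and $L_l(p_j)$ (so the adjacency lists remain angle-sorted with no post-processing, and the hulls $\calH(v)$ can be traversed implicitly via bottom-edge pointers rather than via separately spliced linked lists), whereas you append arbitrarily and do a final sorting pass, which also fits the $O(n\log n)$ budget.
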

\begin{proof}
    As the vertex set of $G$ is $Q$, our goal is to construct all edges and
	store them in the adjacent-list structure, i.e., for each point $p\in Q$,
	construct the lists $L_l(p)$ and $L_r(p)$. To this end,
	our algorithm proceeds following the tree $T$ in a bottom-up manner.

For each vertex $v\in T$, we define $G(v)$ as the graph $G$ but only on the
points of $Q(v)$.  As such, $G(v)$ is $G$ if $v$ is the root and
$G(v)=\emptyset$ if $v$ is a leaf.

Consider an internal node $v$ of $T$, with $u$ and $w$ as its left and right children,
respectively.
We assume that $G(u)$ and $G(w)$ have been computed, i.e., for each point $p$ of
$Q(u)$ (resp., $Q(v)$), we have two corresponding lists $L_l(p)$ and $L_r(p)$ with respect to $G(u)$
(resp., $G(v)$). Next, we construct $G(v)$ using $G(u)$ and $G(w)$.

Observe that $G(v)$ is the union of $G(u)$, $G(w)$, and the
common tangent arc of $\calH(u)$ and $\calH(w)$, denoted by $\gamma(p,q)$, with $p\in \calH(u)$ and $q\in
\calH(w)$.  Since $Q(u)$ and $Q(w)$ are separated by a vertical line,
we can compute the arc $\gamma(p,q)$ in $O(|Q(u)|+|Q(w)|)$ time by Lemma~\ref{lem:CommonArcTangent}.
Note that we can traverse on $\calH(u)$ (resp., $\calH(w)$) in constant time per vertex using the
bottom edge pointers of vertices in $G(u)$ (resp., $G(w)$).
Observe that the arc $\gamma(p,q)$ must be the bottom edge in $L_r(p)$ as well as $L_l(q)$
in $G(v)$. As such, we simply add $q$ to the bottom of the current list $L_r(p)$
and add $p$ to the bottom of the current list $L_l(q)$, and also update the
bottom edge pointers of $p$ and $q$ accordingly.
In this way, $G(v)$ can be computed in $O(|Q(u)| + |Q(w)|)$ time, or in
$O(|Q(v)|)$ time as $Q(v)=Q(u)\cup Q(w)$. Hence, the total time for constructing the
graph $G$ is $O(n \log n)$ and the space complexity is $O(n)$.
\end{proof}

\subsection{Computing lower $\boldsymbol{\alpha}$-hull layers}
\label{sec:computelayer}

We next use the graph $G$ to compute the lower $\alpha$-hull layers $\{\calH_i\}$. 

First of all, $\calH_1$ can be obtained in $O(|\calH_1|)$ time by using bottom edge pointers of $G$, say, starting from the leftmost point of $Q$, which is the leftmost vertex of $\calH_1$, since arcs of $\calH_1$ must be bottom edges of vertices of $\calH_1$. Then, we remove vertices of $\calH_1$ (along with their incident edges) from $G$. Using the updated $G$, the second layer lower $\alpha$-hull $\calH_2$ can be computed in $O(|\calH_2|)$ time similarly. We repeat this process until $G$ becomes empty. The following lemma shows that removing a vertex from $G$ can be done in $O(\log n)$ amortized time.

\begin{lemma}
    \label{lem:ComputingLowerAlphaHullLayers}
    All point deletions in the entire algorithm can be done in $O(n\log n)$ time and $O(n)$ space.
\end{lemma}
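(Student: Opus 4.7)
The plan is to adapt Chazelle's scheme for convex hull layers~\cite{ref:ChazelleOn85} to the lower $\alpha$-hull setting. To delete a vertex $p$, I would walk from the leaf storing $p$ up to the root of $T$. At each internal ancestor $v$ with children $u$ and $w$, the deletion of $p$ may invalidate the stored tangent $\gamma(v)$, either because $p$ was one of its endpoints or because $p$ lay on $\calH(u)$ or $\calH(w)$. In such cases the common tangent of the updated hulls is recomputed and the adjacency lists are maintained accordingly: remove the old arc from its endpoints' lists, insert the new arc at the bottoms of the two lists it should join, and refresh the bottom-edge pointers. Because all points of $Q$ lie in a single unit disk, $\calH(u)$ and $\calH(w)$ remain connected chains throughout, so the common tangent arc of Lemma~\ref{lem:CommonArcTangent} is uniquely defined at every stage.

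The crux is that, rather than invoking Lemma~\ref{lem:CommonArcTangent} from scratch at each ancestor (which would cost $\Theta(|\calH(u)|+|\calH(w)|)$ per node and be too slow in total), one resumes the tangent search from the previous endpoints of $\gamma(v)$, or from $p$'s hull neighbors on $\calH(u)$ or $\calH(w)$ if $p$ itself was an endpoint. Because deletions only remove vertices from the hulls, the two tangent pointers advance monotonically across $\calH(u)$ and $\calH(w)$; each step of the search either certifies the current arc as the new tangent or discards a hull arc that is thereafter excluded from the search state at $v$.

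The amortized bound then follows from a charging argument in the spirit of Chazelle~\cite{ref:ChazelleOn85}: each pointer movement at $v$ is charged to a hull arc of $\calH(u)$ or $\calH(w)$ that is permanently removed from $v$'s tangent search state, and each such arc is charged only $O(1)$ times over the lifetime of $v$. Since the total number of hull arcs that can ever appear in $\calH(u)$ or $\calH(w)$ across the life of $v$ is $O(|Q(v)|)$, the total search work at $v$ is $O(|Q(v)|)$. Summing over all nodes yields $\sum_v |Q(v)| = O(n\log n)$, since each of the $n$ points belongs to the subtrees of only $O(\log n)$ ancestors. The $O(\log n)$ tree-traversal cost per deletion adds another $O(n\log n)$ overall, so all $n$ deletions cost $O(n\log n)$ time in total. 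For space, at any moment $G$ stores one arc per internal node of $T$, and the adjacency lists have total size proportional to the number of edges in $G$, yielding $O(n)$ space throughout.

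The main obstacle is making the monotonicity claim and the charging argument rigorous for concave arcs. For line segments in Chazelle's original setting, the fact that tangent endpoints can only move in one direction as points are deleted follows from elementary properties of convex chains; for radius-$1$ concave arcs one must verify the analogous property using the local tangency criterion — namely, $\gamma(p_i,p_j)$ is tangent to $\calH(u)$ at $p_i$ iff the two hull arcs incident to $p_i$ lie outside the underlying disk of $\gamma(p_i,p_j)$. The unit-disk containment of $Q$, which rules out $\calH(u)$ or $\calH(w)$ ever splitting into multiple components under deletions, is precisely what is needed to sustain the monotonicity and hence the charging scheme.
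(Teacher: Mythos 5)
Your outline follows the same high-level scheme as the paper (walk from the leaf storing $p$ to the root, maintain tangents in the adjacency lists, keep the bottom-edge pointers), but the crux of the lemma --- the amortization of the tangent recomputation --- is handled differently, and your version leaves the key step unproven in a way that you yourself flag as "the main obstacle."

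Concretely, you claim that the two tangent pointers at a node $v$ "advance monotonically across $\calH(u)$ and $\calH(w)$" and that each pointer step "discards a hull arc that is thereafter excluded from the search state at $v$," allowing you to charge all work at $v$ to $O(|Q(v)|)$ arcs. This monotonicity does not hold in the form stated: when $p$ is deleted from one child hull, the new tangent endpoint on \emph{that} child can lie on either side of $p$ (the paper's "pulling up" process explicitly sweeps one probe $a$ leftward and another probe $b$ rightward on the same hull $\calH(v_{i-1})$, and only the probe $c$ on the sibling hull is one-directional). Because the endpoint on the deleted vertex's own hull can be reached from either direction, an arc that your pointer passed over in one deletion is not "permanently excluded" and can be revisited in a later deletion at $v$; your per-node $O(|Q(v)|)$ bound is therefore not justified. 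The paper sidesteps this by a different charging: it classifies each pointer step ("wrap") as either a \emph{promotion} (the wrapped vertex newly becomes a vertex of $\calH(v_i)$, chargeable to the vertex-level pair, giving $O(n\log n)$ in total since a vertex is promoted at most once per ancestor) or a \emph{confirmation} (wrap by $b$), and then proves that every confirmation is immediately preceded by a promotion one level below, so confirmations are also $O(n\log n)$. That promotion/confirmation distinction is exactly what replaces the monotonicity you were hoping for, and it is the part your proposal is missing. You would also need to specify the actual tangent-update procedure (the paper's angle-based "pulling up" of $p$ together with the observation that the new hull between $a$ and $b$ lies above the wedge $\gamma(a,p)\cup\gamma(p,b)$) so that the number of probe steps is in fact proportional to the number of wraps; "resume the search from the previous endpoints" alone does not give this.
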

\begin{proof}
\label{proof:lem-ComputingLowerAlphaHullLayers}
Suppose we want to delete a point $p$ from $G$ and $p$ is a vertex of the lower $\alpha$-hull of $G$.
The deletion of $p$ will result in the removal of all arcs of $G$ connecting $p$.
In addition, new arcs may be computed as well.

Let $\{v_1, v_2, ..., v_g\}$ be the list of the nodes of $T$ encountered when traversing from the leaf node storing the point $p$ to the root of $T$. The deletion of $p$ may affect lower $\alpha$-hulls $\calH(v_i)$, for $i = 1, 2, ..., g$. We will update $G(v_i)$ (and thus $\calH(v_i)$) for $i = 1, 2, ..., g$ in this order.

Consider a node $v_i$ with $2\leq i\leq g$. Note that $v_{i-1}$ is a child of
$v_i$. Let $v$ refer to the child of $v_i$ other than $v_{i - 1}$. Depending on
whether $p$ is an endpoint of the arc $\gamma(v_i)$ stored at $v_i$, i.e., the common tangent
arc of $\calH(v_{i - 1})$ and $\calH(v)$, there are two cases.
If $p$ is not an endpoint of $\gamma(v_i)$, then removing $p$ does not affect
$\gamma(v_i)$ as well as $\gamma(v_j)$ for any $i+1\leq j \leq g$.
Hence, in this case, we are done with deleting $p$.
In the following, we focus on the case
where $p$ is an endpoint of $\gamma(v_i)$ (see
Fig.~\ref{fig:NewCommonArcTangent}). Below we only discuss the case where $p$ is the left
endpoint of $\gamma(v_i)$ since the other case is symmetric. Let $c$ be the other
endpoint of $\gamma(v_i)$ and to be more informative we use $\gamma(p,c)$ to refer to $\gamma(v_i)$.

    \begin{figure}[t]
    \centering
    \includegraphics[height=1.2in]{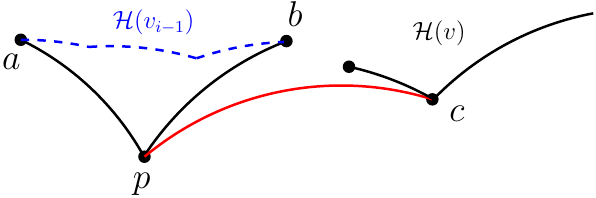}
    \caption{$p$ is an endpoint of $\gamma(v_i)$, i.e., the common tangent arc (the red arc) of the new $\calH(v_{i - 1})$ and $\calH(v)$.}
    \label{fig:NewCommonArcTangent}
    \end{figure}

Note that each arc of $L_l(p)\cup L_r(p)$ is $\gamma(v_j)$ for some $j\in [1,g]$.
Let $\gamma(a, p)$ be the last arc that has been processed due to the deletion of $p$ with $p$ as the right endpoint of the arc, and $\gamma(p,b)$ the last arc that has been processed with $p$ as the left endpoint of the arc (see Fig.~\ref{fig:NewCommonArcTangent}).
We assume that both $a$ and $b$ are well-defined (otherwise the algorithm is similar but simpler). Note that $\gamma(a, p)$ and $\gamma(p,b)$ are actually arcs of the old $\calH(v_{i-1})$ before $v_{i-1}$ is processed.
Since we process nodes of $T$ in a bottom-up matter, $a$ and $b$ can be accessed
from $L_l(p)$ and $L_r(p)$ in constant time.
Observe that the portion of the new lower
$\alpha$-hull $\calH(v_{i - 1})$ between $a$ and $b$ must lie above the ``wedge''
formed by $\gamma(a, p)$ and $\gamma(p, b)$ (see Fig.~\ref{fig:NewCommonArcTangent}).
Our goal is to compute a new common tangent arc $\gamma(s, t)$ of the new $\calH(v_{i-1})$ and $\calH(v)$, with
$s \in \calH(v_{i-1})$ and $t \in \calH(v)$, as follows.

    \begin{figure}[t]
    \centering
    \begin{minipage}[t]{0.48\textwidth}
        \centering
        \includegraphics[height=1.3in]{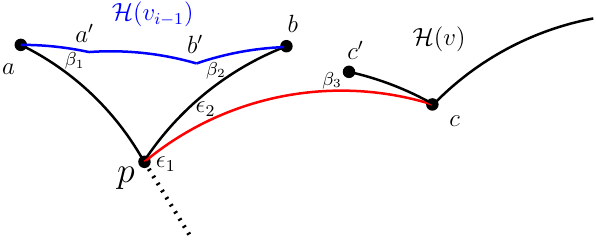}
        \caption{Illustrating points $a'$, $b'$ and $c'$, angles $\{\beta_1, \beta_2, \beta_3\}$ and $\{\epsilon_1, \epsilon_2\}$.}
        \label{fig:FiveAngles}
    \end{minipage}
    \hspace{0.08in}
    \begin{minipage}[t]{0.48\textwidth}
        \centering
        \includegraphics[height=1.4in]{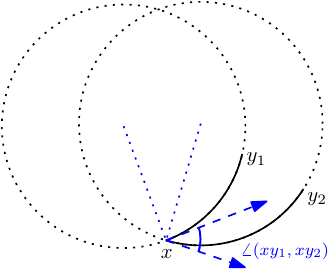}
        \caption{Illustrating angle $\angle(x y_1, x y_2)$ of arcs $\gamma(x, y_1)$ and $\gamma(x, y_2)$. Blue rays with arrows are tangent rays of $\gamma(x,y_1)$ and $\gamma(x,y_2)$ at $x$.}
        \label{fig:AngleOfArcs}
    \end{minipage}
\end{figure}

    Observe that $s$ must lie between $a$ and $b$ on $\calH(v_{i - 1})$ and $t$ is to the left of $c$ on $\calH(v)$. We define $a'$ as the right adjacent vertex of $a$ and $b'$ as the left adjacent vertex of $b$
    on $\calH(v_{i-1})$  (see Fig.~\ref{fig:FiveAngles}). Let $c'$ be the left adjacent vertex of $c$ on $\calH(v)$. The degenerate case in which $a' = b'$, or $a' = b$ and $b' = a$ can be handled trivially. Since $p$ is a vertex of the current lower $\alpha$-hull of $G$, arcs $\gamma(a, p)$, $\gamma(b, p)$, and $\gamma(c, p)$ are bottom edges of $L_r(a)$, $L_l(b)$, and $L_l(c)$, respectively. We can also access $a'$, $b'$, and $c'$ 
    in constant time.

To describe our algorithm for computing $\gamma(s,t)$, we define {\em the angle} $\angle(x y_1, x y_2)$ of two arcs $\gamma(x, y_1)$ and $\gamma(x, y_2)$ as follows.
For each $j=1,2$, define $\rho_j$ as the ray from $x$ toward $y_j$ and tangent to the underlying disk of $\gamma(x, y_j)$ at $x$.
$\angle(x y_1, x y_2)$ is defined as the angle between $\rho_1$ and $\rho_2$ (see Fig.~\ref{fig:AngleOfArcs}). Our algorithm considers the following five angles, $\beta_1 = \angle(a a', a p)$, $\beta_2 = \angle(b b', b p)$, $\beta_3 = \angle(c c', c p)$, $\epsilon_1 = \angle(p a'', p c)$, and $\epsilon_2 = \angle(p b, p c)$, where $a''$ is a point on the extension of arc $\gamma(a, p)$ (see Fig.~\ref{fig:FiveAngles}).

    \begin{figure}[t]
    \centering
    \begin{minipage}[t]{0.48\textwidth}
        \centering
        \includegraphics[height=1.3in]{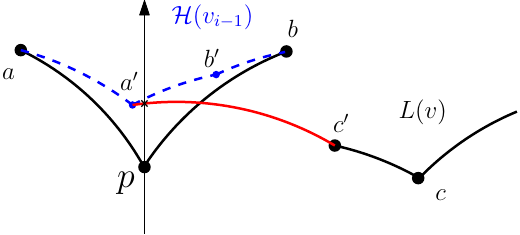}
        \caption{Illustrating the case of pulling up $p$ in which $\epsilon_1$ becomes null.}
        \label{fig:Epsilon1Null}
    \end{minipage}
    \hspace{0.08in}
    \begin{minipage}[t]{0.48\textwidth}
        \centering
        \includegraphics[height=1.3in]{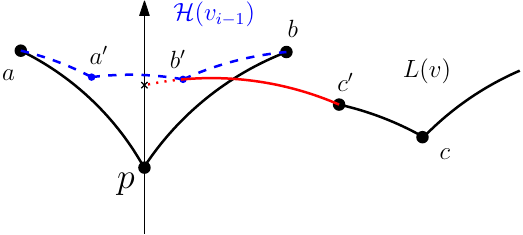}
        \caption{Illustrating the case of pulling up $p$ in which $\epsilon_2$ becomes null.}
        \label{fig:Epsilon2Null}
    \end{minipage}
\end{figure}

    Our algorithm for computing $\gamma(s,t)$ can be viewed as a process of ``pulling
	up'' $p$ vertically until $p$ disappears in the new lower $\alpha$-hull
	$\calH(v_i)$. This happens when one of $\{\epsilon_1, \epsilon_2\}$ becomes
	null (see Fig.~\ref{fig:Epsilon1Null} and~\ref{fig:Epsilon2Null}). If one of the angles of $\{\beta_1, \beta_2, \beta_3\}$ becomes
	null, then we will update $x$ and $x'$, $x \in \{a, b, c\}$ accordingly to
	obtain new $\beta$-angles. More specifically,  if $\angle(xx', xp)$, $x \in
	\{a, b, c\}$ becomes null, then we reset $x$ to $x'$, and reset $x'$ to the
	left (if $x \in \{b, c\}$) or right (if $x \in \{a\}$) neighbor of the old
	$x'$. For the purpose of time analysis, we say that the old $x$ is \emph{wrapped}.
    We can avoid calculating those five angles by computing the intersections $a^*$, $b^*$, and $c^*$ of the vertical line through $p$ with extensions of arcs $\gamma(a, a')$, $\gamma(b, b')$ and $\gamma(c, c')$, respectively (see Fig.~\ref{fig:Deletion}). The lowest point of $a^*$, $b^*$, and $c^*$ is the next candidate location of $p$.
    Before moving $p$ to the next location, we check whether $\{a, p, c\}$ or $\{p, b, c\}$ will be on the same unit circle during the movement of $p$, and a positive answer implies that either $\epsilon_1$ or $\epsilon_2$ is null.
    We iterate this process until one of $\epsilon_1$ and $\epsilon_2$ is null.
	Once the common tangent arc $\gamma(s,t)$ is computed, we proceed on processing
	$v_{i+1}$.
    Note that $p$ is in $L_l(c)$ and is actually the bottom edge since $p$ is a vertex of the lower $\alpha$-hull of $G$; as such, we can remove $p$ from $L_l(c)$ and reset its bottom edge in constant time.

    \begin{figure}[t]
        \centering
        \includegraphics[width=3.7in]{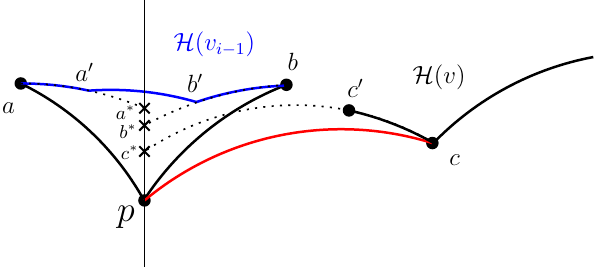}
        \caption{Illustrating the definitions of $a^*$, $b^*$, and $c^*$.}
        \label{fig:Deletion}
    \end{figure}


   The running time of the algorithm is linear in the number of wrapped
   vertices on $\calH(v_{i-1})$ and $\calH(v)$.
   If a vertex $u$ is wrapped by $a$ or $c$, then $u$ becomes a vertex on the
   new $\calH(v_i)$. We call this
   wrapping step a \emph{promotion} (because $u$ used to be a vertex of
   $\calH(v_{i-1})$ and not a vertex of $\calH(v_i)$,
   but now is ``promoted'' to be a vertex of $\calH(v_i)$).
   Since the height of $T$ is $O(\log n)$, the total number of
   promotions for deleting all points $p\in G$ is
   bounded by $O(n \log n)$. On the other hand, if a vertex $u$ is
   wrapped by $b$, we call it a \emph{confirmation}. A critical observation is
   that the previous wrapping on $u$ during the deletion of $p$ must be a
   promotion (i.e., during processing $v_{i-1}$, $u$ was wrapped as a promotion). Consequently, any confirmation must be immediately preceded by a
   promotion.
   As such, the total number of confirmations for deleting all points $p\in G$
   is no more than that of promotions, which is $O(n\log n)$.
   Therefore, the overall time of the algorithm for deleting all points $p\in G$
   is bounded by $O(n \log n)$. The space complexity of the algorithm is $O(n)$.
\end{proof}

With Lemma~\ref{lem:ComputingLowerAlphaHullLayers}, Theorem~\ref{theorem:ComputingLayers} is proved.

\section{Concluding remarks and the dynamic unit-disk range emptiness queries}
\label{sec:ConcludingRemarks}

In this paper, we presented a dynamic data structure for unit-disk range reporting queries. Our query algorithm achieves optimal complexity, improving the previous result. While we mostly follow the previous algorithmic scheme, one main ingredient is a shallow cutting algorithm for circular arcs that may be interesting in its own right. We also proposed a static data structure whose complexities are optimal and match those of the previously best result; our approach is much simpler than the previous work. Our techniques may be extended to solve other related problems about unit disks. In the following, We demonstrate one exemplary problem: the unit-disk range emptiness queries and its dynamic version.

\paragraph{The static problem.}
Let $P$ be a set of $n$ points in the plane. The problem is to build a data structure to answer the following {\em unit-disk range emptiness queries}: Given a unit disk $D$, determine whether $D$ contains a point of $P$, and if so, return such a point. By computing the Voronoi diagram of $P$ and then constructing a point location data structure on the diagram~\cite{ref:EdelsbrunnerOp86,ref:KirkpatrickOp83,ref:SarnakPl86}, one can build a data structure of $O(n)$ space in $O(n\log n)$ time with $O(\log n)$ query time. Using our techniques in this paper, we can provide alternative (and slightly simpler) solution with asymptotically the same complexities as follows.

We follow our method for the original unit-disk range reporting queries but only maintain the lower envelope of $\calA$ (i.e., no need to compute all layers of lower envelopes and thus the algorithm becomes much simpler). The preprocessing still takes $O(n \log n)$ time and $O(n)$ space. 

Given a query unit disk $D_q$ centered at a point $q$, we first check whether $q$ is in a cell of $\calC$ by Lemma~\ref{lem:grid}(2). If no cell of $\calC$ contains $q$, then we return null. Otherwise, let $C$ be the cell of $\calC$ that contains $q$. If $P(C)\neq \emptyset$, then all points of $P(C)$ are in $D_q$ and thus we return an arbitrary point of $P(C)$. If $P(C)= \emptyset$, then for each $C'\in N(C)$, we solve a line-separable problem for $P(C')$.
More specifically, suppose that $C'$ and $C$ are separated by a horizontal line $\ell$ with $C'$ above $\ell$. Then, 
a point of $P(C')$ is in $D_q$ if and only if the center $q$ is above the lower envelope of $\calA$ (defined by $P(C')$ with respect to $\ell$), which can be determined by performing binary search with $q$ on the lower envelope. The total query time is thus $O(\log n)$. 

\paragraph{The dynamic problem.}
In the dynamic problem, point insertions and deletions are allowed for $P$. One can solve the problem by using a dynamic nearest neighbor search data structure (i.e., given a query disk $D$, using a nearest neighbor query we find a point $p\in P$ nearest to the center of $D$; $D$ contains a point of $P$ if and only if $p\in D$). The current best dynamic nearest neighbor search data structure is given by Chan~\cite{ref:ChanDy20}; with that, we can obtain a data structure of $O(n)$ space in $O(n\log n)$ time that supports $O(\log^2 n)$ amortized insertion time, $O(\log^4n)$ amortized deletion time, and $O(\log^2 n)$ time for unit-disk range emptiness queries. In the following, using our techniques, we propose a better result. 

As in Section~\ref{sec:dynamicreport} for the dynamic reporting problem, we can use Lemma~\ref{lem:dynamiccell} to reduce the problem to the following line-separated problem. 

\begin{problem}(Dynamic line-separable unit-disk range emptiness queries)
\label{problem:dynamic-LS-rangeempty}
Given a set $Q$ of $m$ points above a horizontal line $\ell$, 
build a data structure to maintain $Q$ to support the following operations. 
(1) Insertion: insert a point to $Q$; (2) deletion: delete a point from $Q$; (3) unit-disk range emptiness query: given a unit disk $D$ whose center is below $\ell$, determine whether $D$ contains a point of Q, and if so, return such a point. 
\end{problem}

To solve the line-separable problem, we define the set $\calA$ of arcs using $Q$ in the same way as before. Let $D_q$ be a unit disk with center $q$ below $\ell$. Note that $D_q\cap Q\neq \emptyset$ if and only if $q$ is above the lower envelope of $\calA$. Further, $q$ is above the lower envelope of $\calA$ if and only if the lowest arc of $\calA$ intersecting $\ell_q$ is below $q$, where $\ell_q$ is the vertical line through $q$. Therefore, our problem reduces to the following {\em vertical line queries} subject to arcs insertions and deletions for $\calA$: Given a vertical line $\ell^*$, find the lowest arc of $\calA$ that intersects $\ell^*$. 

To solve the dynamic vertical line query problem among arcs of $\calA$, we apply Chan's framework~\cite{ref:ChanDy01} for the dynamic vertical line query problem among a set of lines (in the dual plane, a vertical line query is dual to the following problem: Finding an extreme point on the convex hull of all dual points along a query direction). To this end, we need the following two components: (1) a dynamic data structure of $O(m)$ space with $O(\log m)$ query time and $m^{O(1)}$ update time; (2) a deletion-only data structure of $O(m)$ space that can be built in $O(m\log m)$ time, supporting $O(\log m)$ query time and 
$O(\log m)$ amortized deletion time. For (1), we can use our static data structure as discussed above, i.e., whenever there is an update, we simply rebuild the data structure. For (2), Wang and Zhao~\cite{ref:WangCo22} already provided such a data structure. Using these two components, we can apply exactly the same framework of Chan~\cite{ref:ChanDy01}. Indeed, the framework still works for the arcs of $\calA$ because every arc is $x$-monotone. With the framework and the above two components, we can obtain a data structure of $O(m)$ space that allows insertions and deletions of arcs of $\calA$ in $O(\log^{1+\epsilon}m)$ amortized update time and answers a vertical line query in $O(\log m)$ time, where $m$ is the size of the current set $\calA$.
Consequently, we can solve Problem~\ref{problem:dynamic-LS-rangeempty} with the same time complexities. Finally, with Lemma~\ref{lem:dynamiccell} and our problem reduction, we can have a data structure of $O(n)$ space that allows insertions and deletions of points of $P$ in $O(\log^{1+\epsilon}n)$ amortized time and answers a unit-disk range emptiness query in $O(\log n)$ time, where $n$ is the size of the current set $P$.



\end{document}